\newtheorem{example}{Example}
\newcommand{\sg}{{\sG}}
\newcommand{\sx}{{\mathscr{X}}}
\begin{document}
\title[Multisample Tests Based on Optimal Matchings]{Distribution-Free Multisample Test Based on Optimal Matching with Applications to Single Cell Genomics}

\author[]{Divyansh Agarwal\textsuperscript{$\dagger$}}\thanks{\textsuperscript{$\dagger$}The first two authors contributed equally to the paper.}
\address{Graduate Group in Genomics and Computational Biology, Medical Scientist Training Program, Perelman School of Medicine, University of Pennsylvania} 
\email{Divyansh.Agarwal@pennmedicine.upenn.edu}

\author[Agarwal, Mukherjee, Bhattacharya, Zhang]{Somabha Mukherjee\textsuperscript{$\dagger$} \and Bhaswar B.\ Bhattacharya \and Nancy R. \ Zhang}
\address{Department of Statistics, The Wharton School, University of Pennsylvania}
\email{\{somabha,bhaswar,nzh\}@wharton.upenn.edu}

\date{\today}


\spacing{1.125}
\maketitle

\begin{abstract}
In this paper we propose a nonparametric graphical test based on optimal matching, for assessing the equality of multiple unknown multivariate probability distributions. Our procedure pools the data from the different classes to create a graph based on the minimum non-bipartite matching, and then utilizes the number of edges connecting data points from different classes to examine the closeness between the distributions. The proposed test is exactly distribution-free (the null distribution does not depend on the distribution of the data) and can be efficiently applied to multivariate as well as non-Euclidean data, whenever the inter-point distances are well-defined. We show that the test is universally consistent, and prove a distributional limit theorem for the test statistic under general alternatives. Through simulation studies, we demonstrate its superior performance against other common and well-known multisample tests.  In scenarios where our test suggests distributional differences across classes, we also propose an approach for identifying which class or group contributes to this overall difference. The method is applied to single cell transcriptomics data obtained from the peripheral blood, cancer tissue, and tumor-adjacent normal tissue of human subjects with hepatocellular carcinoma and non-small-cell lung cancer. Our method unveils patterns in how biochemical metabolic pathways are altered across immune cells in a cancer setting, depending on the tissue location. All of the methods described herein are implemented in the \textsf{R} package \textsf{multicross}.
\end{abstract}

\section{Introduction}
\label{sec:intro}

Given $K$ multivariate probability distributions $F_1,F_2,\ldots, F_K$, the {\it $K$-sample problem} is to test the hypotheses
\begin{align}\label{eq:Ksample}
H_0: F_1 = \cdots = F_K \quad \textrm{versus} \quad  H_1: F_s \neq F_t, \quad \textrm{for some} ~1\leq s < t \leq K.
\end{align}
This is a classical problem in statistical inference which has been extensively studied in the parametric regime, where the distributions are assumed to have certain, low-dimensional functional forms. Parametric methods, however, often perform poorly for misspecified models and for high-dimensional problems, especially when the number of nuisance parameters is large. This necessitates the development of non-parametric methods, which make no distributional assumptions on the data, but are still powerful for a wide class of alternatives. Moreover, with the recent accumulation of high-dimensional and non-Euclidean data arising from genomics, social networks, bioinformatics, and finance, it is imperative to develop non-parametric methods which are computationally efficient, robust and applicable to the various kinds of modern data types. In this paper, we consider non-parametric tests for the $K$-sample problem, which are  exactly {\it distribution-free}, that is, tests for which the null distribution does not depend on the underlying (unknown) distribution of the data. This property is particularly desirable, because such tests can be directly calibrated under the null irrespective of the distribution or type of the data, making them readily applicable in a wide range of problems. 

Nonparametric testing of two multivariate distributions has a long history, which has spawned renewed interest in light of modern applications. For univariate data,  there are several celebrated distribution-free two-sample tests such as the Kolmogorov-Smirnov maximum deviation test \cite{smirnov}, the Wald-Wolfowitz runs test \cite{ww}, and the Mann-Whitney rank-sum test \cite{mann_whitney} (see the textbook \cite{gc} for more on these tests). Efforts to generalize these methods to higher dimensions go back to Weiss \cite{weiss} and Bickel \cite{bickel}. Friedman and Rafsky \cite{fr} proposed the first computationally efficient 2-sample test, which applies to high-dimensional data. The Friedman-Rafsky test, which can be viewed as a generalization of the univariate runs test,  computes the Euclidean minimal spanning tree (MST)\footnote{Given a finite set $S \subset \R^d$, the {\it minimum spanning tree} (MST) of $S$ is a connected graph  with vertex-set $S$ and no cycles, which has the minimum weight, where the weight of a graph is the sum of the distances of its edges.}  of the pooled sample, and rejects the null if the number of edges with endpoints in different samples is small.  Variants of this test based on nearest-neighbor graphs were considered by Henze \cite{hn} and Schilling \cite{sch}. Recently, Chen and Friedman \cite{chfr} suggested novel modifications of this method for high-dimensional and object data, and Chen et al. \cite{hcII} proposed new and powerful tests to deal with the issue of unequal sample sizes. Asymptotic properties of these tests can be studied in a general asymptotic framework introduced in \cite{bh}. Another computationally efficient 2-sample test based on the concept of multivariate ranks, defined using optimal transport, was recently proposed by Ghosal and Sen \cite{ghoshal_sen}.

Even though many of these methods can be effectively used in high-dimensional problems, none of them inherit the exact distribution-free property of the univariate tests.  A breakthrough in this direction was made relatively recently by Rosenbaum \cite{rosen}, who proposed the {\it crossmatch test}, a multivariate two-sample test based on the minimum non-bipartite matching (Definition \ref{mdm}) of the pooled sample. This test is exactly distribution-free in finite samples and computationally efficient (the test statistic can be computed in time which is polynomial in both the number of samples and the dimension of the data), making it particular attractive for high-dimensional applications. This test has also found  many interesting applications in causal inference, especially in assessing balance between covariates in a treatment group and a matched control group \cite{hcds,crossmatch_sensitivity,covariate_balance,mbs}. More recently, Biswas et al. \cite{biswas} proposed another two-sample test based on Hamiltonian cycles, which is also distribution-free in finite samples. However, unlike the  minimum non-bipartite matching, computing the minimum weight Hamiltonian path is NP-hard, making this test computationally prohibitive beyond small sample sizes. 

Here, we study nonparametric distribution-free tests for the general $K$-sample problem \eqref{eq:Ksample}. As expected, this problem is well-understood in dimension 1.  Mood \cite{mood} considered the $K$-sample generalization of the runs test, and Kruskal and Wallis \cite{ks,kswallis} derived the $K$-sample analogue of the Mann-Whitney test, both of which are distribution-free. Our interest is in devising efficient distribution-free methods, which are powerful for a wide range of alternatives in arbitrary dimensions. We are motivated by applications  in high-throughput biological experiments, where the multisample problem often arises. For instance, it is not uncommon to examine the distribution of a set of high-dimensional features across various models or conditions. Recently, single cell technologies have made it possible to profile the expression of tens of thousands of genes across thousands of cells. The cells might belong to different subtypes, where  the $K$ types are characterized based on some functional or morphological  parameter. In this setting, determining whether the expression of a set of genes, corresponding to a particular biochemical pathway or function, belong to the same or different distribution across the $K$ groups can yield insights into cellular processes and the underlying biological system.


Even though the high-dimensional multisample problem manifests itself in various modern applications, methodological progress to address this problem has been limited.  A nearest-neighbor based test for testing the equality of multiple distributions with categorical components was considered in \cite{nettleton_banerjee}.  Recently, Petrie \cite{AdamPet} considered the direct generalization of the Friedman-Rafsky and crossmatch tests, which counts the number of edges across the different samples in the geometric graph (MST or matching) constructed on the pooled sample. However, this test tends to lose power with increases in dimension and/or the number of classes, and the mathematical properties of this test have not been investigated. 


We propose a new graph-based multisample test based on optimal matchings. To compute the test statistic, we construct the minimum non-bipartite matching of the pooled sample, compute the $K\times K$ matrix of cross-counts (the $(s, t)$-th element of this matrix is the  number of edges in the matching from sample $s$ to sample $t$), then combine these counts using their Mahalanobis distance. We show that this test is exactly distribution-free under the null (Proposition \ref{ppn:H0_distribution}),  derive its asymptotic null distribution (Theorem \ref{nullasmS}), and demonstrate its consistency under general alternatives (Theorem \ref{consmain}). We also prove a conditional central limit theorem (CLT) of the entire vector of cross-counts under the alternative (Section \ref{condasm1}).  More precisely, we show that the cross-counts, centered by their means, conditional on the pooled sample and scaled appropriately, converge in distribution to a multivariate normal distribution under general alternatives. As a consequence, we obtain a distributional limit theorem for the proposed test statistic under the alternative, which to the best of our knowledge, is a new result even for the 2-sample case (where the proposed test statistic is equivalent to Rosenbaum's 2-sample crossmatch test). Therefore, this result adds to our theoretical understanding of the crossmatch and general matching-based tests, which includes the proposed method and the baseline generalizations considered in \cite{AdamPet}. In Section \ref{simulsec} we compare the power of our test with other existing tests for various alternatives. Our experiments demonstrate that the proposed method outperforms other relevant parametric and non-parametric tests in a diverse set of simulation settings.

Lastly, we demonstrate the broad potential of our method on real data examples from the single cell biology domain, where we use our test to investigate whether the activities of biological pathways across $K$ closely-related cell types are conserved. Single cell data is sparse count data with many zeros, so it is difficult to transform it to conform to normal distribution assumptions. Moreover, gene expression has complex correlation structure and thus any parametric model would require many nuisance parameters. Therefore, our novel crossmatch-based method, being nonparametric and distribution free, is especially fitting. We show the utility of our proposed test in the comparison of the distribution of gene sets (as a proxy for examining biochemical pathways) across cell populations using single cell RNA-sequencing (scRNA-seq) data (Section \ref{sec:applications}). Our method successfully recapitulates known biology by detecting differential distribution of metabolic pathways which are known to be disparate across T cell subtypes. We also discovered pathways such as purine metabolism that differed across T cell subtypes, irrespective of whether the sequenced cells were obtained from blood, liver cancer tissue, lung cancer tissue, or tumor-adjacent normal tissue. Moreover, we ascertain that our method can be used to narrow down gene sets that are differentially distributed across cell types, thereby facilitating its use in single cell clustering and visualization algorithms. Altogether, as illustrated through real case studies, our proposed method combines statistical innovation to answer practical, scientifically relevant questions. 


\section{Multisample distribution-free tests based on optimal matching}\label{sec1}

Recall the $K$-sample hypothesis \eqref{eq:Ksample}, and assume that for every $s \in [K]:=\{1, 2, \ldots, K\}$, we are given $N_s$ i.i.d. observations $\bm X^{(s)} := \{X_1^{(s)},X_2^{(s)},\ldots,X_{N_s}^{(s)} \}$ from the distribution $F_s$.  In this section we describe a novel distribution-free, computationally efficient $K$-sample test, based on  the minimum non-bipartite matching, which can be readily used for data in arbitrary metric spaces, such as high dimensional data, functional data, and object data. 

We begin with the formal definition of a minimum non-bipartite matching. For simplicity, we will assume throughout that the total number of samples $N=\sum_{s=1}^K N_s:=2I$ is even; otherwise, we can add or delete a sample point to make it even. 

\begin{defn}\label{mdm}
Given a finite $S\subset \R^d$ and a symmetric distance matrix $D:=((d(a, b)))_{a\ne b \in S}$, a {\it non-bipartite matching} of $S$ is a partition of the elements of $S$ into $I=\frac{N}{2}$ non-overlapping sets of size 2 each, that is, $$S=S_1 \bigcup S_2 \bigcup \cdots \bigcup S_{I}, \quad \text{where } |S_a|=2 \text{ and } S_a\cap S_b=\emptyset,  \text{ for } 1 \leq a \ne b \leq I.$$ The {\it weight} of a non-bipartite matching is the sum of the distances between the $I$ matched pairs. A {\it minimum non-bipartite matching} of $S$ is a matching which has the minimum weight over all matchings of $S$. (In case of multiple minimizers any one of them can be chosen.) The {\it minimum non-bipartite matching graph} $\sG(S)=(V(\sG(S)), E(\sG(S)))$ is the graph with vertex set $V(\sG(S))=S$ and edge set $E(\sG(S))=\{S_1, S_2, \ldots, S_{I} \}$, consisting of the $I$ disjoint pairs. 
\end{defn}


The 2-sample {\it cross-match} (CM) test proposed by Rosenbaum \cite{rosen} rejects the null hypothesis in \eqref{eq:Ksample} for small values of
\begin{align}\label{eq:R2N}
R_{2, N}:=\sum_{i=1}^{N_1}\sum_{j=1}^{N_2}\bm 1 \{(X_i^{(1)}, X_j^{(2)} ) \in E(\sG(\mathscr{X})) \},
\end{align}
where $\mathscr{X} := \{\bm X^{(1)},\bm X^{(2)} \}$ is the pooled sample. Note that the statistic $R_{2, N}$ counts the number of matched edges in the pooled sample  with one end point in  sample 1 and the other endpoint in sample 2 (the ``cross-matches''), which is expected to be small when the two distributions are  different. 

\begin{figure}[h]
\begin{center}
\includegraphics[width=4.5in]{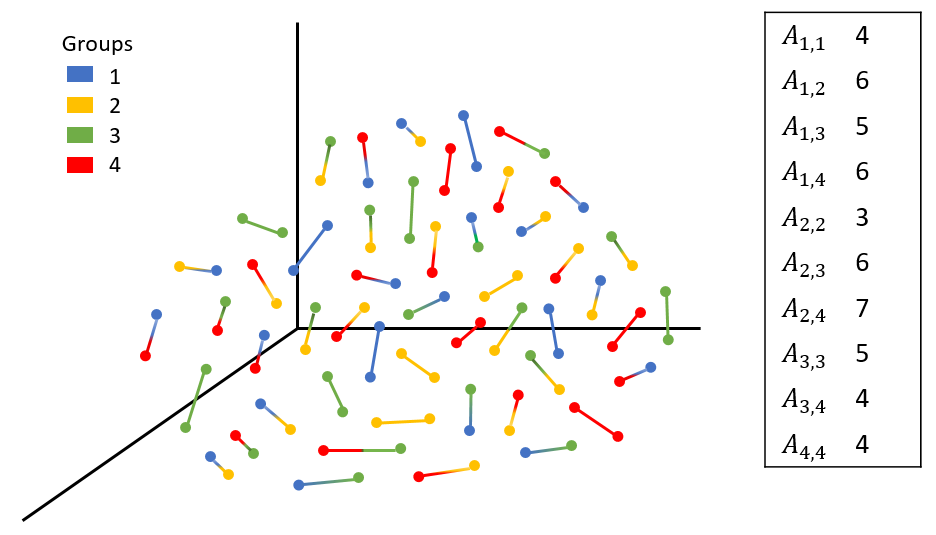}
\end{center}
\caption{\small{Illustration of a minimum non-bipartite matching for $100$ points in 3-dimensions with $4$ classes, and the different cross/pure counts.}} 
\label{fig:illustratet}
\vspace{-0.1in}
\end{figure}

Here, we consider two generalizations of the CM statistic when there are more than 2 samples, based on the minimum non-bipartite matching of the pooled sample. In this case, denoting the pooled sample by $\mathscr{X} := \{\bm X^{(1)},\bm X^{(2)},\ldots,\bm X^{(K)} \}$, we define, for $1 \leq s \ne t \leq K$,  the $(s, t)$-\textit{cross count} as the number of matched edges in the pooled sample with one endpoint in sample $s$ and the other end-point in sample $t$, which is denoted by 
\begin{align}\label{eq:st_count}
a_{st}(\sg(\sx)) :=   \sum_{i=1}^{N_s}\sum_{j=1}^{N_t}\bm 1 \{(X_i^{(s)}, X_j^{(t)} ) \in E(\sG(\mathscr{X})) \}. 
\end{align}
For each $s \in [K]$, we also define the $(s, s)$-\textit{pure count} as the number of matched edges in the pooled sample with both endpoints in sample $s$, which is denoted by 
\begin{align}\label{eq:ss_count}
a_{ss}(\sg(\sx)) :=  \frac{1}{2} \sum_{i=1}^{N_s}\sum_{j=1}^{N_s}\bm 1 \{(X_i^{(s)}, X_j^{(s)} ) \in E(\sG(\mathscr{X})) \}. 
\end{align}
The matrix of cross/pure counts ${\bm A}_N(\sg(\sx)) = \left(a_{st}(\sg(\sx))\right)_{1\leq s,t\leq K}$ will be referred to as the {\it count matrix}. (Hereafter, we will drop the dependence on $\sG(\sX)$ from $\bm A_N(\sG(\sX))$ and its elements $a_{st}(\sg(\sx)$, whenever it is clear from the context.) Figure \ref{fig:illustratet} shows the cross/pure counts for a sample of 100 points in 3-dimensions with 4 classes.


We show below in Proposition \ref{ppn:H0_distribution}, that the joint distribution of the elements of the count matrix is exactly distribution-free under the null. Therefore, we can construct distribution-free tests for \eqref{eq:Ksample} by considering real-valued functions of the cross-matrix, as follows:

\begin{itemize}

\item  The \textit{multisample crossmatch} (MCM)  test rejects the null in  \eqref{eq:Ksample} for small values of  the statistic 
\begin{align}\label{eq:RKN}
R_{K, N}:= \sum_{1\leq s<t\leq K} a_{st}(\sG(\sX)). 
\end{align}
This direct generalization of the 2-sample cross-match statistic (recall \eqref{eq:R2N}), which counts the total number of cross edges in the matching constructed using the pooled sample, was considered in \cite{AdamPet}. Here, we derive its asymptotic properties, and use it as a baseline for empirical comparisons. 

\item There are many natural multivariate alternatives where the MCM test described above performs poorly, especially when the dimension is large and the number of groups is big. 
To circumvent this issue, we propose a new test statistic based on the Mahalanobis distance of the observed cross-counts, which rejects the null for large values of 
\begin{align}\label{eq:SKN}
S_{K, N}:= \left(\vec{\bm A}_N - \mathbb{E}_{H_0} \vec{\bm A}_N\right)^\top \Cov_{H_0}^{-1}(\vec{\bm A}_N)\left(\vec{\bm A}_N - \mathbb{E}_{H_0} \vec{\bm A}_N\right),
\end{align}
where $\vec{\bm A}_N$ is the vector of length ${K \choose 2}$ corresponding to the cross-counts (the upper-triangular part of $\bm A_N$),\footnote{More formally, $\vec{\bm A}_N:= \left(a_{12}, \ldots, a_{1K}, a_{23}, \ldots, a_{2K}, \ldots, a_{K-2,K-1}, a_{K-2,K}, a_{K-1,K}\right)^\top$, the vector obtained by concatenating the rows of $\bm A_N$ in the upper triangular part.} and $\mathbb{E}_{H_0}(\vec{\bm A}_N)$ and $\mathrm{Cov}_{H_0}(\vec{\bm A}_N)$ denote the mean and the covariance matrix of $\vec{\bm A}_N$ under the null hypothesis, respectively (exact formulas are given below in Proposition \ref{ppn:mean_var} and the invertibility of $\mathrm{Cov}_{H_0}(\vec{\bm A}_N)$ is proved in Lemma \ref{invert1}). We refer to this test as the \textit{multisample Mahalanobis crossmatch} (MMCM) test.\footnote{When $K=2$ (the two-sample problem), the tests based on $R_{2, N}$ and $S_{2, N}$ are equivalent: In this case, the vector $\vec{\bm A}_N$ has only 1 element which is the number of cross-matches $a_{11}$, and \eqref{eq:SKN} simplifies to $S_{2, N}=\frac{(R_{2, N}-\E_{H_0}(R_{2, N}))^2}{\Var_{H_0}(R_{2, N})}$,
which is the square of the standardized CT statistic \eqref{eq:R2N}.} Note that adjusting by the sample covariance matrix brings the cross-counts in the same scale, which makes $S_{K, N}$ a more appropriate measure of the centrality of the empirical cross-counts, leading to significant power improvements when $K$ becomes large.

\end{itemize}

\subsection{Exact Null Distribution}

The following proposition shows that the  joint distribution of the elements of the count matrix is distribution-free under the null, that is, it does not depend on the unknown distribution $F_1=\cdots=F_K$.

\begin{ppn}\label{ppn:H0_distribution} Let $\bm A_N=((a_{st}))_{1\leq s, t \leq K}$ be as defined in \eqref{eq:st_count} and  \eqref{eq:ss_count}. Then 
\begin{equation}\label{nulld} 
\mathbb{P}_{H_0}\left({\bm A}_N = \bm b \Big| \sx\right) = \frac{1}{\binom{N}{N_1,\ldots,N_K}} \cdot \frac{2^{\sum_{1\leq s<t\leq K}b_{st}} I!}{\prod_{1\leq s \leq t\leq K} b_{st}!}, \quad \text{for} \quad \bm b =((b_{st}))_{1\leq s, t \leq K} \in\sB,
\end{equation}
where $\sB$ is the set of all symmetric $K\times K$ matrices $\bm{b} = ((b_{st}))$ with non-negative integer entries, satisfying $2 b_{ss} + \sum_{t \neq s} b_{st} = N_s$, for all $s \in [K]$. As a consequence, the statistics $R_{K, N}$ and $S_{K, N}$ defined above, are distribution-free under $H_0$. 
\end{ppn}

\begin{proof} Note that 
$$2 a_{ss} + \sum_{t \neq s} a_{st} = N_s, \quad \text{for each } s \in [K],$$
since all edges in the graph $\sG(\sX)$ are disjoint,  and each $(s,s)$-edge has both endpoints in sample $s$ and each $(s, t)$-edge, where $s \ne t$, has one of its endpoints in sample $s$. Therefore, the distribution of the cross-count matrix $\bm A_N$ is supported on the set $\sB$ defined above. 

Now, given $\bm{b} \in \sB$ and the pooled sample $\sx$, there are $$\frac{2^{\sum_{1\leq s<t\leq K}b_{st}} I!}{\prod_{1\leq s \leq t\leq K} b_{st}!}$$ ways of forming the classes $\bm X^{(1)},\ldots,\bm X^{(K)}$, such that $a_{st} = b_{st}$ for all $s,t \in [K]$. This comes from  first assigning the $I$ matched pairs such that there are $b_{st}$ pairs corresponding to the  $(s, t)$-counts, for $1 \leq s \leq t \leq K$, in $\frac{I!}{\prod_{1\leq s \leq t\leq K} b_{st}!}$ ways, and then, for each of the $b_{st}$ cross-matched pairs, assigning either one of the end points $s$ or $t$ in $2^{b_{st}}$ ways, for $1 \leq s < t \leq K$. Since, the random vector $(X_1^{(1)},\ldots, X_{N_1}^{(1)},\ldots, X_1^{(K)},\ldots, X_{N_K}^{(K)})$ is exchangeable under $H_0$, each of these classifications has probability $\binom{N}{N_1,\ldots,N_K}^{-1}$. Hence, \eqref{nulld} follows.  

Note that the RHS of \eqref{nulld} does not depend on $\sx$, which implies $\mathbb{P}_{H_0}({\bm A}_N = \bm b | \sx)= \mathbb{P}_{H_0}({\bm A}_N = \bm b)$. Therefore,  the statistics $R_{K, N}$ and $S_{K, N}$, which are functions of the matrix $\bm A_N$, are distribution-free under $H_0$.  
\end{proof}

The mean and covariances of this distribution, which are required for computing the MMCM statistic, can be easily derived:

\begin{ppn}\label{ppn:mean_var}
Let $\vec{\bm A}_N$ be as in \eqref{eq:SKN}. The entries of the mean vector of $\mathbb{E}_{H_0} \vec{\bm A}_N$ are given by: 
\begin{align}\label{eq:expectationH0}
\mathbb{E}_{H_0} (a_{st}) =   \left\{
\begin{array}{lll}
\frac{N_s N_t}{N-1}& \textrm{if} & s < t, \\
\frac{N_s(N_s-1)}{2(N-1)} & \textrm{if}& s= t.
\end{array} 
\right.
\end{align}
The entries of the covariance matrix $\mathrm{Cov}_{H_0}(\vec{\bm A}_N)$ are as follows: \begin{itemize}
\item  If $1\leq s_1 \ne s_2 \leq K$, $\mathrm{Var}_{H_0}(a_{s_1 s_2})   = \frac{N_{s_1} N_{s_2}(N_{s_1}-1)(N_{s_2}-1)}{(N-1)(N-3)} + \frac{N_{s_1} N_{s_2}}{N-1}\left(1-\frac{N_{s_1}N_{s_2}}{N-1}\right).$ \\ 

\item If $1\leq s_1 \ne s_2 \ne s_3 \leq K$, $\mathrm{Cov}_{H_0}(a_{s_1 s_2},a_{s_1 s_3})    =   \frac{N_{s_1}(N_{s_1}-1)N_{s_2} N_{s_3}}{(N-1)(N-3)} - \frac{N_{s_1}^2 N_{s_2} N_{s_3}}{(N-1)^2}$. \\ 

\item If $1\leq s_1 \ne s_2 \ne s_3 \ne s_4 \leq K$, $\mathrm{Cov}_{H_0}(a_{s_1 s_2}, a_{s_3 s_4}) = \frac{2N_{s_1} N_{s_2} N_{s_3} N_{s_4}}{(N-1)^2(N-3)}$. 
\end{itemize}
\end{ppn}

The proof of the proposition is given in Appendix \ref{sec:pfmean_var}. It follows by a direct combinatorial analysis and observing that, under the permutation null distribution, all possible ${N \choose {N_1, N_2, \ldots, N_K}}$ relabelings of the data are equally likely.

\subsection{Asymptotic Null Distribution}

In theory, the exact cutoff for the MMCM test can be obtained using the quantiles of the distribution in \eqref{nulld}. Another alternative is to use the exchangeability of the data under $H_0$, and perform a permutation test. However, both these approaches are computationally cumbersome when the sample size increases. In this case, it is more convenient to use rejection regions based on the asymptotic null distribution. This is derived in the theorem in the usual limiting regime where $N \rightarrow \infty$ such that 
\begin{align}\label{eq:Nlimit}
\left(\frac{N_1}{N}, \frac{N_2}{N}, \ldots, \frac{N_K}{N}\right) \rightarrow (p_1, p_2, \ldots,p_K) \in (0,1)^K,
\end{align}
where $\sum_{s=1}^K p_s=1$. 

\begin{thm}\label{nullasmS} Under the null $H_0$,
\begin{equation}\label{ant}
\Cov_{H_0}^{-\frac{1}{2}}(\vec{\bm A}_N)\left(\vec{\bm A}_N - \mathbb{E}_{H_0}\vec{\bm A}_N\right) \xrightarrow{D}  N_{\binom{K}{2}}(0, \mathrm I). 
\end{equation} 
This implies, under $H_0$, the MMCM statistic $S_{K, N} \xrightarrow{D} \chi_{\binom{K}{2}}^2$, as $N \rightarrow \infty$, and the test with rejection region 
\begin{align}\label{eq:SKNregion}
\left\{S_{K, N} >  \chi_{\binom{K}{2}, 1-\alpha}^2\right\},
\end{align}
is asymptotically level $\alpha$.\footnote{For $p \geq 1$, $N_p(\bm \mu, \Sigma)$ denotes the multivariate normal distribution with mean $\bm \mu \in \R^p$ and covariance matrix $\Sigma \in \R^{p \times p}$. Moreover, $\chi^2_n$ denotes the chi-squared distribution with $n$ degrees of freedom and $\chi_{n, 1-\alpha}^2$ denotes the $(1-\alpha)$-th quantile of the $\chi^2_n$ distribution.}
\end{thm}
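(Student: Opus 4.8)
The plan is to exploit the distribution-free property established in Proposition \ref{ppn:H0_distribution}: conditionally on the pooled sample $\sx$, the matching graph $\sG(\sx)$ is fixed, and under $H_0$ the only randomness is a uniformly random assignment of the labels $1,\ldots,K$ (with prescribed multiplicities $N_1,\ldots,N_K$) to the $N=2I$ vertices. Since $\sG(\sx)$ is a \emph{perfect matching}, its $I$ edges are vertex-disjoint, which is the structural feature that makes the CLT tractable. Writing $\xi_v^{(s)} = \ind\{v \text{ has label } s\}$, each cross-count decomposes as a sum over the $I$ edges of a local edge-type indicator, so $\vec{\bm A}_N$ is a sum over edges of i.i.d.-like contributions, coupled only through the global constraints $\sum_v \xi_v^{(s)} = N_s$.

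The central step is a multivariate CLT for this edge sum. First I would replace the uniform labeling (sampling without replacement) by an auxiliary \emph{i.i.d.\ coloring}, in which each vertex independently receives label $s$ with probability $p_s = N_s/N$. Under this model the edges contribute independent vectors, so the joint vector built from the edge-type counts together with the label-count vector $M := \big(\sum_v \xi_v^{(s)}\big)_{s\in[K]}$ is a sum of $I$ independent, uniformly bounded random vectors organized by edge (the edges partition the vertices), and the multivariate Lindeberg--Lyapunov CLT gives joint asymptotic normality of $((\vec{\bm A}_N - \se \vec{\bm A}_N),(M - \se M))/\sqrt{N}$ in the regime \eqref{eq:Nlimit}. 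The true null law is precisely this i.i.d.\ model \emph{conditioned} on $M = (N_1,\ldots,N_K) = \se M$, so I would recover the null distribution of $\vec{\bm A}_N$ by conditioning. Passing from the unconditional joint CLT to the conditional law of $\vec{\bm A}_N$ given an exact lattice value of $M$ is the delicate point: it requires a local limit theorem for $M$ to legitimize conditioning on a probability-$\Theta(N^{-(K-1)/2})$ event, after which the Gaussian conditioning formula yields asymptotic normality of $\vec{\bm A}_N$. Because the exact covariance $\Cov_{H_0}(\vec{\bm A}_N)$ is already known from Proposition \ref{ppn:mean_var}, the limiting covariance is forced to be its rescaled limit and need not be recomputed from the conditioning formula.

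Granting asymptotic normality, the remaining steps are routine and use a self-normalization shortcut. By construction the vector $Y_N := \Cov_{H_0}^{-1/2}(\vec{\bm A}_N)(\vec{\bm A}_N - \se_{H_0}\vec{\bm A}_N)$ has mean zero and covariance exactly $\mathrm I$ for every $N$; since its components are normalized sums of bounded contributions and hence have uniformly bounded moments, convergence in distribution to a Gaussian forces the limit to be $N_{\binom{K}{2}}(0,\mathrm I)$, which is \eqref{ant}. Invertibility of $\Cov_{H_0}(\vec{\bm A}_N)$ and of its scaled limit (via the formulas of Proposition \ref{ppn:mean_var} together with Lemma \ref{invert1} and \eqref{eq:Nlimit}) guarantees that $\Cov_{H_0}^{-1/2}$ is well defined and that Slutsky's theorem applies when passing between the exact finite-sample standardization and the limiting one. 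The continuous mapping theorem applied to the squared Euclidean norm then gives $S_{K,N} = \| Y_N \|^2 \xrightarrow{D} \|Z\|^2 = \chi^2_{\binom{K}{2}}$ for $Z \sim N_{\binom{K}{2}}(0,\mathrm I)$, and the asymptotic level statement follows since the limiting $\chi^2$ law has a continuous, strictly increasing CDF at the cutoff $\chi^2_{\binom{K}{2},\,1-\alpha}$.

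The main obstacle I anticipate is the conditioning step of the second paragraph: controlling the error in passing from the i.i.d.-coloring CLT to the law conditioned on the exact label counts, which demands a (lattice) local central limit theorem for $M$ and uniform integrability to transfer moments across the limit. Note that the self-normalization in the third paragraph substantially simplifies the endgame, since it reduces the requirement to asymptotic normality toward \emph{some} Gaussian rather than an explicit verification that the conditional covariance matches the closed form in Proposition \ref{ppn:mean_var}. An alternative route that sidesteps the explicit conditioning altogether would be to obtain \eqref{ant} directly from the conditional CLT under general alternatives developed in Section \ref{condasm1}, specialized to $H_0$; the combinatorial argument above is nonetheless the more self-contained derivation given the exact null structure already in hand.
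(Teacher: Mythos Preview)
Your proposal is correct and follows essentially the same route as the paper: pass to an i.i.d.\ coloring model, obtain a joint multivariate CLT for the edge-type counts together with the label totals (using that the $I$ matching edges are vertex-disjoint), and then condition on the label totals to recover the null law of $\vec{\bm A}_N$. The only point worth noting is that the conditioning step you flag as the main obstacle is handled in the paper by invoking Holst's conditional CLT \cite[Theorem~2]{holst} rather than by an explicit lattice local limit theorem, and the endgame is the same uniform-integrability/Slutsky argument you describe (your self-normalization observation is a mild repackaging of it).
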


The proof of the theorem is given in Appendix \ref{sec:pfdistribution_A}. Note that, by Proposition \ref{ppn:mean_var}, the elements of 
$\mathrm{Cov}_{H_0}(\vec{\bm A}_N)/N$ has a non-degenerate limit, that is, there is a ${K\choose 2} \times {K \choose 2}$ matrix $\bm \Gamma$, such that $\mathrm{Cov}_{H_0}(\vec{\bm A}_N)/N \rightarrow \bm \Gamma$. An application of the Slutsky's theorem and \eqref{ant} then implies,  
$$\frac{\vec{\bm A}_N-\E_{H_0} \vec{\bm A}_N}{\sqrt N} \dto N_{{K \choose 2}}(0, \bm \Gamma),$$
and the test with rejection region 
\begin{align}\label{eq:SKNregion}
\left\{ \frac{\left(\vec{\bm A}_N - \mathbb{E}_{H_0} \vec{\bm A}_N\right)^\top \bm \Gamma^{-1} \left(\vec{\bm A}_N - \mathbb{E}_{H_0} \vec{\bm A}_N\right)}{N}>  \chi_{\binom{K}{2}, 1-\alpha}^2\right\},
\end{align}
is also asymptotically level $\alpha$.

Note that Theorem \ref{nullasmS} gives the asymptotic normality of entire cross-count vector, which implies the normality of any linear function of the cross-count vector, in particular, the MCM statistic \eqref{eq:RKN} as well.  More formally, \eqref{ant} implies, under $H_0$, 
\begin{equation}\label{eq:nullR2N}
Q_{K, N}:=\frac{R_{K, N}-\mathbb{E}_{H_0}(R_{K, N})}{\sqrt{\mathrm{Var}_{H_0}(R_{K, N})}} \dto N(0, 1),
\end{equation}
where, by Proposition \ref{ppn:mean_var}, $\mathbb{E}_{H_0}(R_{K, N}) = \frac{\sum_{s<t}N_sN_t}{N-1}$ and 
$$\mathrm{Var}_{H_0}(R_{K, N}) =  \frac{G_1}{N-1} \left(1- \frac{G_1}{N-1} \right) + \frac{G_1^2 - G_1 - 2G_2}{(N-1)(N-3)},$$ 
with $G_1 := \sum_{1 \leq s<t \leq K} N_s N_t$ and $G_2:= \frac{1}{2}\sum_{s=1}^K N_s(N-N_s)(N-N_s-1)$.  Therefore, the asymptotically level $\alpha$ test has rejection region $\{Q_{K, N} <  z_{\alpha}\}$, where $z_{\alpha}$ is the $\alpha$-th quantile of the standard normal distribution.

\begin{remark} (Non-Euclidean data) As the conditional (permutation) null distribution is same as the unconditional distribution (by Proposition \ref{ppn:H0_distribution}), it follows from the proof of Theorem \ref{nullasmS} that the asymptotic null distributions for $S_{K, N}$ and $Q_{K, N}$ obtained above hold verbatim for non-Euclidean spaces, as long as a similarity measure on the sample space can be defined. This is one of the highlights of tests based on inter-point distances, which makes them readily applicable for combinatorial and object data \cite{chfr,hcII}. Maa et al. \cite{interpoint}  provided theoretical motivations for using tests based on inter-point distances, by showing that, under mild conditions, two multivariate distributions are equivalent if and only if the distributions of inter-point distances within each distribution and between the distributions are equivalent. 
\end{remark}

\subsection{Consistency}
\label{sec:consistency}

In this section, we show the consistency of the tests discussed above. To this end,  assume that the $K$ distributions $F_1, F_2,\ldots, F_K$ have densities $f_1, f_2,\ldots, f_K$, respectively, with respect to the Lebesgue measure on $\mathbb{R}^d$. A test is said to be {\it universally consistent} for the hypothesis \eqref{eq:Ksample} if the power of the test converges to 1, whenever there exists $1 \leq s \ne t \leq K$ such that $f_s \ne f_t$ on a set of positive Lebesgue measure. Recently, Arias-Castro and Pelletier \cite{castropell} showed that the 2-sample CM test is universally consistent. Their arguments can be easily adapted to show the universal consistency of the MCM and MMCM tests:

\begin{thm}\label{consmain} In the usual limiting regime \eqref{eq:Nlimit}, $\frac{1}{N} \bm A_N \rightarrow \bm H=((h_{st}))_{1 \leq s, t \leq N}$ almost surely, where 
\begin{align}\label{eq:HPab}
h_{st}=   \left\{
\begin{array}{ll}
p_s p_t \int_{\mathbb{R}^d}\frac{f_s(z) f_t(z)}{\sum_{a=1}^K p_a f_a(z)}~\mathrm dz& \textrm{if}~s\neq t, \\\\
\frac{p_s^2}{2}\int_{\R^d} \frac{f_s^2(z)}{\sum_{a=1}^K p_a f_a(z)}~\mathrm dz & \textrm{otherwise}. 
\end{array} 
\right.
\end{align}
This implies that the MCM test with rejection $\{Q_{K, N} <  z_{\alpha}\}$ and the MMCM test with rejection region \eqref{eq:SKNregion} are universally consistent. 
\end{thm}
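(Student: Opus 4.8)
The plan is to obtain Theorem \ref{consmain} in two stages: first establish the almost-sure limit $\frac1N\bm A_N\to\bm H$, and then read off the consistency of both tests from this limit together with the null centering and scaling supplied by Proposition \ref{ppn:mean_var}. Essentially all the work is in the first stage; the consistency statements are soft consequences. For the almost-sure convergence I would \emph{not} redo the Euclidean-functional analysis from scratch, but instead reduce the $K$-sample cross-counts to two-sample cross-counts, to which the result of Arias-Castro and Pelletier \cite{castropell} applies as a black box.

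The key observation enabling the reduction is that the minimum non-bipartite matching graph $\sG$ depends only on the pooled point locations, not on the class labels. Thus, for any bipartition $A\sqcup A^c=[K]$, I can merge the classes in $A$ into one group and those in $A^c$ into the other without changing the matching. Writing $a_A:=\sum_{s\in A,\,t\in A^c}a_{st}$ for the resulting two-sample cross-count, the merged groups have limiting proportions $P_A=\sum_{s\in A}p_s$ and $P_{A^c}$ and limiting densities $f_A=P_A^{-1}\sum_{s\in A}p_sf_s$ and $f_{A^c}$. Applying the two-sample almost-sure limit of \cite{castropell} and using $P_Af_A+P_{A^c}f_{A^c}=\sum_{a}p_af_a=:g$ gives
\[
\frac1N\,a_A\ \to\ P_AP_{A^c}\int_{\R^d}\frac{f_A f_{A^c}}{g}\,\mathrm dz\ =\ \sum_{s\in A,\,t\in A^c}h_{st}
\]
almost surely, with $h_{st}$ as in \eqref{eq:HPab}. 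To isolate a single off-diagonal entry I use inclusion–exclusion over the three bipartitions $A=\{s\},\{t\},\{s,t\}$: one checks $a_{\{s\}}+a_{\{t\}}-a_{\{s,t\}}=2a_{st}$, whence $\frac1N a_{st}\to h_{st}$ almost surely. The diagonal (pure) counts then follow from the deterministic identity $2a_{ss}+\sum_{t\ne s}a_{st}=N_s$, which yields $\frac1N a_{ss}\to\frac12\bigl(p_s-\sum_{t\ne s}h_{st}\bigr)=h_{ss}$ after the routine algebraic check that this equals the diagonal entry in \eqref{eq:HPab}.

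Given the limit $\frac1N\bm A_N\to\bm H$, consistency follows cleanly. For the MCM statistic $R_{K,N}=\sum_{s<t}a_{st}$, I first record that $\sum_{s\le t}h_{st}=\frac12$ \emph{always} (integrate $g$), while by Cauchy–Schwarz $\int f_s^2/g\ge1$ with equality iff $f_s=g$; hence $\sum_s h_{ss}=\frac12\sum_s p_s^2\int f_s^2/g>\frac12\sum_s p_s^2$ unless all $f_s$ coincide, so the off-diagonal mass strictly decreases: $\sum_{s<t}h_{st}<\sum_{s<t}p_sp_t$ under the alternative. Since $\frac1N\E_{H_0}(R_{K,N})\to\sum_{s<t}p_sp_t$ and $\Var_{H_0}(R_{K,N})/N\to\bm1^\top\bm\Gamma\bm1>0$ (using $\Cov_{H_0}(\vec{\bm A}_N)/N\to\bm\Gamma$ from the discussion after Theorem \ref{nullasmS}), the numerator of $Q_{K,N}$ behaves like $N\bigl(\sum_{s<t}h_{st}-\sum_{s<t}p_sp_t\bigr)\to-\infty$ while the denominator is of order $\sqrt N$, so $Q_{K,N}\to-\infty$ almost surely and the rejection probability of $\{Q_{K,N}<z_\alpha\}$ tends to $1$. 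For the MMCM test, $\frac1N\bigl(\vec{\bm A}_N-\E_{H_0}\vec{\bm A}_N\bigr)\to\bm\delta:=\vec{\bm h}-\vec{\bm h}_0\ne\bm0$ (the vectors differ because their coordinate sums already differ), and with $\Cov_{H_0}(\vec{\bm A}_N)=N\bm\Gamma+o(N)$, $\bm\Gamma$ positive definite (invertible by Lemma \ref{invert1}), one gets $S_{K,N}=N\,\bm\delta^\top\bm\Gamma^{-1}\bm\delta\,(1+o(1))\to\infty$ almost surely, so the power of \eqref{eq:SKNregion} tends to $1$.

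The main obstacle is making the reduction in the second paragraph fully rigorous, i.e.\ confirming that \cite{castropell} truly applies after merging. The subtlety is that a merged group is not a single i.i.d.\ sample but a superposition of i.i.d.\ samples from distinct $F_s$; I would address this by noting that the pooled point process still has the requisite limiting intensity $N P_A f_A$ on $A$-points, that the merged empirical measures converge to $P_A f_A$ and $P_{A^c}f_{A^c}$, and that the argument of \cite{castropell} depends only on these intensities—edges of the minimum matching are almost surely of vanishing length, so each cross-count is governed by the \emph{local} label composition and is insensitive to whether the within-group counts are fixed or multinomial. Once this point is granted, the two remaining ingredients (the simplification of the merged integral to $\sum_{s\in A,t\in A^c}h_{st}$ and the identity $\frac12(p_s-\sum_{t\ne s}h_{st})=h_{ss}$) are purely computational.
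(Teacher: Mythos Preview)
Your consistency arguments in the third paragraph (Cauchy--Schwarz on $\int f_s^2/g\ge 1$, the strict inequality under the alternative, and the divergence of $Q_{K,N}$ and $S_{K,N}$) coincide with the paper's proof (its Lemma \ref{componentwise} and the paragraphs following it). So for the second stage there is no daylight between your approach and the paper's.

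For the first stage (the almost-sure limit $\frac1N\bm A_N\to\bm H$) you take a genuinely different route. The paper does not reduce to the two-sample theorem of \cite{castropell} via merging; instead it invokes \cite[Proposition~1]{castropell} directly for the $K$-sample case, with $\phi=\sum_s p_sf_s$ and $\phi_N=\sum_s\frac{N_s}{N}f_s$. That proposition is not the two-sample cross-count result per se but a more primitive statement about limits of weighted edge sums $\frac1N\sum_{i<j}e(Z_i,Z_j)g(Z_i,Z_j)$ over the optimal matching of the pooled sample, for general symmetric $g$; applying it with $g=\bar h_{st}$ yields each $h_{st}$ in one stroke. This sidesteps precisely the obstacle you flag: the paper never needs the merged group to be an i.i.d.\ sample from a single density, because it works with the pooled point set and a deterministic weight function rather than with labels of a synthetic two-sample problem.

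Your merging/inclusion--exclusion device is elegant (the identity $a_{\{s\}}+a_{\{t\}}-a_{\{s,t\}}=2a_{st}$ and the diagonal recovery via $2a_{ss}+\sum_{t\ne s}a_{st}=N_s$ are both correct), but the gap you yourself identify is real: after merging, the $A$-group is a deterministic superposition of i.i.d.\ samples from distinct densities, not an i.i.d.\ sample from $f_A$, so the \emph{theorem} of \cite{castropell} does not apply as a black box. Your proposed fix---arguing that the proof depends only on local intensities---is correct in spirit but amounts to re-proving (a version of) \cite[Proposition~1]{castropell}, which is exactly what the paper cites directly. In short, your route is workable but not a true black-box reduction; the paper's route is shorter because it uses the more general underlying lemma rather than the headline two-sample result.
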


The proof of the theorem is given in Appendix \ref{app2}. The limit in \eqref{eq:HPab} implies that the MCM statistic (recall \eqref{eq:RKN})
\begin{align}\label{eq:R_consistency}
R_{K, N} \stackrel{a.s.} \rightarrow \sum_{1\leq s<t\leq K}h_{st} :=H(f_1,f_2,\ldots,f_K) := \tfrac{1}{2} - \mathrm{tr}(\bm H).
\end{align}
The consistency of the MCM and the MMCM tests then follows from the fact that 
$$H(f_1,f_2,\ldots,f_K) \leq H(f,f,\ldots,f),$$ and equality holds if and only if $f_1=f_2=\cdots=f_K$ outside a set of Lebesgue measure $0$ (details given Appendix \ref{app2}).

\begin{remark} (Henze-Penrose divergence) In the case $K=2$, the limiting constant $h_{12}$ equals $1-\delta(f_1, f_2)$, 
where $$\delta(f_1, f_2)=\int\frac{p_1^2 f_1^2(x)+p_2^2 f_2^2(x)}{p_1 f_1(x)+ p_2 f_2(x)} \mathrm d x,$$ is the well-known {\it Henze-Penrose divergence} between probability measures \cite{gyorfinemetz1}. This quantity appears as the almost sure limit of a large  class of graph-based 2-sample tests, which includes the Friedman-Rafsky test \cite{henzepenrose}, the nearest-neighbor based tests \cite{hn}, and the CM test \cite{castropell}, and has an interesting interpretation in terms of treatment-control assignment, using the propensity score \cite{covariate_balance}. For general $K$, the limit $h_{st}$ in \eqref{eq:HPab} is a multi-sample generalization of the Henze-Penrose integral, which aggregated over $ 1\leq s < t \leq K$ (as in the RHS of \eqref{eq:R_consistency}), is a global measure of dissimilarity between the densities $f_1, f_2, \ldots, f_K$.
\end{remark}

\section{Power Comparisons}\label{simulsec}

In this section, we illustrate the effectiveness of the tests described above by comparing their power with various other parametric and non-parametric tests, for several alternative hypotheses across different dimensions and number of groups. In Section \ref{sec:mfr} we illustrtate the advantage of using optimal matchings by comparing the performance of the MCM and MCMM tests with the multisample Friedman-Rafsky test (a natural generalization of the 2-sample Friedman-Rafsky test \cite{fr}, where the optimal matching is replaced with the minimum spanning tree (MST)).  In Section \ref{sec:parametric} we compare our matching based tests with other relevant parametric tests. Finally, in Section \ref{sec:comparison} we present an extensive comparison of the empirical power of the MCM and MCMM tests, across increasing dimensions ($d$) and group sizes ($K$). 
Additional simulations are given in Appendix \ref{sec:lognormalapp}.  Throughout, the nominal level of the tests are chosen to be $0.05$. 

\subsection{Comparison with the MST}\label{sec:mfr}
  
Here, we compare the performance of the optimal matching based tests described above, with the test based on the MST. As in \eqref{eq:RKN}, a natural extension of the 2-sample Friedman-Rafsky test based on the MST, is the \textit{multisample Friedman-Rafsky} test (MFRT), which rejects the null hypothesis for small values of the statistic: 
\begin{align}\label{eq:MKN}
T_{K, N}:= \sum_{1\leq s<t\leq K}  \sum_{i=1}^{N_s}\sum_{j=1}^{N_t}\bm 1 \{(X_i^{(s)}, X_j^{(t)} ) \in E(\cT(\mathscr{X})) \}, 
\end{align}
where $\cT(\sX)$ is the minimum spanning tree of the pooled sample $\mathscr{X} := \{\bm X^{(1)},\bm X^{(2)},\ldots,\bm X^{(K)} \}$. 
The MFRT, unlike the MCM and MMCM statistics, is not distribution-free under the null, however, it can be easily calibrated as a permutation test.  We compare the power of this test with the MCM and the MMCM tests in the following two scenarios. All the tests are calibrated using 500 permutations, and the empirical power is calculated over 500 iterations. 

\begin{figure}[h]
\includegraphics[width=5.1in]{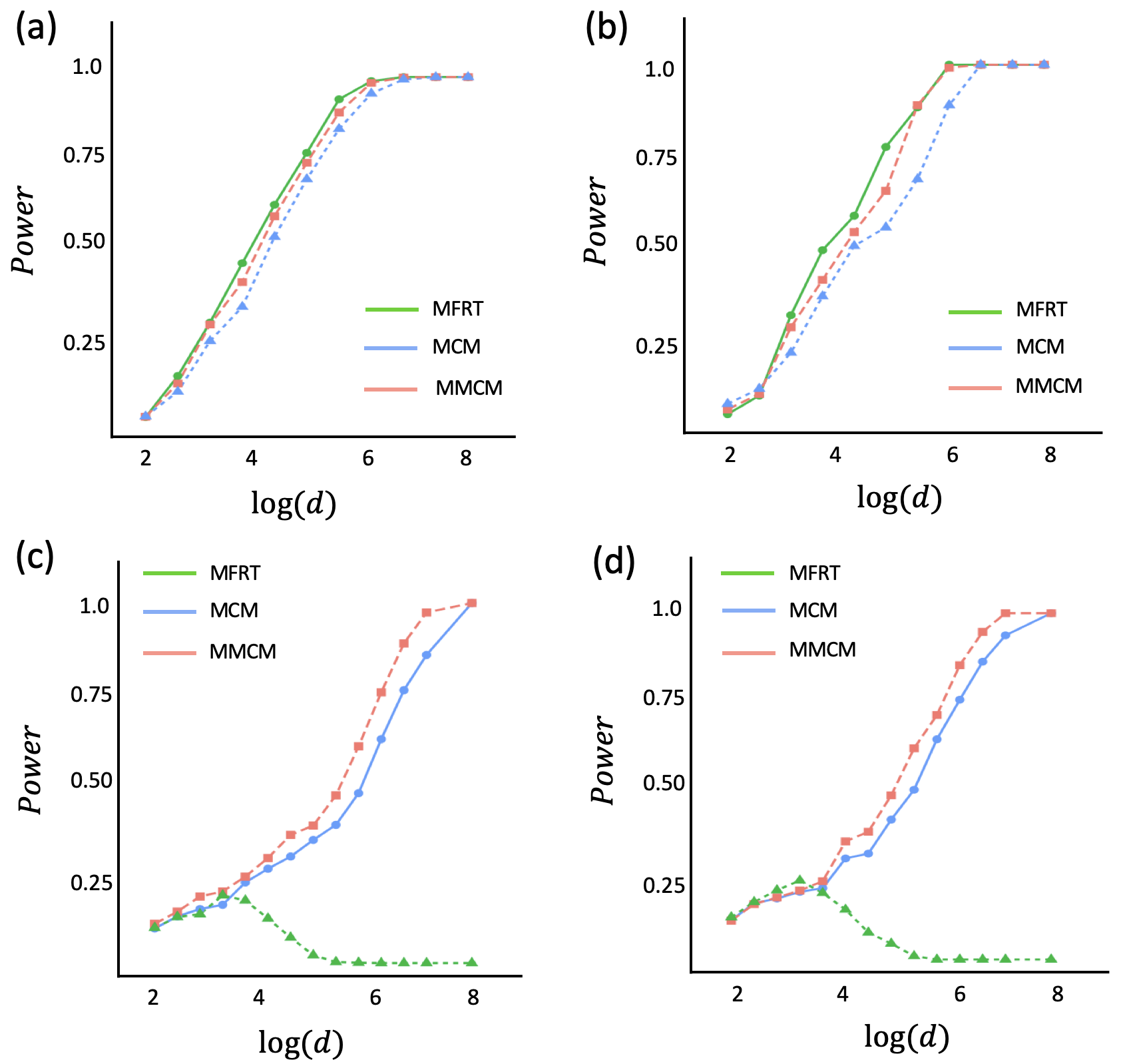}
\caption{\small{Power of the different tests across increasing dimension in (a) the normal location family with $K=3$ groups, (b) the normal location family with $K=4$ groups, (c) the spherical normal scale family with $K=3$ groups, and (d) the spherical normal scale family with $K=4$ groups.}}  
\label{fig:first}
\end{figure}

\begin{itemize}

\item {\it Normal Location}: Here, we consider the family $\{N_d(\bm \mu, \mathrm I): \bm \mu \in \R^d\}$. Figure \ref{fig:first}(a) shows the empirical power of the MFRT, the MCM test, and the MCMM test, when $K=3$ and the data consists of 100 samples each from $N_d(\bm 0, \mathrm I)$, $N_d(0.3 \cdot \bm 1, \mathrm I)$, and $N_d(0.6 \cdot \bm 1, \mathrm I)$, respectively.  Figure \ref{fig:first}(b) shows the empirical power of tests when $K=4$ and 100 samples each are drawn from $N_d(\bm 0, \mathrm I)$, $N_d(0.25  \cdot \bm 1, \mathrm I)$, $N_d(0.5 \cdot \bm 1, \mathrm I)$ and $N_d(0.75 \cdot \bm 1, \mathrm I)$, respectively. In both cases, the dimension $d$ varies from 2 to 2000.

\item {\it Spherical Normal Scale}: Here, we consider the family $\{N_d(\bm 0, \sigma^2 \mathrm I):  \sigma >0\}$. Figure \ref{fig:first}(c) shows the empirical power of the MFRT, the MCM test, and the MCMM test, when $K=3$ and the data consists of 100 samples each from $N_d(\bm 0, \mathrm I)$, $N_d(\bm 0, 1.5 \cdot \mathrm I)$, and $N_d(\bm 0, 2 \cdot\mathrm I)$, respectively.  Figure \ref{fig:first}(d) shows the empirical power of tests when $K=4$ and 100 samples each are drawn from $N_d(\bm 0, \mathrm I)$, $N_d(\bm 0, 1.25 \cdot \mathrm I)$, $N_d(\bm 0, 1.5 \cdot \mathrm I)$ and $N_d(\bm 0, 1.75 \cdot \mathrm I)$, respectively.  As before, the dimension $d$ varies from 2 to 2000.

\end{itemize}

The plots show that for location alternatives, the 3 tests have very similar power, with the MFRT performing marginally better than the MCMM, which is marginally better than the MCM. However, in the scale problem, the MCM and the MMCM tests drastically outperforms the MFRT. Here the power of the MFRT goes down to zero as the dimension increases, whereas the MCM and MCMM both have power improving with dimension and eventually going up to 1, illustrating the benefits of the distribution-free property of optimal matchings in high-dimensional problems.


\subsection{Comparison with Parametric Tests} 
\label{sec:parametric}

Here, we compare the performance of the MCM and the MMCM tests with baseline parametric tests, for relatively low-dimensional problems (where the corresponding parametric tests are applicable). As before, we consider location and scale alternatives in the normal family. Here, use the asymptotic distributions derived above to choose the cutoffs of the tests.

\begin{example}({\it Normal Location}) Here, we compare the MCM and  the MMCM tests with the Anderson's  test for Gaussian location alternatives \cite{anderson}.\footnote{This is a standard multisample method for testing difference of normal means. In the case where all the $K$ sample sizes are equal, the Anderson's test constructs, a vector $V^{(s)}$ of length $(K-1)d$ formed by appending $K-1$ linearly independent contrasts based on the $s$-th observations from each of the $K$ classes,  for each $1\leq s \leq N/K$. Then the test statistic is based on a Hoteling's $T^2$ statistic constructed from the vectors $V^{(1)},\ldots,V^{(N/K)}$. The reader is referred to \cite{anderson} for the precise description of this test.} As is common in parametric tests, since Anderson's statistic requires inverting certain sample covariance matrices, the test, unlike the MCM and the MMCM, is inapplicable for large dimensions, specifically if $d \geq \frac{N}{K(K-1)}$. When the dimension is much smaller than this threshold and there is a moderate sample size, Anderson's test  performs well (this is expected because the test is specifically designed for such alternatives). However, when the dimension increases and comes closer to the boundary,  Anderson's test start to lose power.  In fact, we see below that for dimensions close to this threshold, that non-parametric matching based methods outperform Anderson's test, for relatively small sample sizes. 
\begin{itemize}
\item Figure \ref{fig:parametric}(a) shows the empirical power (over 500 iterations) of the Anderson's test, the MCM test, and the MCMM test, for $K=7$ classes in dimension $d=16$. The horizontal axis shows the magnitude of separation $\Delta$, and the data consists of 100 samples from each of the following 7 distributions: for $1 \leq s \leq 7$, the $s$-th distribution corresponds to $N_{16}( (s-1)\Delta \cdot \bm 1, \mathrm I)$. Each simulation was repeated for 6 values of $\Delta$: $0.01,0.03,0.05,0.07,0.1,0.15$.

\begin{figure}
	\begin{center}
		\includegraphics[width=5.8in,height=6.5cm]{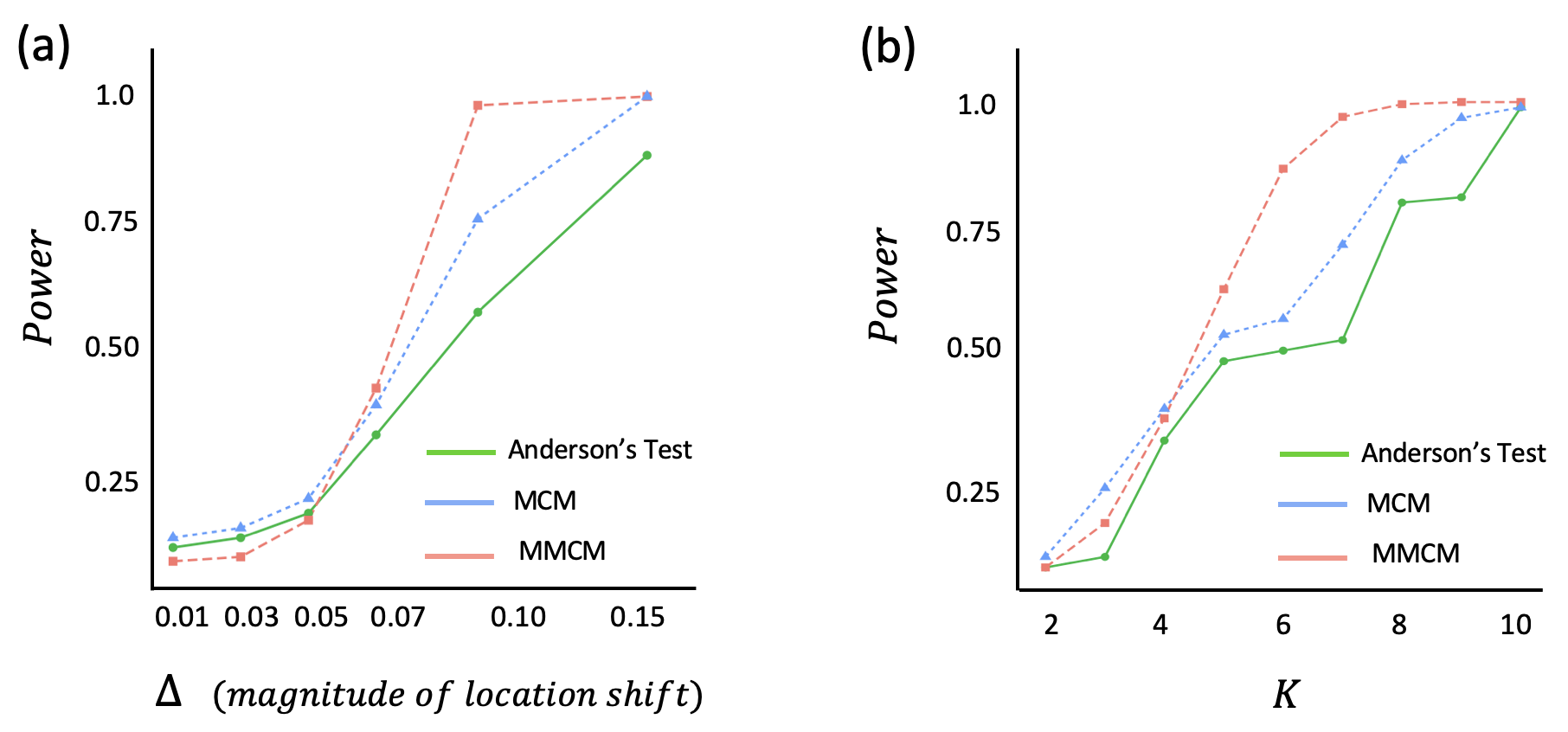}
	\end{center}
	\caption{\small{Power of the different tests in the normal location family when the number of classes (a) $K=7$, and (b) $K$ varying between $2$ and $10$.}} \vspace{-0.1in}\label{fig:parametric}
\end{figure}

\item Figure \ref{fig:parametric}(b) shows the empirical power when we vary both the number of classes $K$ and the dimension $d$. Here, $K$ varies from $2$ to $10$ (shown in the horizontal axis), and for each $K$ the dimension $d$ is chosen just below the dimension threshold of Anderson's test, and the $s$-th distribution corresponds to  $N_d(\frac{s-1}{10}\cdot \bm 1,  \mathrm I)$, for $1 \leq s \leq K$. 
\end{itemize} 
In both the cases, we observe that the power of the MCM and the MMCM tests are noticeably better than that of Anderson's test.  
\end{example}

\begin{example} ({\it Normal Scale}) Here, we compare the MCM and  the MMCM test with the likelihood ratio test (LRT)  for the equality of covariance matrices, when the means are unknown,\footnote{In this case, the LRT statistic rejects for small values of $\lambda = \exp(-\frac{1}{2}\sum_{i=1}^K N_i \log|\bm S_i^{-1}\bm S|)$, where $\bm S_i$ is the covariance matrix of the $i$-th sample, for $1\leq i \leq K$, and $\bm S$ is the pooled sample covariance matrix. Under the null hypothesis of equality of covariance matrices, $-2\log \lambda$ has an asymptotic chi-squared distribution with $\frac{1}{2}d(d+1)(K-1)$ degrees of freedom.} in the Normal scale family. This test performs well for small dimensions, especially when the sample sizes across the classes are equal. However, we see below that even in relatively small dimensions, the LRT performs poorly when the sample sizes become unbalanced, but the matching based tests continue to have significant power.  

\begin{itemize}

\item Figure \ref{fig:parsph}(a) shows the empirical power (over 500 iterations) of the LRT, the MCM test, and the MCMM test, for $K=4$ classes in dimension $d=20$. The horizontal axis shows the magnitude of separation $\Delta$, and the data consists of 80, 95, 110 and 125 samples from the following 4 distributions: for $1 \leq s \leq 4$, the $s$-th distribution corresponds to $N_{20}\left( \bm 0, (1+(s-1)\Delta) \mathrm I\right)$. Each simulation was repeated for $10$ values of $\Delta$: $0.4,0.45,0.5,0.55,0.6,0.65,0.7,0.75,0.8,0.85$.   
		
\item Figure \ref{fig:parsph}(b) shows the empirical power (over 500 iterations) when we vary the sample size difference $\delta$ among the classes from $8$ to $18$. The horizontal axis shows this sample size difference $\delta$. For each $\delta$, the data consists of $80,80+\delta, 80+2\delta$ and $80+3\delta$ samples from the the following $4$ distributions respectively: for $1 \leq s \leq 4$, the $s$-th distribution corresponds to $N_{20}\left( \bm 0, \frac{s+1}{2}\mathrm I\right)$.
\end{itemize} 
In the first case, we observe that the power of all the tests increase to 1, but the MCM and the MMCM tests dominate that of the LRT by a significant margin for smaller separations. In the second case, however,  with increase in the difference of sample size across the classes, the power of the LRT decreases to $0$, while the power of the MCM and MMCM tests remain stable at high values, illustrating the robustness of these methods even for low-dimensional problems. 
\end{example}

\begin{figure}
	\begin{center}
		\includegraphics[width=5.8in,height=6.5cm]{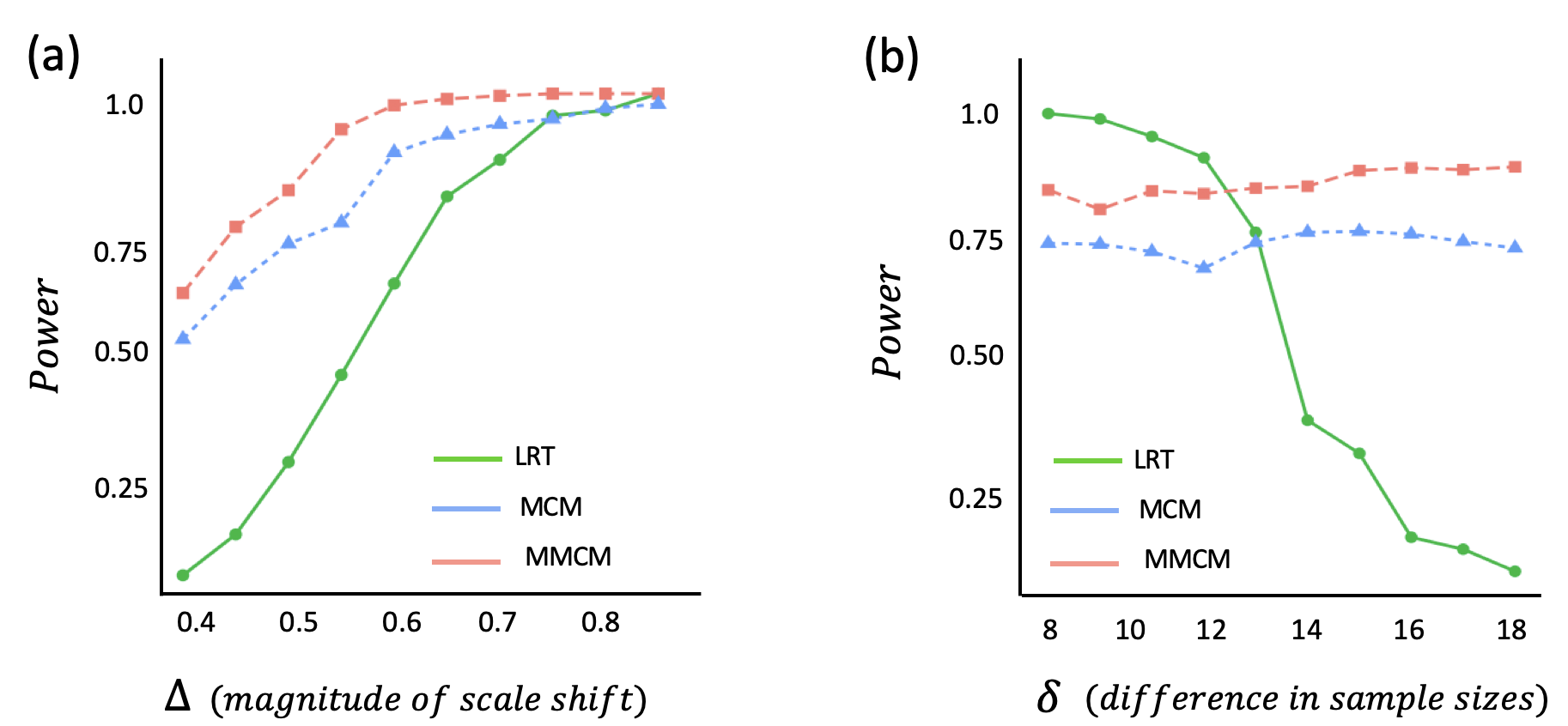}
	\end{center}
	\caption{\small{Power of the different tests for the spherical normal scale problem for (a) varying scale shift, and (b) varying difference in sample size}} \label{fig:parsph}\vspace{-0.1in}
\end{figure}

\subsection{Comparison between the MCM and the MMCM Tests}
\label{sec:comparison}

In this section, we compare the finite-sample power of the MCM and the MMCM tests in various examples.  Here, we consider 3 distributional models: (1) the normal location family $\{N_d(\bm \mu, \mathrm I): \bm \mu \in \R^d\}$, (2) the spherical normal scale family $\{N_d(\bm 0, \sigma^2 \mathrm I): \sigma > 0\}$, and (3) the  equi-correlated normal scale family $\{N_d(\bm 0, (1-\rho) \mathrm I) + \rho \bm 1 \bm 1^\top: 0 \leq \rho < 1\}$.   A typical simulation instance looks as follows. We generate samples from $K$ different $d$-dimensional distributions from an underlying distributional model. The difference between the $K$ distributions is quantified by a separation parameter $\Delta$ (which depends on location parameter/spherical scale parameter/correlation parameter, depending on the underlying distributional model).  For each of the distributional models we consider two scenarios: (1) the \textit{fixed class scenario} where the number of classes $K$ is fixed and we perform a two-way power comparison with $\Delta$ versus $d$, and (2) the \textit{fixed dimension scenario}, where we fix the dimension $d$, and perform a two-way power comparison with $\Delta$ versus $K$. 
In all the simulations, the power is calculated over 100 iterations.  Additional simulations, comparing the MCM and the MCMM tests, in the lognormal family are given in Appendix \ref{sec:lognormalapp}. Overall, we observe that MCM and MCMM tests are comparable for small dimension and group sizes, but the MCMM test outperforms the MCM as the dimension, groups, or separation increases.

\begin{table}[t]
\centering
\begin{minipage}[c]{0.59\textwidth}
\centering
\small{
\begin{table}[H]
\begin{tabular}{c|c||ccccccc}
    \hline
    $\Delta\downarrow$ & Dimension &  5 & 10 & 50 & 100& 200 & 300 & 500 \\  
    \hline
    \hline
    \multirow{2}{*}{.04}&MCM & .26 & .30 & {\bf .16} & {\bf .19} & {\bf .36} & .41 & .57 \\
     &MCMM & \textbf{.29}  & \textbf{.44} & .10 & .17 & .25 & {\bf .44} & {\bf .76} \\
    \hline
    \multirow{2}{*}{.06}&MCM & .43 & .51 & .31 & {\bf .43} & {\bf .41} & .52 & .92 \\
     &MCMM & {\bf .59}  & {\bf .69} & {\bf.46} & .32 & .37 & {\bf .70} &  {\bf 1.0} \\
    \hline
    \multirow{2}{*}{.08}&MCM & .49 & .61 & .39 & .48 & .50 & .85 & 1.0 \\
     &MCMM & {\bf .67}  & {\bf .81} & {\bf .65} & {\bf .64} & {\bf .65} & {\bf .99} & {\bf 1.0} \\
    \hline
    \multirow{2}{*}{.10}&MCM & .66 & .70 & .77 & .60 & .97 & .99 & 1.0 \\
     &MCMM & {\bf .81}  & {\bf .84} & {\bf .90} & {\bf .79} & {\bf .99} & {\bf 1.0} & {\bf 1.0} \\
    \hline
    \multirow{2}{*}{.12}&MCM & .81 & .97 & .81 & .87 & 1.0 & 1.0 & 1.0 \\
     &MCMM & {\bf .94}  & {\bf 1.0} & {\bf .93} & {\bf 1.0} &  {\bf 1.0} & {\bf 1.0} & {\bf 1.0} \\
    \hline
\end{tabular}
\end{table}
\vspace{-0.1in}
(a)
}
\end{minipage}
\begin{minipage}[c]{0.39\textwidth}
\centering
\small{
\begin{table}[H]
\begin{tabular}{c|c||cccc}
    \hline
    $\Delta\downarrow$ & Groups & 4  & 6 & 8 & 10  \\  
    \hline
    \hline
    \multirow{2}{*}{.04}&MCM & {\bf .06} & .42 & .45 & .80  \\
     &MCMM & .04  & {\bf .53} & {\bf .78} & {\bf .97} \\
    \hline
    \multirow{2}{*}{.05}&MCM & {\bf .11} & .61 & .85 & 1.0  \\
     &MCMM & .07  & {\bf .78} & {\bf .99} & {\bf 1.0} \\
    \hline
    \multirow{2}{*}{.07}&MCM & {\bf .19} & .77 & 1.0 & 1.0  \\
     &MCMM & .13  & {\bf 0.96} & {\bf 1.0} & {\bf 1.0}  \\
    \hline
    \multirow{2}{*}{.09}&MCM & .47 & .93 & 1.0 & 1.0  \\
     &MCMM & {\bf .53}  & {\bf 1.0} & {\bf 1.0} & {\bf 1.0} \\
    \hline
    \multirow{2}{*}{.10}&MCM & .55 & 1.0 & 1.0 & 1.0 \\
     &MCMM & {\bf .70}  & {\bf 1.0} & {\bf 1.0} & {\bf 1.0}  \\
    \hline
\end{tabular}
\end{table}
}
\vspace{-0.1in}
(b)
\end{minipage}
\caption{\small{Power of the MCM and the MMCM tests in the normal location family with (a) the number of classes $K=6$ fixed, and (b) the dimension $d=150$ fixed. (The higher power in each case is in bold.)}}
\label{table:normallocation}
\vspace{-0.1in}
\end{table}

\normalsize	

\begin{itemize}

\item {\it Normal Location}: Here, we consider samples from the following $K$ distributions: $N_d((s-1)\Delta\bm 1, \mathrm I)$, for $1 \leq s \leq K$. Table \ref{table:normallocation}(a) shows the fixed class scenario, where we take $K=6$ groups and vary the dimension $d$ from $5$ to $500$, and $\Delta$ from $0.04$ to $0.12$. Table \ref{table:normallocation}(b) shows the fixed dimension scenario, where the dimension $d = 150$ is fixed, the number of groups $K$ varies along $4,6,8,10$, and $\Delta$ varies from  $0.04$ to $0.10$. In both cases, the sample sizes  were taken in equal increments of $50$, starting from $50$.

\item {\it Spherical Normal Scale}: Here, we consider samples from the following $K$ distributions: $N_d(\bm 0, (1+(s-1)\Delta) \mathrm I)$, for $1 \leq s \leq K$. Table \ref{table:normalscale}(a) shows the  fixed class scenario, with $K=6$ and dimension $d$ varying from $5$ to $500$, and $\Delta$ varying from $0.05$ to $0.4$. Table \ref{table:normalscale}(b) shows the  fixed dimension scenario, where the $d = 150$ is fixed, and $K$ varies along $4,6,8,10$, and $\Delta$ varies from $0.05$ to $0.4$. As before, in both cases, the sample sizes  were taken in equal increments of $50$, starting from $50$.

\begin{table}[h]
	\centering
	\begin{minipage}[c]{0.59\textwidth}
		\centering
		\small{
			\begin{table}[H]
				\begin{tabular}{c|c||ccccccc}
					\hline
					$\Delta\downarrow$ & Dimension &  5 & 10 & 50 & 100& 200 & 300 & 500 \\  
					\hline
					\hline
					\multirow{2}{*}{.15}&MCM & \bf .12 & \bf .16 & .22 & .33 & .51 & .55 & .71 \\
					&MMCM & .06  & .13 &\bf  .28 &\bf  .37 &\bf  .67 &\bf  .86 &\bf  .92\\
					\hline
					\multirow{2}{*}{.20}&MCM & \bf .13 & \bf .24 & .41 & .46 & .68 & .79 & .94 \\
					&MMCM & .12  & .19 &\bf  .59 &\bf  .78 & \bf .95 & \bf .99 &  \bf 1.0\\
					\hline
					\multirow{2}{*}{.25}&MCM & .21 & .31 & .47 & .60 & .81 & .88 & 1.0 \\
					&MMCM & \bf .23  & \bf .38 &\bf  .85 & \bf .98 & \bf 1.0 & \bf 1.0 & \bf 1.0 \\
					\hline
					\multirow{2}{*}{.30}&MCM & .22 & .43 & .70 & .91 & .98 & 1.0 & 1.0 \\
					&MMCM & \bf .25  & \bf .53 & \bf .99 & \bf 1.0 & \bf 1.0 &\bf  1.0 & \bf 1.0\\
					\hline
					\multirow{2}{*}{.35}&MCM & .24 & .38 & .75 & .87 & 1.0 & 1.0 & 1.0 \\
					&MMCM & \bf .30  & \bf .54 & \bf .99 & \bf 1.0 & \bf 1.0 & \bf 1.0 & \bf 1.0 \\
					\hline
					\multirow{2}{*}{.40}&MCM & .29 & .50 & .86 & 1.0 & 1.0 & 1.0 & 1.0 \\
					&MMCM & \bf .49  & \bf .85 & \bf 1.0 & \bf 1.0 &  \bf 1.0 & \bf 1.0 & \bf 1.0 \\
					\hline
				\end{tabular}
			\end{table}
			\vspace{-0.1in}
			(a)
		}
	\end{minipage}
	\begin{minipage}[c]{0.39\textwidth}
		\centering
		\small{
			\begin{table}[H]
				\begin{tabular}{c|c||cccc}
					\hline
					$\Delta\downarrow$ & Groups & 4  & 6 & 8 & 10  \\  
					\hline
					\hline
					\multirow{2}{*}{.15}&MCM & \bf .27 & .68 & .91 & 1.0  \\
					&MMCM & .18  & \bf .91 & \bf 1.0 & \bf 1.0 \\
					\hline
					\multirow{2}{*}{.20}&MCM & .48 & .89 & .99 & 1.0  \\
					&MMCM & \bf .48  & \bf 1.0 & \bf 1.0 & \bf 1.0 \\
					\hline
					\multirow{2}{*}{.25}&MCM & .66 & .96 & 1.0 & 1.0  \\
					&MMCM & \bf .89  & \bf 1.0 & \bf 1.0 & \bf 1.0  \\
					\hline
					\multirow{2}{*}{.30}&MCM & .87 & 1.0 & 1.0 & 1.0  \\
					&MMCM & \bf .98  & \bf 1.0 & \bf 1.0 & \bf 1.0 \\
					\hline
					\multirow{2}{*}{.35}&MCM & .95 & 1.0 & 1.0 & 1.0 \\
					&MMCM & \bf 1.0  & \bf 1.0 & \bf 1.0 & \bf 1.0  \\
					\hline
					\multirow{2}{*}{.40}&MCM & 1.0 & 1.0 & 1.0 & 1.0 \\
					&MMCM & \bf 1.0  & \bf 1.0 & \bf 1.0 & \bf 1.0  \\
					\hline
				\end{tabular}
			\end{table}
		}
		\vspace{-0.1in}
		(b)
	\end{minipage}
	\caption{\small{Power of the MCM and the MMCM tests in the spherical normal scale family with (a) the number of classes $K=6$ fixed, and (b) the dimension $d=150$ fixed.}}
	\vspace{-0.15in}
\label{table:normalscale}		
\end{table}

\normalsize

\item {\it Equi-correlated Normal Scale}:  Here, we consider samples from the following $K$ distributions: $N_d(0, (1-\rho_s)\mathrm I + \rho_s \bm 1 \bm 1^\top)$, where $\rho_s := (s-1)\frac{\Delta}{K-1}$, for $1 \leq s \leq K$.   Table \ref{table:normalequicorrelationscale}(a) shows the  fixed class scenario, with $K=6$ and dimension $d$ varying from $5$ to $500$, and $\Delta$ varying from $0.15$ to $0.4$. The sample sizes are taken to be $50, 100, 150, 200, 250$ and $300$.
Table \ref{table:normalequicorrelationscale}(b) shows the  fixed dimension scenario, where the $d = 150$ is fixed, and $K$ varies along $4,6,8,10$, and $\Delta$ varies from $0.15$ to $0.4$, as before. The sample sizes are taken in equal increments from $50$ to $200$ when $K=4$, from $50$ to $300$ when $K=6$, from $50$ to $260$ when $K=8$, and from $50$ to $230$ when $K=10$.

\end{itemize}

In all the simulations above (and those in Appendix \ref{sec:lognormalapp}), we observe that the power of both the MCM and the MMCM tests improve with increasing separation and dimension. For smaller dimensions and separations, the power of both the tests are comparable, however, the MCMM test quickly gains power and performs noticeably better than the MCM, for higher dimensions and larger separations. This leads to our preference for using the MMCM over the MCM especially in high dimensions, which is often the case for real life datasets.

\begin{table}[h]
	\centering
	\begin{minipage}[c]{0.59\textwidth}
		\centering
		\small{
			\begin{table}[H]
				\begin{tabular}{c|c||ccccccc}
					\hline
					$\Delta\downarrow$ & Dimension &  5 & 10 & 50 & 100& 200 & 300 & 500 \\  
					\hline
					\hline
					\multirow{2}{*}{.15}&MCM & \bf .10 & .11 & .16 & .22 & .34 & \bf .38 & .35 \\
					&MMCM & .08  & \bf .11 & \bf .19 & \bf .27 & \bf .36 & .37 & \bf .39\\
					\hline
					\multirow{2}{*}{.20}&MCM & \bf .07 & \bf .10 & \bf .17 & .24 & .25 & .35 & .43 \\
					&MMCM & .06  & .09 & .15 & \bf .25 & \bf .38 & \bf .49 & \bf .64\\
					\hline
					\multirow{2}{*}{.25}&MCM & \bf .09 & \bf .16 & .25 & .27 & .39 & .44 & .50 \\
					&MMCM & .02  & .09 & \bf .29 & \bf .42 & \bf .55 &\bf .66 &\bf .78 \\
					\hline
					\multirow{2}{*}{.30}&MCM & .08 & \bf .17 & .27 & .32 & .56 & .67 & .71 \\
					&MMCM & \bf .14  & .13 & \bf .36 & \bf .48 & \bf .78 & \bf .79 & \bf .93\\
					\hline
					\multirow{2}{*}{.35}&MCM & \bf .12 & \bf .19 & .45 & .46 & .60 & .65 & .88 \\
					&MMCM & .06  & .15 & \bf .50 & \bf .70 & \bf .81 & \bf .90 & \bf .99 \\
					\hline
					\multirow{2}{*}{.40}&MCM & \bf .15 & .26 & .51 & .70 & .77 & .84 & .95 \\
					&MMCM & .12  & \bf .26 & \bf .70 & \bf .91 &  \bf .98 & \bf 1.0 & \bf 1.0 \\
					\hline
				\end{tabular}
			\end{table}
			\vspace{-0.1in}
			(a)
		}
	\end{minipage}
	\begin{minipage}[c]{0.39\textwidth}
		\centering
		\small{
			\begin{table}[H]
				\begin{tabular}{c|c||cccc}
					\hline
					$\Delta\downarrow$ & Groups & 4  & 6 & 8 & 10  \\  
					\hline
					\hline
					\multirow{2}{*}{.15}&MCM & \bf .26 & .19 & .29 & .31 \\
					&MMCM & .20  & \bf .26 & \bf .32 & \bf .34 \\
					\hline
					\multirow{2}{*}{.20}&MCM & \bf .43 & .30 & .29 & .18  \\
					&MMCM & .37  & \bf .35 & \bf .42 & \bf .26 \\
					\hline
					\multirow{2}{*}{.25}&MCM & .48 & .40 & .34 & \bf .32  \\
					&MMCM & \bf .54  & \bf .47 & \bf .43 & .28  \\
					\hline
					\multirow{2}{*}{.30}&MCM & \bf .61 & .53 & .43 & .28  \\
					&MMCM & .59  & \bf .63 & \bf .58 & \bf .47 \\
					\hline
					\multirow{2}{*}{.35}&MCM & .66 & .63 & .47 & .43 \\
					&MMCM & \bf .78  & \bf .80 & \bf .76 & \bf .71  \\
					\hline
					\multirow{2}{*}{.40}&MCM & .87 & .72 & .67 & .57 \\
					&MMCM & \bf .99 & \bf .95 & \bf .90 & \bf .83  \\
					\hline
				\end{tabular}
			\end{table}
		}
		\vspace{-0.1in}
		(b)
	\end{minipage}
	\caption{\small{Power of the MCM and the MMCM tests in the equi-correlated normal scale family with (a) the number of classes $K=6$ fixed, and (b) the dimension $d=150$ fixed.}}
\label{table:normalequicorrelationscale}	
\end{table}\vspace{-0.15in}

\normalsize	
 
\section{Distribution Under the Alternative}\label{condasm1}

In this section we will prove a central limit theorem for the vector of cross-counts, and, as a corollary, derive the asymptotic distribution of the cross-match \eqref{eq:R2N}, the MCM \eqref{eq:RKN} and MMCM \eqref{eq:SKN} statistics, under general alternatives. We begin with an alternative way to describe the joint distribution of the data $\{\bm X^{(1)},\bm X^{(2)},\ldots,\bm X^{(K)} \}$:
\begin{itemize}

\item Let $Z_1, Z_2, \ldots, Z_N$ be i.i.d. from the density $\phi_N:=\sum_{s=1}^K \frac{N_s}{N} f_s$ in $\R^d$, where $f_1, f_2, \ldots, f_K$ are the densities  (with respect to the Lebesgue measure on $\R^d$) of the distributions $F_1, F_2, \ldots, F_K$, respectively.

\item Given $\cZ_N=(Z_1, Z_2, \ldots, Z_N)$, assign a random label $L_j \in [K]:=\{1, 2, \ldots, K\}$ to $Z_j$, independently for each $1 \leq j \leq N$, where 
\begin{align}\label{eq:labels}
\P(L_j=s|Z_j)  = \frac{\frac{N_s}{N} f_s(Z_j)}{\phi_N(Z_j)}, \quad  \textrm{for all}~s \in [K].
\end{align}

\item Denote by $\eta_s := \sum_{i=1}^N \bm 1 \{L_i = s\}$, the number of elements labelled $s$. Then it is easy to verify that the joint distribution of $\left(\{Z_j: L_j=1\}, \{Z_j: L_j=2\}, \ldots, \{Z_j: L_j=K\}\right)$ conditional on $(\eta_1,\cdots,\eta_K)=(N_1, N_2, \ldots, N_K)$ is same as the joint distribution of the data $\left(\bm X^{(1)},\bm X^{(2)},\ldots,\bm X^{(K)} \right)$ (see Lemma \ref{condrel}).\footnote{Note that under the null, \eqref{eq:labels} simplifies to $P(L_j=s|Z_j)  = \frac{N_s}{N}$, and the procedure described above, is precisely the way to generate the permutation null distribution.} 
\end{itemize}
It is also often convenient (because of the independence of the labelings) to work with unconditional distribution of $(\{Z_j: L_j=1\}, \{Z_j: L_j=2\}, \ldots, \{Z_j: L_j=K\})$, which we will refer to as the {\it bootstrap alternative distribution}. 

Now, define the $K\times K$ matrix $\bm B_N=(b_{st})_{1 \leq s, t \leq K}$ as follows:
\begin{equation}\label{eq:newbdefn}
b_{st} =   \left\{
\begin{array}{ll}
\sum_{1\leq i\neq j\leq N}e(Z_i,Z_j) \bm 1 \{L_i=s, L_j = t\} & \textrm{if}~s\neq t, \\ \\ 
\frac{1}{2}\sum_{1\leq i\neq j\leq N} e(Z_i, Z_j) \bm 1 \{L_i=L_j = s\} & \textrm{if}~s= t.
\end{array} 
\right. 
\end{equation}
where $e(x,y) := \bm 1 \left\{(x,y) \in  E(\sg(\cZ_N \cup \{x, y\}))\right\}$. Moreover, for notational convenience, denote $\bm \eta := (\eta_1,\ldots,\eta_K)$ and $\bm N := (N_1,\ldots,N_K)$. Then, for $1 \leq s , t \leq K$ the cross/pure counts $a_{st}$ (recall \eqref{eq:st_count} and \eqref{eq:ss_count}) can be re-written in terms of the $\cZ_N$ and the labelings as follows: 
\begin{align}\label{eq:acountsZ_I}
a_{st} \stackrel{D}= b_{st}\Big|\{ \bm \eta = \bm N\},
\end{align}
Therefore, the conditional mean of the pure/cross-counts under the bootstrap alternative distribution is
\begin{align}\label{eq:muNst}
\nu_N(s, t):=\E_{H_1}(b_{st}|\cZ_N)=
\left\{
\begin{array}{ccc}
\sum_{1\leq i\neq j\leq N} e(Z_i, Z_j) h_{st}^{(N)}(Z_i, Z_j),   &   \text{if }  s \ne t \\ \\ 
\frac{1}{2}\sum_{1\leq i\neq j\leq N} e(Z_i, Z_j) h_{ss}^{(N)}(Z_i, Z_j),  &   \text{if }  s = t,
\end{array}
\right.
\end{align}
where $h_{st}^{(N)}(x, y)=\frac{N_s}{N}  \frac{N_t}{N}  \frac{f_s(x) f_t(y) }{\phi_N(x) \phi_N(y)}$. We denote the matrix of these conditional expectations by $\bm \mu_N=((\mu_N(s, t)))_{1 \leq s, t \leq K}$. Note the expressions in the RHS above is permutation invariant and a function of the pooled sample (forgetting the labels). For example, for $s \ne t$,
\begin{align}\label{eq:muNst_sample}
\nu_N(s, t) \Big| \{\bm \eta = \bm N\}\stackrel{D}=\sum_{1 \leq a, b \leq K} \sum_{i=1}^{N_a}\sum_{j=1}^{N_b} h_{st}^{(N)}(X_i^{(a)}, X_j^{(b)}) e(X_i^{(a)}, X_j^{(b)} ) =: \mu_N(s,t),
\end{align} 
which can be computed from the pooled data at a known alternative point $(f_1, f_2, \ldots, f_K)$.   As usual, denote by $\vec{\bm \mu}_N$ the vector obtained by concatenating the rows of the matrices $\bm \mu_N := ((\mu_N(s,t)))$ in the upper triangular part. Note that $\vec{\bm \mu}_N$ can be thought of as the conditional mean of the vector $\vec{\bm A}_N$ given the pooled sample, where the randomness comes only from the labeling of the classes. 

In the theorem below we show that the vector $\vec{\bm A}_N$ (recall that this is the  vector obtained by concatenating the rows of $\bm A_N$ in the upper triangular part, as defined in \eqref{eq:SKN})  centered by the corresponding vector of conditional means  $\vec{\bm \mu}_N$ and scaled appropriately, converges in distribution to a ${K \choose 2}$-dimensional multivariate normal in the usual asymptotic regime \eqref{eq:Nlimit}. The proof of the theorem is given in Appendix \ref{secalt}.

\begin{thm}\label{altclt} 
Under general alternatives, in the usual asymptotic regime \eqref{eq:Nlimit},   
\begin{align}\label{eq:clt_H1}
\frac{\vec{\bm A}_N - \vec{\bm \mu}_N}{\sqrt N}  \xrightarrow{D} N_{{K\choose 2}}(0, \bm \Gamma_{f_1, f_2, \ldots, f_k}),
\end{align}
where the covariance matrix $\bm \Gamma_{f_1, f_2, \ldots, f_k}$ is as in Definition \ref{defn:GammaK} (in Appendix \ref{secalt}). 
\end{thm}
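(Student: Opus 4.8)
The plan is to exploit the bootstrap representation of the data introduced just before the statement. Conditional on the positions $\cZ_N=(Z_1,\ldots,Z_N)$, the minimum non-bipartite matching graph $\sG(\cZ_N)$ is a deterministic perfect matching, while the labels $L_1,\ldots,L_N$ are \emph{independent} across vertices (each $L_i$ depending only on $Z_i$ through \eqref{eq:labels}). Since every vertex lies in exactly one matched edge, the contribution of each of the $I=N/2$ edges to the count vector $\vec{\bm B}_N$ depends only on the labels of its two (distinct) endpoints, and these per-edge contributions are therefore conditionally independent given $\cZ_N$. Thus, conditional on $\cZ_N$, the vector $\vec{\bm B}_N$ minus its conditional-mean vector (whose upper-triangular entries are $\nu_N(s,t)=\E_{H_1}(b_{st}\mid\cZ_N)$) is a centered sum of $I$ independent, uniformly bounded (by $1$ in each coordinate) random vectors, which is precisely the object to which a multivariate Lindeberg--Feller CLT applies.

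First I would prove the conditional covariance law of large numbers, $\tfrac1N\Cov(\vec{\bm B}_N\mid\cZ_N)\to\bm\Sigma$ almost surely, where $\bm\Sigma$ is assembled from the per-edge label covariances. This step is the geometric heart of the argument and runs parallel to the consistency statement (Theorem \ref{consmain}): because matched points of the optimal matching are asymptotically close, the labelling probabilities at the two endpoints of an edge coincide in the limit, so the empirical average over edges of the relevant label moments converges to an integral against the mixture density $\phi=\sum_a p_a f_a$, producing the explicit entries recorded in Definition \ref{defn:GammaK}. The same locality input shows that the bootstrap centering $\nu_N$ and the sample-level centering $\mu_N$ of \eqref{eq:muNst_sample} agree as functions of the unordered position set, so the center $\vec{\bm\mu}_N$ in the statement is the correct one. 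Granting the covariance limit, the Lindeberg condition is automatic, since each summand is bounded and there are $\Theta(N)$ of them against a non-degenerate limit; I would then conclude conditional asymptotic normality of $(\vec{\bm B}_N-\vec{\bm\nu}_N)/\sqrt N$ and, because the limit is deterministic, pass to the unconditional (bootstrap) law by bounded convergence of the conditional characteristic functions.

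The hard part will be transferring this bootstrap CLT (independent labels) to the conditional law given $\{\bm\eta=\bm N\}$, since by \eqref{eq:acountsZ_I} it is the latter that reproduces $\vec{\bm A}_N$. Working again conditional on $\cZ_N$, I would establish the \emph{joint} CLT of the pair $\big((\vec{\bm B}_N-\vec{\bm\nu}_N)/\sqrt N,\ (\bm\eta-\bm N)/\sqrt N\big)$: both coordinates are sums of the same independent per-edge contributions, so the identical Lindeberg argument yields a joint Gaussian limit with a computable cross-covariance. The conditioning event $\{\bm\eta=\bm N\}$ is a lattice event in the $(K-1)$-dimensional hyperplane $\{\sum_s\eta_s=N\}$, so to justify that conditioning on this exact integer value preserves the Gaussian limit I would invoke a multivariate local limit theorem for $\bm\eta$. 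The conditional law of $\vec{\bm A}_N$ then converges to the Gaussian conditional distribution, whose covariance is the Schur complement $\bm\Sigma_{BB}-\bm\Sigma_{B\eta}\bm\Sigma_{\eta\eta}^{-1}\bm\Sigma_{\eta B}$, which I expect to coincide with $\bm\Gamma_{f_1,\ldots,f_K}$ of Definition \ref{defn:GammaK}; because $\bm N=\E\bm\eta$, the associated mean correction vanishes and the center remains $\vec{\bm\mu}_N$. I anticipate this local-limit/de-conditioning step to be the principal obstacle, both in controlling the exact lattice conditioning and in matching the resulting covariance with Definition \ref{defn:GammaK}; by contrast, the covariance LLN should be comparatively routine given that its first-moment analogue is already supplied by Theorem \ref{consmain}.
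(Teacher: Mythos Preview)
Your proposal follows essentially the same architecture as the paper's proof: bootstrap representation, conditional independence of the per-edge contributions given $\cZ_N$, a conditional covariance law of large numbers (this is the paper's Lemma \ref{lm:covarianceH1}, proved exactly via Lemma \ref{plimh} as you anticipate), a joint CLT for $(\vec{\bm B}_N,\bm\eta)$, and finally conditioning on $\bm\eta=\bm N$ to recover $\vec{\bm A}_N$ with the Schur-complement covariance.

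One technical point you gloss over that the paper makes explicit: when you write ``Working again conditional on $\cZ_N$, I would establish the joint CLT of the pair $\big((\vec{\bm B}_N-\vec{\bm\nu}_N)/\sqrt N,\ (\bm\eta-\bm N)/\sqrt N\big)$,'' note that conditional on $\cZ_N$ the second block is \emph{not} centered: $\E(\eta_s\mid\cZ_N)=\sum_i \frac{(N_s/N)f_s(Z_i)}{\phi_N(Z_i)}\neq N_s$ in general, and the difference is $O_P(\sqrt N)$. The paper handles this by writing the joint vector as $V_N+B_N$, where $V_N$ is centered at the conditional mean (so the Berry--Esseen/Lindeberg argument conditional on $\cZ_N$ gives a genuine centered Gaussian limit with covariance $\bm R$), and $B_N$ carries the fluctuation $\big(\E(\bm\eta\mid\cZ_N)-\bm N\big)/\sqrt N$, which is a function of $\cZ_N$ alone and is asymptotically normal with covariance $\bm\Psi$ by an ordinary i.i.d.\ CLT. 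These are combined via a characteristic-function identity (Lemma \ref{difflem}) to yield the unconditional joint Gaussian with covariance $\bm Q=\bm R+\bm\Psi$; it is $\bm Q_{22}$, not $\bm R_{22}$, whose inverse enters the Schur complement in Definition \ref{defn:GammaK}. Your sketch would produce the right answer once this decomposition is inserted, but as written the conditional-on-$\cZ_N$ joint CLT with centering at $\bm N$ does not hold.

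On the de-conditioning step you correctly flag as the principal obstacle: the paper, having obtained the unconditional joint Gaussian limit for $\big((\vec{\bm B}_N-\vec{\bm\nu}_N)/\sqrt N,\ (\bm\eta-\bm N)/\sqrt N\big)$, simply asserts the conditional limit given $\Delta_N=\bm 0$; in the null case (Theorem \ref{nullasmS}) the analogous passage is justified by citing Holst's conditional limit theorem. Your proposed route via a multivariate local limit theorem for $\bm\eta$ is a legitimate alternative and arguably more transparent about what is being used.
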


The joint normality of the vector $\vec{\bm A}_N$ implies the normality of linear functions of $\vec{\bm A}_N $, in particular the MCM statistic (recall \eqref{eq:RKN}), which can be re-written as $R_{K, N}= \bm 1^\top \vec{\bm A}_N$. Therefore, 
\begin{align*}
\frac{R_{K, N} - \bm 1^\top \vec{\bm \mu}_{N}}{\sqrt N} \xrightarrow{D} N_{{K\choose 2}}(\bm 0, \bm 1^\top\bm \Gamma_{f_1, f_2, \ldots, f_K}\bm 1).
\end{align*}
Similarly, for the MMCT statistic, \eqref{eq:clt_H1} implies
$$\frac{(\vec{\bm A}_N - \vec{\bm \mu}_N)^\top \bm \Gamma_{f_1, f_2, \ldots, f_K}^{-1} (\vec{\bm A}_N - \vec{\bm \mu}_N) }{N} \xrightarrow{D} \chi_{\binom{K}{2}}^2.$$ 
Even though the general expression for the covariance matrix $\bm \Gamma_{f_1, f_2, \ldots, f_k}$ (Definition \ref{defn:GammaK} in Appendix \ref{secalt}) can be complicated, it simplifies nicely for the case $K=2$.  To this end, recall the  2-sample cross match statistic  $R_{2, N}$ from \eqref{eq:R2N}. Let $\mathscr{X}$ denote the pooled sample. Then, by \eqref{eq:muNst_sample}, 
\begin{align}\label{eq:cm_mean}
\E_{H_1}(R_{2, N}|\mathscr{X})=\sum_{1 \leq a, b \leq 2} \sum_{i=1}^{N_a}\sum_{j=1}^{N_b} \frac{N_1}{N}  \frac{N_2}{N}  \frac{f_1(X_i^{(a)}) f_2(X_i^{(b)}) }{\phi_N(X_i^{(a)}) \phi_N(X_i^{(b)})} e(X_i^{(a)}, X_j^{(b)} ).
\end{align}
We  now have the following result for the 2-sample cross match test, which is a straightforward calculation from \eqref{eq:clt_H1} above.

\begin{figure*}[h]
\begin{minipage}[l]{0.49\textwidth}
\hspace{-0.4in}
\begin{center} 
\small{(a)}
\end{center}
\includegraphics[width=3in,height=2.2in]
    {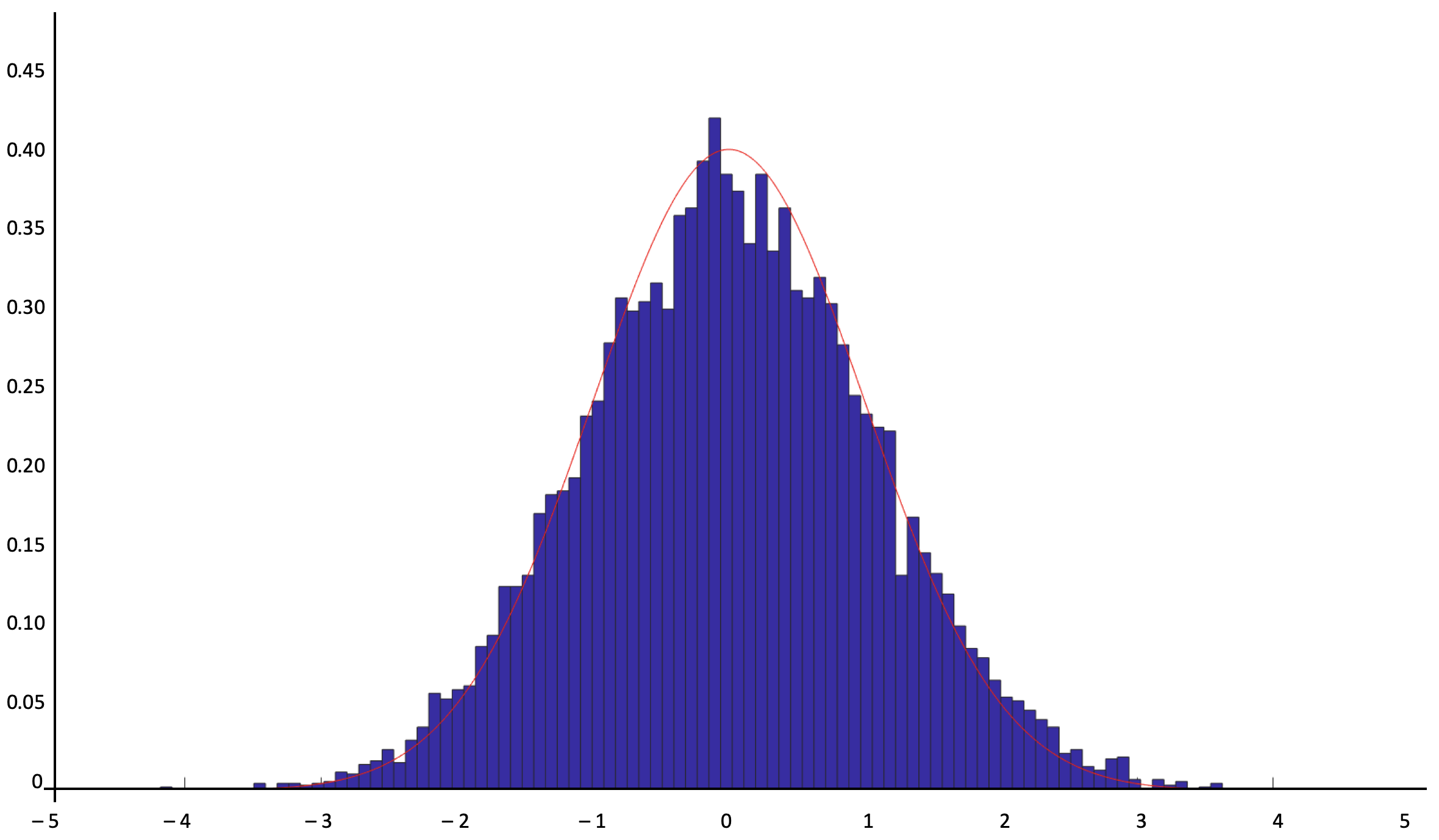}\\
\vspace{-0.4in}

\end{minipage}
\begin{minipage}[l]{0.49\textwidth}
\hspace{-0.25in}
\begin{center} 
\small{(b)}
\end{center}
\includegraphics[width=3in,height=2.2in]
    {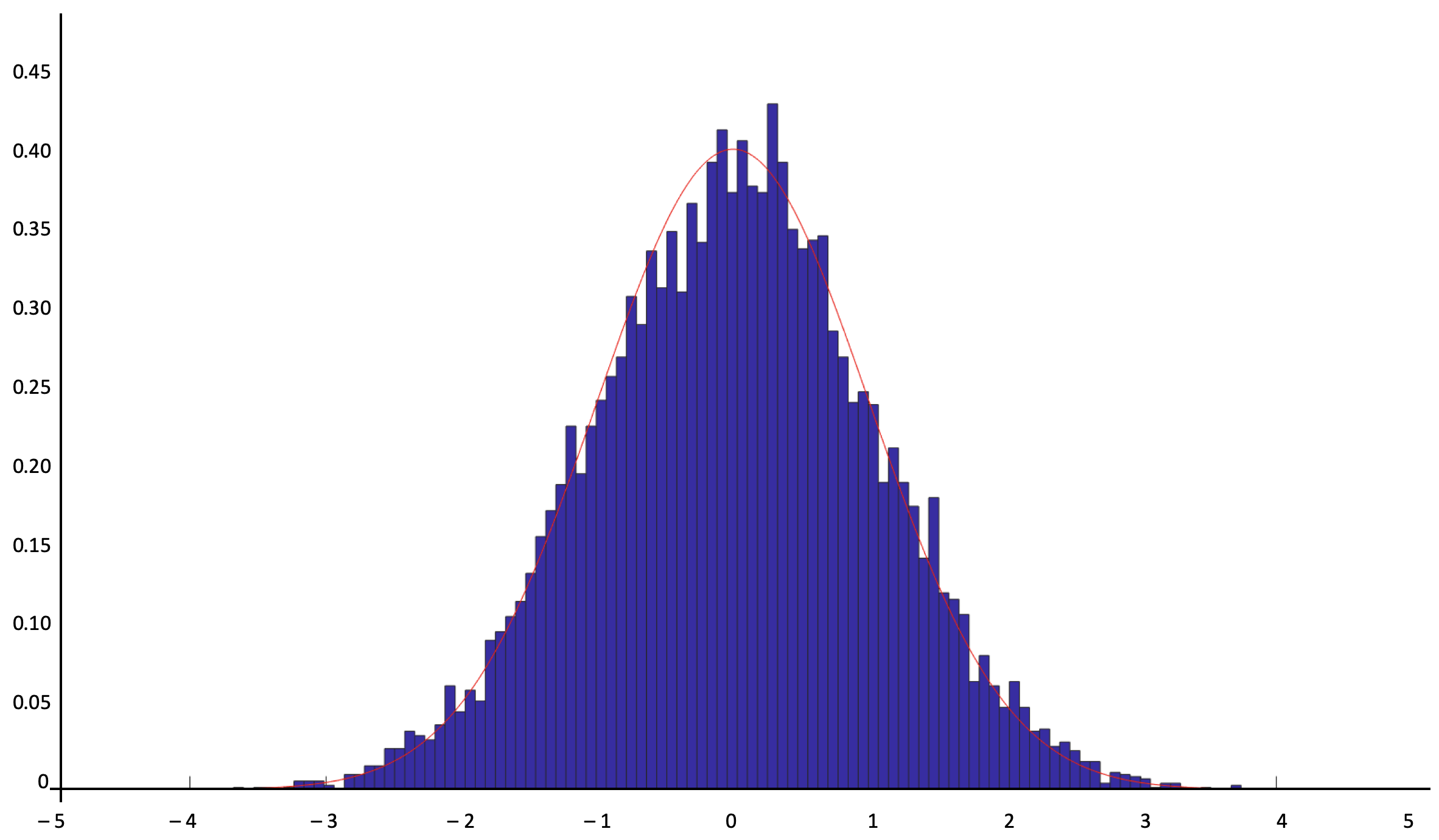}\\
\vspace{-0.4in}
\end{minipage}
\bigskip
\caption{\small{Histogram of the centered CM statistic (over 1000 iterations) and the predicted normal density (the red curve) when (a) $f_1=N_{10}(\bm 0, \mathrm I)$ and $f_2=N_{10}(\bm 1, \mathrm I)$ (normal location alternatives) and (b) $f_1=N_{10}(\bm 0, \mathrm I)$ and $f_2=N_{10}(\bm 0, 2\mathrm I)$ (normal scale alternatives).}}
\label{fig:crossmatch_H1}
\end{figure*}

\begin{cor}\label{rosenalt} For 2-sample cross match statistic $R_{2, N}$ as in \eqref{eq:R2N},  as $N \rightarrow \infty$, 
$$\frac{R_{2, N}- \E_{H_1}(R_{2, N}|\mathscr{X})}{\sqrt N} \xrightarrow{D} N(0, \gamma_{f_1, f_2}^2), $$
where $\E_{H_1}(R_{2, N}|\mathscr{X})$ is as in \eqref{eq:cm_mean} and 
$$\gamma_{f_1, f_2}^2 := p_1 p_2\left\{\int \frac{f_1(z)f_2(z) (p_1^2 f_1^2(z)+p_2^2 f_2^2(z))}{\phi(z)^3 } \mathrm d z - \left( \int \frac{f_1(z)f_2(z) (p_2 f_2(z)-p_1 f_1(z))}{\phi(z)^2}  \mathrm d z \right)^2\right\},$$
with $\phi:=p_1 f_1 + p_2 f_2$. 
\end{cor}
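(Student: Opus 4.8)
The plan is to read off Corollary \ref{rosenalt} as the $K=2$ specialization of the general central limit theorem, Theorem \ref{altclt}, and then simplify the resulting scalar variance. When $K=2$ we have $\binom{K}{2}=1$, so the vector $\vec{\bm A}_N$ collapses to its single coordinate $a_{12}=R_{2, N}$ (recall \eqref{eq:R2N}), and the vector of conditional means $\vec{\bm \mu}_N$ collapses to $\mu_N(1,2)$, which by \eqref{eq:muNst_sample} is exactly the quantity $\E_{H_1}(R_{2, N}\mid\mathscr{X})$ displayed in \eqref{eq:cm_mean}. Thus the multivariate convergence \eqref{eq:clt_H1} degenerates to the univariate statement
\[
\frac{R_{2, N}-\E_{H_1}(R_{2, N}\mid\mathscr{X})}{\sqrt N}\dto N\bigl(0,\gamma_{f_1, f_2}^2\bigr),
\]
where $\gamma_{f_1, f_2}^2$ is the unique entry of the $1\times 1$ matrix $\bm\Gamma_{f_1, f_2}$ from Definition \ref{defn:GammaK}. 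Everything except the identification of this scalar is automatic, so the real content is to evaluate that single entry and match it to the stated closed form with $\phi=p_1 f_1+p_2 f_2$.

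To carry out the simplification I would work directly with the edge representation \eqref{eq:newbdefn}: for $K=2$, $R_{2, N}\stackrel{D}{=}b_{12}\mid\{\bm\eta=\bm N\}$, where $b_{12}=\sum_{i\ne j}e(Z_i,Z_j)\bm 1\{L_i=1,L_j=2\}$ is a sum over the $I=N/2$ matched pairs of the indicator that the pair is a cross-pair. The key structural feature of the minimum matching is that its edges are vertex-disjoint, so that under the bootstrap (independent) labeling these edge indicators are conditionally independent given the pooled sample; this makes the conditional variance of $b_{12}$ a sum over edges with no covariance terms. Using the same localization of the minimum matching that underlies Theorem \ref{consmain} (matched pairs asymptotically co-localize, so each endpoint of an edge near $z$ carries label $s$ with the limiting conditional probability $p_s f_s(z)/\phi(z)$), each per-edge second moment converges to an integral against $\phi$. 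I expect this to produce the first integral $\int \frac{f_1 f_2(p_1^2 f_1^2+p_2^2 f_2^2)}{\phi^3}$ as the limiting per-sample variance of the labeling, which after scaling gives exactly the first term of $\gamma_{f_1, f_2}^2$.

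The main obstacle is the second, subtracted integral, which arises not from the free labeling but from the constraint $\bm\eta=\bm N$ that fixes the class sizes. The natural route is a law-of-total-variance (Hájek-type projection) argument: conditioning the edge indicators on the class-count constraint $\bm\eta=\bm N$ subtracts a term governed by the covariance between $b_{12}$ and the class counts, and the integrand $\frac{f_1 f_2(p_2 f_2-p_1 f_1)}{\phi^2}$ is precisely the localized form of this edge/class-count covariance. Making this rigorous is the delicate step, because the constraint couples the edges globally and must be reconciled with the fact that the data are stratified (exactly $N_s$ points from $f_s$) rather than i.i.d.\ from $\phi$; one must verify that the conditioning correction and the fluctuation of the pooled sample around $\phi$ combine to give exactly the squared integral in the statement, and that the centering $\mu_N(1,2)$ in \eqref{eq:muNst_sample} differs from the true conditional mean only at order $o(\sqrt N)$. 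Since all of this is already packaged into the covariance of Definition \ref{defn:GammaK}, the cleanest presentation is to invoke Theorem \ref{altclt} and then perform the bookkeeping of substituting $K=2$ and simplifying the integrals; the direct derivation above serves as the check that the two terms are, respectively, the labeling variance and the class-size correction.
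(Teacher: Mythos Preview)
Your proposal is correct and matches the paper's approach: the paper simply states that Corollary \ref{rosenalt} is ``a straightforward calculation from \eqref{eq:clt_H1},'' i.e., specialize Theorem \ref{altclt} to $K=2$ and read off the $1\times 1$ variance $\bm\Gamma_{f_1,f_2}=\bm Q_{11}-\bm Q_{12}\bm Q_{22}^{-1}\bm Q_{12}^\top$ from Definition \ref{defn:GammaK}. Your heuristic discussion of the two terms (labeling variance and class-size correction) correctly explains the structure of that formula, but the actual proof need only substitute $K=2$ into \eqref{eq:Qmain1}--\eqref{eq:Qmain2}, use $\bar h_{12}(z,z)=2p_1p_2f_1(z)f_2(z)/\phi(z)^2$, $1-\bar h_{12}(z,z)=(p_1^2f_1^2+p_2^2f_2^2)/\phi^2$, $1-2p_1f_1/\phi=(p_2f_2-p_1f_1)/\phi$, and $q_{22}(1,1)=p_1p_2$, and then simplify; no separate verification of the centering or the conditioning correction is required beyond what Theorem \ref{altclt} already provides.
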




The results above add to our mathematical understanding of the alternative properties of matching-based tests, which up till now have been largely unexplored. Figure \ref{fig:crossmatch_H1} shows the histogram of the centered CM statistic computed using 600 samples from one distribution and 400 samples from another, repeated over 10000 iterations, for normal location and scale alternatives, and density of the corresponding limiting normal distribution (the red curve), as predicted by corollary above. The plots validate the asymptotic results and show that the normal approximation is quite accurate even for  moderate sample sizes.

\section{Application to Single Cell RNA Sequencing Data}
\label{sec:applications}

In this section we apply the tests described above to  single cell transcriptomics data obtained from the peripheral blood, cancer tissue and tumor-adjacent normal tissue of human subjects with hepatocellular carcinoma and non-small-cell lung cancer. Our goal is to investigate how biochemical metabolic pathways change across immune cells in a cancer environment, depending on the location of the tissue. We begin with a short background on biological pathways and the single cell RNA sequencing data.

Every tissue in the human body comprises of numerous different cell types, and each cell in turn contains tens of thousands of genes. The function of a tissue or an organ is rarely driven by a single unique gene, and analogously, complex disorders of organ dysfunction affect multiple genes. Therefore, to understand complex diseases, a systems biology approach examines sets or functional modules of related genes, called biological pathways. Because diseases such as cancer result from different combinations of perturbed gene activities, grouping genes into functional sets can often provide deeper insights into the underlying biological system. Indeed, the activity of certain pathways, particularly those that regulate cellular metabolism, has been found to be a strong predictor of complex phenotypes and response to treatment, both at the level of cells as well as that of individual patients \cite{pathwayTherapy1, pathwayTherapy0}.

A {\it biological pathway} can be defined as a collection of molecules that coordinate to perform a specific action or change in the cell. This change could involve production of a new molecule, movement, growth or a physical transformation, or even cell death. While the activity of a particular pathway can be understood qualitatively based on the phenotypic changes in a cell, its quantitative estimation relies on the relative proportion of RNA molecules produced. The use of gene expression as a proxy for activity rests on the notion that the amount of mRNA molecules produced represent the economic resources of the cell \cite{CostOfGene}.  Despite an understanding of how individual molecules in a biological pathway orchestrate a particular cellular function, it remains unclear whether the distribution of certain gene modules, and by proxy the corresponding pathway activities, are shared across cell types \cite{metabcons}. It has recently been shown that vastly different cell types contain similar ratios of metabolic enzymes, and tightly control the amounts of specialized proteins produced \cite{enzymeCell}, highlighting previously unappreciated similarities among cell types. Nonetheless, the extent to which the distribution of relative mRNA molecules for genes in a given pathway might be consistent across different cell types remains elusive. Are there certain pathways which maintain a similar activity across cell types? This question is of fundamental significance because if true, it suggests a widespread design principle of cell biology.

With the advent of single cell RNA sequencing, it is now possible to study distinct but closely related cell populations \cite{RNAreview1} and examine the aforementioned question. Single cell RNA-sequencing (scRNA-seq) allows us to measure gene expression information from tens of thousands of individual cells, unraveling the cellular heterogeneity of a tissue in unprecedented detail. The resulting data can be thought of as a $c \times g$ matrix, $\bm \eta=((\eta_{ab}))_{1 \leq a \leq c, 1 \leq  b \leq g}$, where $c$ corresponds to the number of cells, and $g$ refers to the number of genes, and each entry $\eta_{ab}$ corresponds to the number of RNA molecules detected for a given gene $a$ in some cell $b$. The high-dimensional, multisample (corresponding to multiple cell-types) nature of a typical scRNA-seq experiment makes this a fitting application of the multi-sample crossmatch test.

\subsection{Data Overview and Study Setup}

We apply our method on scRNA-seq data generated from purified T cell populations found in three tissue locations: (a) peripheral blood (hereafter referred to as \textsf{blood}), (b) tumor-infiltrating immune cells (hereafter referred to as \textsf{tumor}), and (c) normal tissue adjacent to the tumor from the same organ (hereafter referred to as \textsf{adj. normal}). We examined the following two datasets, where T cells extracted from each location were assigned a particular subtype based on flow cytometry and expression of known canonical cell surface proteins.\footnote{Both the datasets used are open access, and available in the Gene Expression Omnibus (GEO). The raw sequencing data for T cells for the Hepatocellular Carcinoma dataset can be obtained from the GEO entry GSE98638 and the European Genome-phenome Archive database entry EGAS00001002072. The single cell sequencing data corresponding to the Non-Small-Cell Lung cancer case study can be found at GSE99254 and EGAS00001002430.} 

\begin{itemize}

\item[(1)] \textsf{Non-Small-Cell Lung Cancer} (NSCLC \cite{LungSC}) dataset: Here, the T cell subtypes found at each location were: $CD8^{+}$ Cytotoxic, $CD4^{+}$ Naive, $CD4^{+}$ Regulatory T cells (denoted by $T_{\mathrm{reg}}$), and $CD4^{+}$ Naive Helper.

\item[(2)] \textsf{Hepatocellular Carcinoma} (HCC \cite{liversinglecell}) dataset: Here, the T cell subtypes profiled were: $CD8^{+}$ Cytotoxic, $CD4^{+}$ Naive, and $CD4^{+}$ Regulatory T cells ($T_{\mathrm{reg}}$). 

\end{itemize}

Recall that single cell data is extremely sparse count data with complex correlation structure. Nonetheless, the two datasets are comparable in terms of the sequencing protocol used, data generation, and the technical quality of the data (Table \ref{tab:tabdata}). In both the datasets we used deep sequencing was performed, ensuring our ability to detect genes with low expression. The summary of the number of cells sequenced for each cell type is provided in Table \ref{tab:Tcelldist}. 

\begin{table}[h]
	\begin{center}
		\begin{tabular}{rrr}
			& \textsf{NSCLC} & \textsf{HCC} \\ \hline 
			Mean reads per cell & 1,040,000 & 1,100,000 \\
			Median genes per cell & 2,859 & 2,702 \\
			Total number of T cells profiled & 12,210 & 4,794 \\ 
			\hline
		\end{tabular}
	\caption{\small{Summary characteristics of the two scRNA-seq datasets: \textsf{Non-Small-Cell Lung Cancer} (NSCLC) and \textsf{Hepatocellular Carcinoma} (HCC).} For scRNA-seq it has been shown that with half a million reads per cell, most genes expressed can be detected, and that one million reads are sufficient to estimate the mean and variance of gene expression \cite{SeqDepth}.}
\label{tab:tabdata} 
	\end{center}
	\vspace{-0.15in}	
\end{table}

\normalsize

\begin{table}[h]
\small{
	\begin{center} 
		\begin{tabular}{rrrrr}
		\textsf{Non-small-cell Lung Cancer} & ($K=4$ groups)  &   &  \\
		\hline
			Tissue Type & $CD8^{+}$ Cytotoxic & $CD4^{+}$ Naive & $CD4^{+} \; T_{\mathrm{reg}}$ & $CD4^{+}$ Helper \\\hline
			\textsf{Adj. Normal} (2115 cells)  & 934  & 655  &  288  & 238 \\
			\textsf{Tumor} (5835 cells) & 2182  &  1591  &  1170  & 892 \\
			\textsf{Blood} (4260 cells) & 1323 & 1254  & 1011 & 672 \\
			\hline
			&  &   &    & \\
			\textsf{Hepatocellular Carcinoma} & ($K=3$ groups)  &   &  \\
			\hline
			Tissue Type & $CD8^{+}$ Cytotoxic &  $CD4^{+}$ Naive & $CD4^{+} \; T_{\mathrm{reg}}$ \\\hline
			\textsf{Adj. Normal} (997 cells) &  412 &  406  & 179 \\
			\textsf{Tumor} (2170 cells) & 563  & 515  & 549  \\
			\textsf{Blood} (1627 cells) & 777 & 606 & 787 \\ 
			\hline 
		\end{tabular}
		\caption{\small{The number of cells sequenced for various T cell subtypes in each of the two cancer settings. These correspond to the sample sizes of the different groups in the $K$-sample hypothesis testing problem. (\textsf{Adj. Normal} denotes tumor-adjacent normal tissue from the same organ, $T_{\mathrm{reg}}$ stands for regulatory T cell, and CD abbreviates \emph{cluster of differentiation} or \emph{classification determinant}, a protocol used for immunophenotyping cells.) }}	\vspace{-0.15in}	
\label{tab:Tcelldist}
	\end{center}
}	
\end{table}

\normalsize

The metabolic state of T cells is implicated in diseases such as cancer, wherein the tumor microenvironment enforces dysfunctional T cell metabolism, thereby negatively affecting their anticancer functionality \cite{TcellMetab}. Thus, for each of the two cancer datasets and in the three tissue locations described above (namely, \textsf{Adj. Normal}, \textsf{Tumor}, and \textsf{Blood}), we examine whether the distribution of gene sets that correspond to biological metabolic pathways are consistent across T cell subtypes (therefore, the number of different T cell subtypes corresponds to the number of classes $K$). We specifically focus on 86 metabolic pathways described in the Kyoto Encyclopedia of Genes and Genomes (KEGG \cite{KEGG}). We obtained a list of genes corresponding to each pathway, and subsequently ascribed genes into 86 subsets, each subset corresponding to a metabolic pathway. Then, for each one of the 3 different tissue locations and for each one of the 86 pathways, we tested the null hypothesis (using the MCMM test \eqref{eq:SKN} described above) that the multivariate distribution of the genes belonging to that particular pathway is alike across the $K$ different T cell subtypes (recall that $K=4$ in the \textsf{Non-Small-Cell Lung Cancer} dataset, and $K=3$ in the \textsf{Hepatocellular Carcinoma}  dataset). In each of the cases the corresponding sample sizes for the $K$ groups are given in the rows of Table \ref{tab:Tcelldist}. The number of genes in a given metabolic pathway ranged approximately between 30 and 110. To account for multiple-hypothesis testing, the resulting $p$-values are adjusted using the Benjamini-Hochberg (BH) correction procedure \cite{BHCorrection}.

This study design allowed us to understand which metabolic pathways change in distribution across distinct, but closely-related, T cell subtypes. In particular, assessing the distribution of gene sets that belong to a particular metabolic pathway across the T cell subtypes allows us to address the following questions: (1) Which pathways have a similar distribution across T cell subtypes in a given tissue? (2) Are there pathways that have a stable and comparable distribution across T cell subtypes in a normal/healthy tissue, but a heterogenous or perturbed distribution in a tumor? (3) For pathways that have a disparate distribution across the T cell subtypes, which subtypes show the most distinct distribution?

\subsection{Comparing Pathway Distributions Based on Tissue Location}
\label{sec:dataresults_I}

The following is the outcome of the BH-corrected MCMM tests for assesing metabolic pathway distributions across the different T cell subtypes in the two datasets: 

\begin{itemize}

\item In the NSCLC dataset, we found that of the 86 pathways examined, our test did not reject the null hypothesis for merely 35 pathways in the \textsf{Tumor} tissue, compared to 56 and 74 pathways in the \textsf{Blood} and \textsf{Adj. Normal} tissues, respectively.

\item In HCC dataset, the set of pathways for which we failed to reject the null were remarkably alike to that for NSCLC.  Specifically, 41, 63 and 76 metabolic pathways were undistinguishable across T cell subtypes in the \textsf{Tumor}, \text{Blood} and \textsf{Adj. Normal} tissues, respectively.  
\end{itemize} 

The fact that majority of the metabolic pathways do not show evidence for dissimilar distribution across cell types based on our test indicates that the T cell subtypes might be more similar than previously appreciated in terms of how they regulate their basic metabolic machinery.  Further, for each pair of tissue location, we computed the overlap in the pathways for which the null hypothesis was accepted or rejected. As expected, we found that the concordance between the results was substantially higher for \textsf{Blood} and \textsf{Adj. Normal}, than either of these tissues had with the \textsf{Tumor} tissue (this is seen from the off-diagonal values in the tables in Figure  \ref{fig:twobytwo}).

\begin{figure}[h]
	\begin{center}
		\includegraphics[width=6.1in]{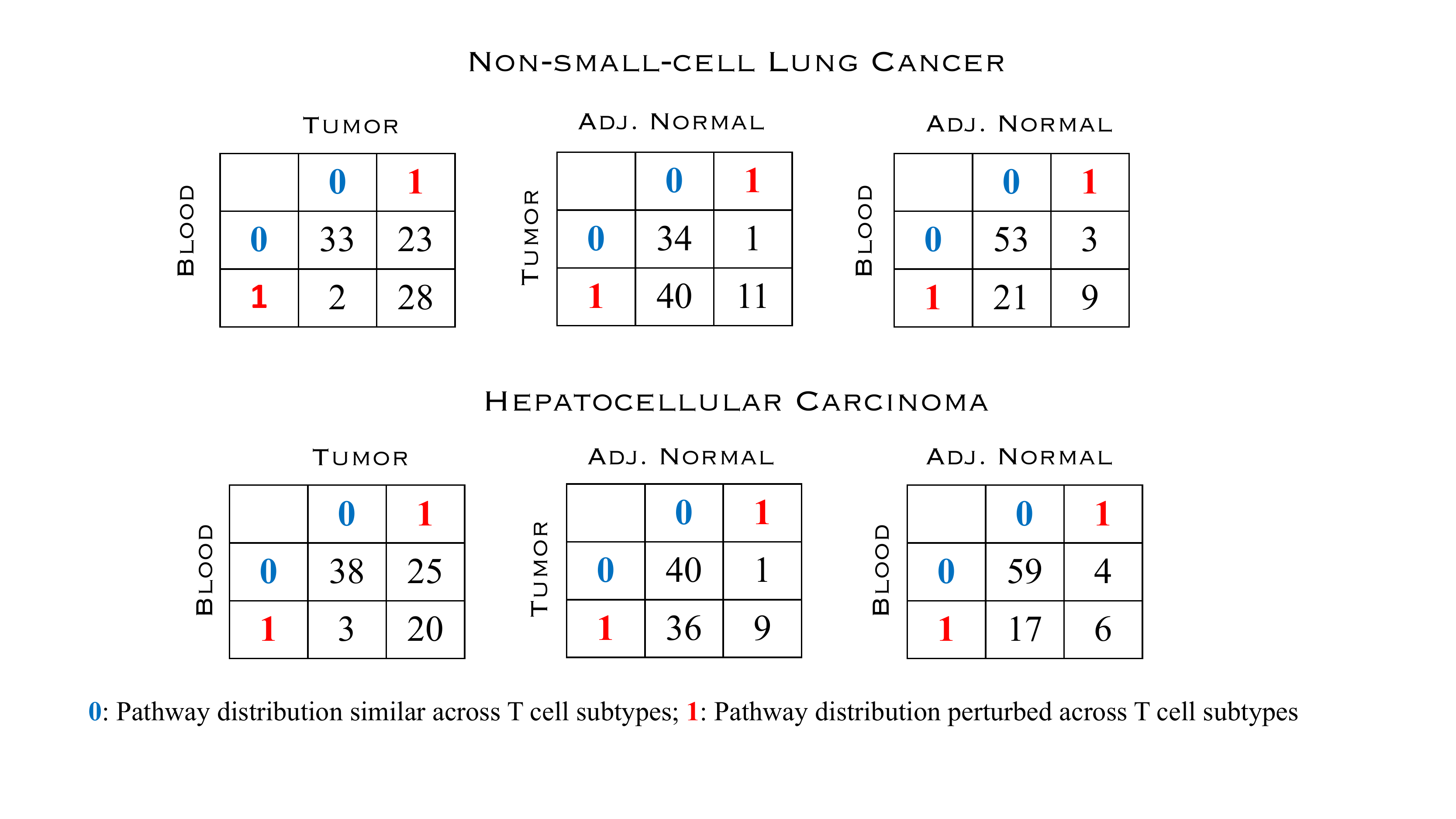}\vspace{-0.35in}
	\end{center} 
\caption{\small{A set of $2 \times 2$ tables showing how many of the 86 hypothesis (corresponding to the pathways) were accepted/rejected for each of 3 pairs of tissue locations. Here, $\blue{\textbf{0}}$ stands for pathways whose distribution are either similar/stable (null hypothesis accepted), and $\red{\textbf{1}}$ stands for pathways whose distributions are perturbed/heterogenous (null hypothesis rejected), across the T cell subtypes in two types of solid organ malignancies.}} \label{fig:twobytwo}
\end{figure}
\normalsize

Interestingly, we found that 8 metabolic pathways were differentially distributed across the T cell subtypes in each tissue examined in NSCLC whereas 5 pathways exhibited this pattern in the HCC dataset (Figure \ref{fig:VennDiag}). The {\it purine metabolism} pathway was a common pathway shared by both datasets which showed evidence for heterogenous distribution among the T cell subtypes in every tissue. This discovery suggests that purine metabolism is fundamentally different even among the closely related T cell subtypes. In order to find out which T cell subtype contributed most to this difference for the metabolic pathways that emerged as being heterogeneously distributed in all tissues, we employed a class selection procedure described below. To select a single class, we looked at all the pairwise comparisons where the null was rejected, and then identified the class that was common across all the cases of rejection (Figure \ref{fig:VennDiag}c-d).

We found that $CD4^{+}$ regulatory T cells ($T_{\mathrm{regs}}$) were the strongest contributors as to why a pathway, such as purine metabolism, was detected as being differentially distributed. It is important to note that, purine metabolism regulates the balance of proinflammatory and immunosuppressive molecules produced by T cells, and the activation of the purinergic receptor P2X7 has been shown to inhibit the immunosuppressive functions specifically in $T_{\mathrm{regs}}$ \cite{PurineTReg}. Thus, the fact that our test discovered purine metabolism as being differentially distributed across T cell subtypes in both studies, and specifically in $T_\mathrm{{regs}}$, showcases its ability to unearth meaningful biological phenomena.

\begin{figure}[h]
	\begin{center}
		\includegraphics[width=6.1in]{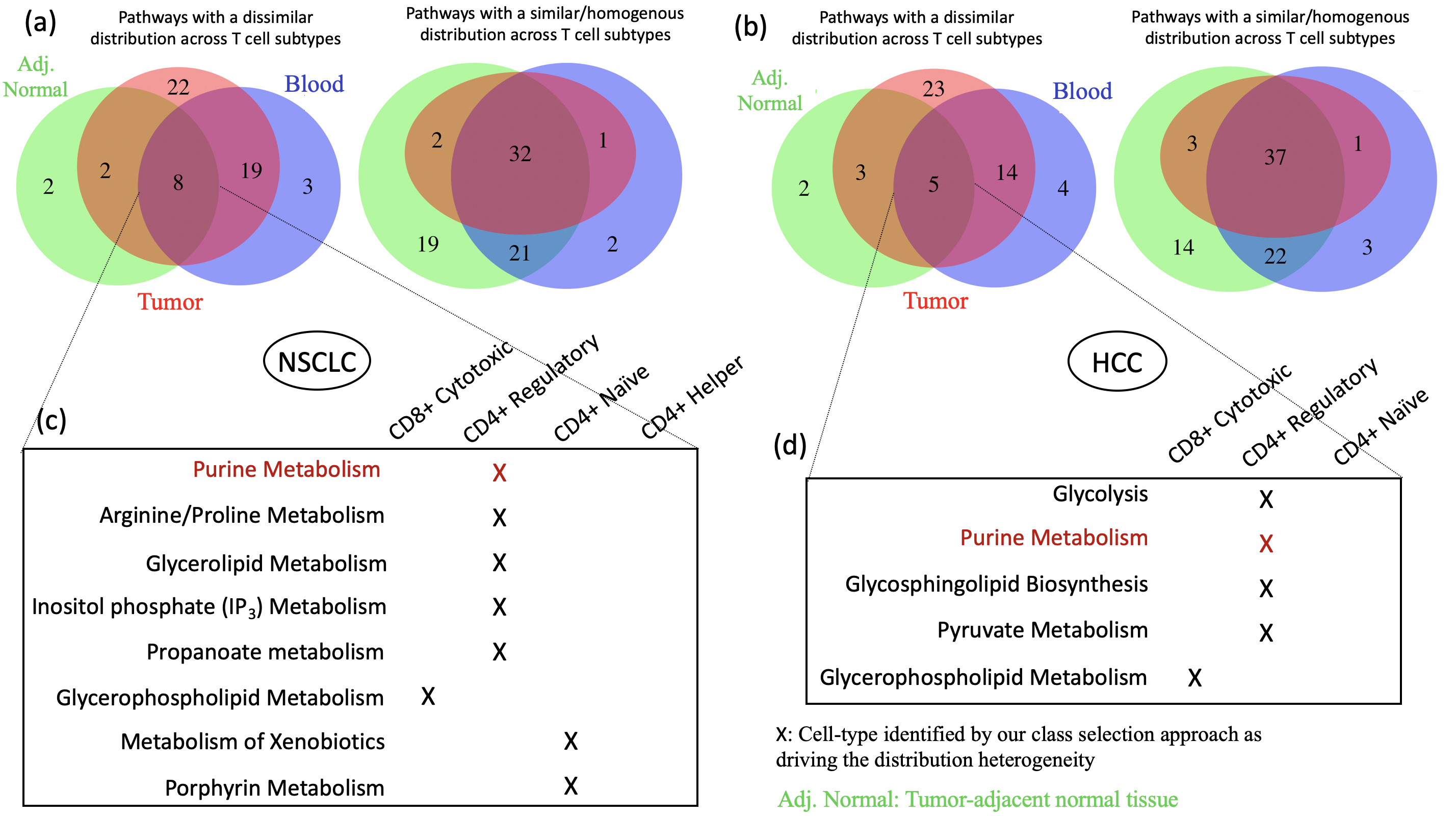}
	\end{center}
	\caption{\small{(a), (b) Venn diagrams showing the overlap among the metabolic pathways that demonstrated evidence for differential and similar distributions across the T cell subtypes in NSCLC and HCC, respectively.  For the pathways that were heterogenous among the T cell classes across all 3 tissue types, we used the \emph{class selection} procedure to identify which T cell subtype had the most disparate distribution (c), (d). Differential distribution for the purine metabolism pathway, driven largely by $T_{\mathrm{regs}}$, was observed in both NSCLC and HCC. }}
\label{fig:VennDiag}	
\end{figure}

\subsection{Differentially Distributed Pathways as Biological and Algorithmic Features}
\label{sec:dataresults_II}

In the NSCLC dataset, we found that 51 pathways showed evidence for differential distribution across the tumor-infiltrating T cell subtypes, whereas 45 pathways exhibited this pattern in the tumor-infiltrating T cells in the HCC dataset. We observed that if a metabolic pathway demonstrated an intra-T cell-type heterogenous distribution in \textsf{Blood} and \textsf{Adj. Normal}, that heterogeneity was preserved in the \textsf{Tumor} tissue (in the first and third Venn diagrams in Figure \ref{fig:VennDiag}, the intersection of the green (\textsf{Adj. Normal}) and the blue circles (\textsf{Blood}) is completely contained in the orange circle corresponding to \textsf{Tumor}). In other words, rejecting the null for a particular pathway for T cells in the \textsf{Adj. Normal} and \textsf{Blood} tissues was useful in prognosticating that pathway's behavior in the tumor-infiltrating T cells.  On the other hand, we found that certain pathways that demonstrated a dissimilar distribution across T cell subtypes in the \textsf{Tumor} and \textsf{Adj. Normal} tissues showed evidence for homogenous distribution in blood (the intersection of the green and orange circles minus the blue circle in the first and third Venn diagrams in Figure \ref{fig:VennDiag}). Specifically, this pattern was true for 2 pathways (sphingolipid metabolism and glycophospholipid synthesis) in the NSCLC dataset, and for 3 pathways (phenylalanine metabolism, oxidative phosphorylation and O-glycan biosynthesis) in the HCC dataset.  This difference can be attributed, at least partially, to the organ-specific function of these pathways. For instance, ceramide, a central molecule in the sphingolipid metabolism, regulates endothelial permeability and airway smooth muscle function in the lungs \cite{CeramideLung}, and increased sphingolipid metabolism is a hallmark of lung cancer \cite{sphingLungCancer}. Similarly, phenylalanine hydroxylase, the main enzyme in the phenylalanine metabolism pathway is active exclusively in the liver \cite{PKULiver}, and otherwise inactive in the blood. Hence, in light of the  organ-specific roles of these pathways, it is rather reassuring that our test appropriately rejects the null hypothesis and captures their heterogenous distribution among the T cell subtypes based on the tissue location.

Clustering and cell type identification are crucial steps in single cell data analysis. scRNA-seq analysis pipelines often first identify highly variable genes in the dataset, and subsequently use those genes as an input to the t-distributed stochastic neighbor embedding (tSNE) algorithm. Even with this approach, however, single cell analysis pipelines typically fail to resolve the different T cell populations, and rarely identify the different immune cell subtypes in any reliable or discrete fashion. We wondered whether in practice, one might be able to utilize our test to identify differentially distributed pathways and then use this information to improve the tSNE-based clustering approach. To investigate whether the pathways that emerge as being differentially distributed can serve as meaningful features in identifying cell types, we focussed on the pathways that were consistently identified as being differentially distributed across the T cells in all three tissue types in NSCLC. To our surprise, we found that using the subset of genes corresponding to the pathways that our analysis identified as being heterogeneously distributed improves the clustering results (Figure \ref{fig:tSNE}(a)). Moreover, the clustering results converge with, and act as an indirect validation for our class selection approach because the latter identified $CD4^{+}$ $T_{\mathrm{regs}}$ as the cell type with the dissimilar distribution for metabolic pathways such as $IP_3$ metabolism, purine metabolism, and arginine and proline metabolism. When we used the genes comprising these pathways as the input to tSNE (Fig. \ref{fig:tSNE}a), we observed that indeed the $CD4^{+}$ $T_{\mathrm{regs}}$ became more easily visually distinguishable compared to using the genes in other pathways which had a similar distribution across the T cell subtypes (Figure \ref{fig:tSNE}(b)). 

\begin{figure}
	\begin{center}
		\includegraphics[width=6.1in]{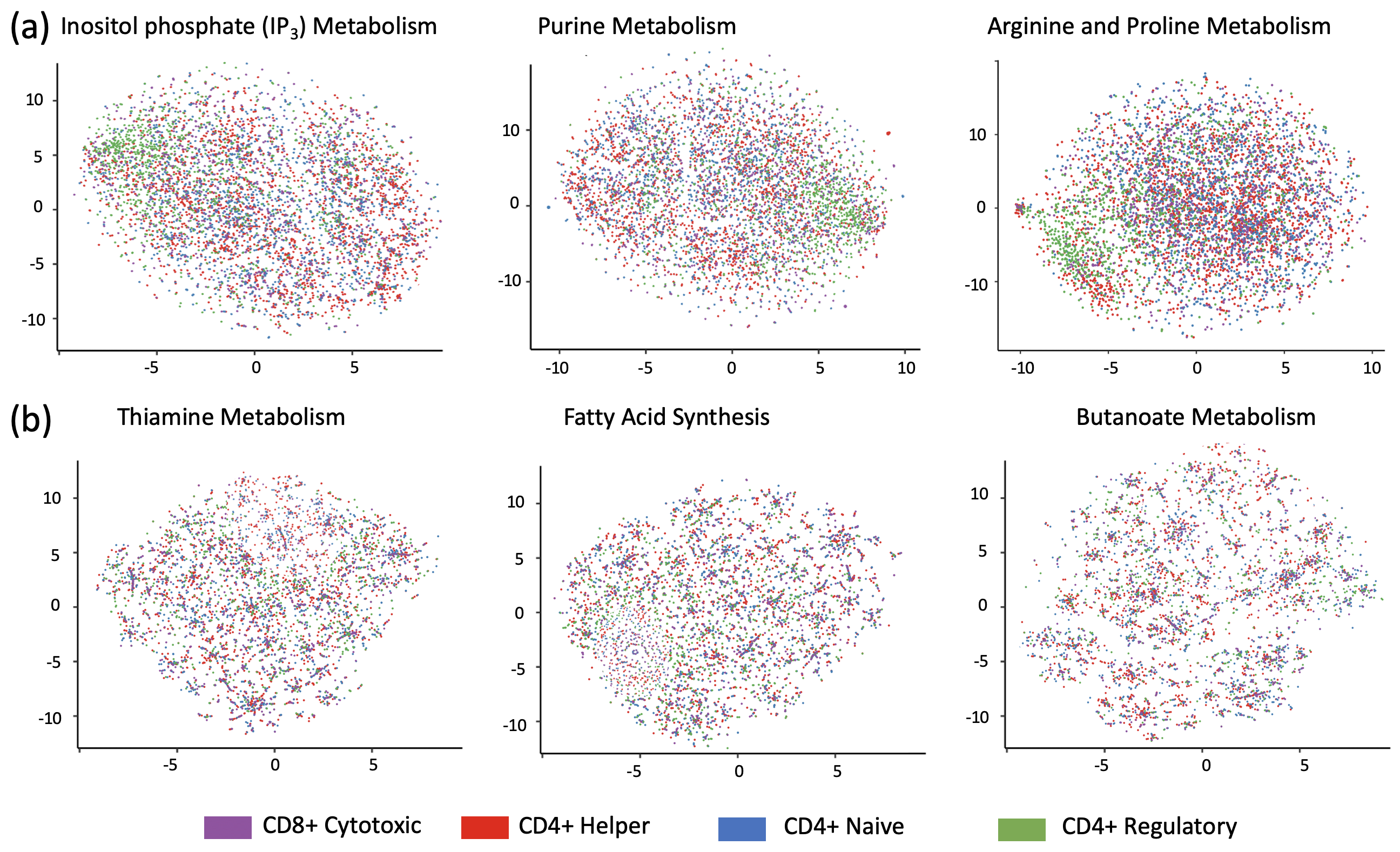}
	\end{center}
	\caption{\small{Results of the tSNE analysis performed using a specific pathway to cluster the 4 different tumor-infiltrating T cell subtypes in NSCLC: (a) pathways that were identified as differentially distributed successfully manage to segregate $CD4^{+}$ $T_{\mathrm{regs}}$ as predicted by our class selection procedure, whereas (b) pathways that were identified as being similar across the T cell subtypes produce a more patchy clustering wherein the subtypes are visually indistinguishable. All tSNE plots were generated with \emph{perplexity} = 30.}} \label{fig:tSNE}\vspace{-0.1in}
\end{figure}

\section{Discussion}

This paper introduces a novel graph-based nonparametric test for comparing multiple multivariate distributions. Using optimal matching as the basis, we demonstrate that our test, in addition to other multisample generalizations of Rosenbaum's crossmatch test \cite{rosen}, is distribution-free, computationally efficient, and consistent for general alternatives, making it particularly attractive for modern high-dimensional statistical applications. We also obtain a  joint central limit theorem for the entire matrix of cross-counts, and, hence, derive a distributional limit theorem for the test statistics, under general alternatives. Our numerical experiments demonstrate that the proposed method outperforms other non-parametric graph-based multisample methods as well as commonly used parametric tests, in a variety of simulation settings. Lastly, we showcase the utility of this test in the field of single cell transcriptomics where we used our test to address an important question in signal transduction and cell biology. Our multisample procedure uncovered revealing patterns about how closely-related, key immune cells in the body (namely, T cell subtypes), might alter their metabolic machinery in solid organ malignancies, particularly depending on the tissue location. We envision that this test and the underlying theoretical intuitions described in this work will find broad applicability in future research that examines hypothesis testing in the multisample, multivariate framework. Furthermore, our test opens up new paradigms of investigation for practitioners to assess its properties and its feasibility in being adapted to other algorithms, as we have demonstrated here through its usage as a pre-processing step in tSNE-based data visualization.

\small

\normalsize

\appendix

\section{Proofs from Section \ref{sec1}}\label{app1}

In this section we present the proofs of the results from Section \ref{sec1}. The proof of Proposition \ref{ppn:mean_var}, which computes the mean and the variance of $\vec{\bm A}_N$ is given in Section \ref{sec:pfmean_var}. The proof of the asymptotic null distribution in Theorem \ref{nullasmS} is described in Section \ref{sec:pfdistribution_A}, and the proof of the consistency (Theorem \ref{consmain}) is in Section \ref{app2}.

\subsection{Proof of Proposition \ref{ppn:mean_var}}
\label{sec:pfmean_var}
Since the conditional distribution of $\bm A_N$ given $\mathscr{X}$ equals the unconditonal distribution of $\bm A_N$ under the null, we can assume that the edges of $\sG(\mathscr{X})$ are fixed, say $\{(1,2), (3,4),~\ldots,(N-1,N)\}$ without loss of generality. For notational convenience, let us define $\bm \eta := (\eta_1,\ldots,\eta_K)$ and $\bm N := (N_1,\ldots,N_K)$. Following the notations in subsection \ref{sec:pfdistribution_A}, it follows that:
\begin{equation}\label{exproof}
\mathbb{E}_{H_0}(a_{st}) = \sum_{j=1}^I \mathbb{P}_{H_0}\left(\{L_{2j-1}, L_{2j}\} = \{s,t\}\big|\bm \eta = \bm N\right).
\end{equation}
By an easy sampling without replacement argument, for each $j \leq I$, 
\[\mathbb{P}_{H_0}\left(\{L_{2j-1}, L_{2j}\} = \{s,t\}\big| \bm \eta = \bm N \right) =   \left\{
\begin{array}{ll}
\frac{2N_sN_t}{N(N-1)}& \textrm{if}~s< t, \\ \\ 
\frac{N_s(N_s-1)}{N(N-1)} & \textrm{if}~s= t.
\end{array} 
\right. \] 
The result in \eqref{eq:expectationH0} now follows from \eqref{exproof} on observing that $I = N/2$. 

Next, note that for $1\leq s_1 \neq s_2 \leq K$,
\begin{align*}
\mathrm{Var}_{H_0} & (a_{s_1s_2}) = \sum_{j=1}^I \mathrm{Var}_{H_0} \left(\mathbf{1}\left\{\{L_{2j-1}, L_{2j}\} = \{s_1,s_2\}\right\}\big|\bm \eta = \bm N\right)\\
&+\sum_{j_1\neq j_2} \mathrm{Cov}_{H_0}\left(\mathbf{1}\left\{\{L_{2j_1-1}, L_{2j_1}\} = \{s_1,s_2\}\right\}, \mathbf{1}\left\{\{L_{2j_2-1}, L_{2j_2}\} = \{s_1,s_2\}\right\}\big|\bm \eta = \bm N\right).
\end{align*}
First, note that
\begin{align}\label{ce1}
\sum_{j=1}^I \mathrm{Var}_{H_0} \left(\mathbf{1}\left\{\{L_{2j-1}, L_{2j}\} = \{s_1,s_2\}\right\}\big|\bm \eta = \bm N\right) &= I\left[\frac{2N_{s_1}N_{s_2}}{N(N-1)}\right]\left[1-\frac{2N_{s_1}N_{s_2}}{N(N-1)}\right]\nonumber\\
&=\frac{N_{s_1}N_{s_2}}{N-1}\left[1-\frac{2N_{s_1}N_{s_2}}{N(N-1)}\right]. 
\end{align}
Next, we have 
\begin{align}\label{ce2}
& \sum_{j_1\neq j_2} \mathrm{Cov}_{H_0} \left(\mathbf{1}\left\{\{L_{2j_1-1}, L_{2j_1}\} = \{s_1,s_2\}\right\}, \mathbf{1}\left\{\{L_{2j_2-1}, L_{2j_2}\} = \{s_1,s_2\}\right\}\big|\bm \eta = \bm N\right)\nonumber\\
&= \sum_{j_1\neq j_2} \mathbb{P}_{H_0} \left(\{L_{2j_1-1}, L_{2j_1}\} = \{L_{2j_2-1}, L_{2j_2}\} = \{s_1,s_2\}\big|\bm \eta = \bm N\right) - \sum_{j_1 \neq j_2}\left(\frac{2N_{s_1}N_{s_2}}{N(N-1)}\right)^2\nonumber\\
&= I(I-1)\frac{4N_{s_1}N_{s_2} (N_{s_1}-1) (N_{s_2 }-1)}{N(N-1)(N-2)(N-3)} - I(I-1)\left(\frac{2N_{s_1}N_{s_2}}{N(N-1)}\right)^2\nonumber\\
&= \frac{N_{s_1}N_{s_2} (N_{s_1}-1) (N_{s_2 }-1)}{(N-1)(N-3)} - \frac{N_{s_1}N_{s_2}}{N-1}\left[\frac{N_{s_1}N_{s_2}}{N-1}-\frac{2N_{s_1}N_{s_2}}{N(N-1)}\right]
\end{align}
Adding \eqref{ce1} and \eqref{ce2}  gives the expression for $\mathrm{Var}_{H_0} (a_{s_1 s_2})$ given in Proposition \ref{ppn:mean_var}. Next, take $1\leq s_1 \neq s_2 \neq s_3 \leq K$. Then, 
\begin{align}\label{ce33}
\mathbb{E}_{H_0} \left(a_{s_1s_2}a_{s_1s_3}\right) &= \sum_{j_1\neq j_2}\mathbb{P}_{H_0} \left(\{L_{2j_1-1}, L_{2j_1}\} = \{s_1,s_2\},\{L_{2j_2-1}, L_{2j_2}\} = \{s_1,s_3\}\big|\bm \eta = \bm N\right)\nonumber\\
&= I(I-1) \frac{4N_{s_1}(N_{s_1}-1)N_{s_2}N_{s_3}}{N(N-1)(N-2)(N-3)}\nonumber\\
&= \frac{N_{s_1}(N_{s_1}-1)N_{s_2}N_{s_3}}{(N-1)(N-3)}.
\end{align}
	The expression for $\mathrm{Cov}_{H_0}(a_{s_1s_2},s_{s_1,s_3})$ now follows from \eqref{ce33} on observing that:
	$$\left(\mathbb{E}_{H_0} a_{s_1s_2}\right)\left(\mathbb{E}_{H_0} a_{s_1 s_3}\right) = \frac{N_{s_1}^2 N_{s_2}N_{s_3}}{(N-1)^2}.$$ Finally, take $1\leq s_1\neq s_2\neq s_3\neq s_4\leq K$. In this case, 
\begin{align}\label{ce44}
\mathbb{E}_{H_0} \left(a_{s_1s_2}a_{s_3s_4}\right)&=\sum_{j_1\neq j_2}\mathbb{P}_{H_0} \left(\{L_{2j_1-1}, L_{2j_1}\} = \{s_1,s_2\},\{L_{2j_2-1}, L_{2j_2}\} = \{s_3,s_4\}\big|\bm \eta = \bm N\right)\nonumber\\
&= I(I-1) \frac{4N_{s_1}N_{s_2}N_{s_3}N_{s_4}}{N(N-1)(N-2)(N-3)}\nonumber\\
&= \frac{N_{s_1}N_{s_2}N_{s_3}N_{s_4}}{(N-1)(N-3)}.
\end{align}
	The expression for $\mathrm{Cov}_{H_0}(a_{s_1s_2},s_{s_3,s_4})$ now follows from \eqref{ce44} on observing that:
	$$\left(\mathbb{E}_{H_0} a_{s_1s_2}\right)\left(\mathbb{E}_{H_0} a_{s_3 s_4}\right) = \frac{N_{s_1} N_{s_2}N_{s_3}N_{s_4}}{(N-1)^2}.$$
	This completes the proof of Proposition \ref{ppn:mean_var}. \qed
\subsection{Proof of Theorem \ref{nullasmS}}
\label{sec:pfdistribution_A}

Note that it suffices to prove \eqref{ant}. The remaining assertions in Theorem \ref{nullasmS} is an immediate consequence of \eqref{ant}.  The proof of \eqref{ant} proceeds in two-steps: (1) $\frac{1}{\sqrt {N}} (\vec{\bm A}_N- \E_{H_0}(\vec{\bm A}_N)) \dto N(0, \bm \Gamma)$, for some non-negative definite matrix $\bm \Gamma$, and (2) $\frac{1}{N} \Cov_{H_0} (\vec{\bm A}_N) \rightarrow \bm \Gamma$, and $\bm \Gamma$ is invertible. 

We begin with the proof of (1): Denote the pooled sample 
$$\sX=(X_1^{(1)},\ldots, X_{N_1}^{(1)},\ldots, X_{1}^{(K)},\ldots, X_{N_K}^{(K)})$$ (forgetting the labels) as $\cZ_N:=(Z _1, Z_2, \ldots, Z_N)$. Under the null $H_0$, $Z _1, Z_2, \ldots, Z_N$ are i.i.d. $F$ (the unknown null distribution). Let $L_1,\ldots,L_N$ be i.i.d. random variables, taking value in $\{1, 2, \ldots, K\}$, independent of $Z _1,\ldots, Z_N$, such that 
\begin{align}\label{eq:L_prob}
\mathbb{P}(L_1 = s) = \frac{N_s}{N}, \quad \textrm{for all}~ s\in [K].
\end{align}
For each $s \in [K]$, define $\eta_s = \sum_{i=1}^N \bm 1\{L_i = s\} \sim \dBin(N, \frac{N_s}{N})$ and for each $x,y \in \mathbb{R}^d$, let $e(x,y) := \bm 1 \left\{(x,y) \in  E(\sg(\cZ_N \cup \{x, y\}))\right\}$.  Define the $K\times K$ matrix $\bm B_N=(b_{st})_{1 \leq s, t \leq K}$ as follows:
\[b_{st} =   \left\{
\begin{array}{ll}
\sum_{1\leq i\neq j\leq N}e(Z_i,Z_j) \bm 1 \{L_i=s, L_j = t\} & \textrm{if}~s\neq t, \\ \\ 
\frac{1}{2}\sum_{1\leq i\neq j\leq N} e(Z_i, Z_j) \bm 1 \{L_i=L_j = s\} & \textrm{if}~s= t.
\end{array} 
\right. \]
Under $H_0$, it follows from Lemma \ref{condrel} that the conditional distribution of $\bm B_N$ given $(\eta_1,\ldots,\eta_{K}) = (N_1,\ldots,N_{K})$ is same as the distribution of $\bm A_N$. Therefore, it suffices to derive the limiting distribution of $\bm B_N|\{(\eta_1,\ldots,\eta_{K}) = (N_1,\ldots,N_{K})\}$.

To this end, note that conditional on $\cZ_N$, the matching graph $\sG(Z_N)$ is fixed and  since the $I=N/2$ matched edges in the graph are disjoint, the samples in $\cZ_N$ can be (re)-labelled $1, 2, \ldots, N$ such that $\{(1, 2), (3, 4), \ldots, (N-1, N)\}$ are the $I$ matched edges. Then the elements of the matrix $\bm B_N$ can be written as 
$$b_{st}\stackrel{D}=\sum_{j=1}^I \left(\bm 1\{L_{2j-1}=s, L_{2j}=t\} + \bm 1\{L_{2j-1}=t, L_{2j}=s\}\right).$$
Moreover, $\eta_s= \sum_{j=1}^I \bm 1\{L_{2j-1}=s\}+\sum_{j=1}^I \bm 1\{L_{2j}=s\}$. Therefore, conditional on $\cZ_N$ the ${K \choose 2}+K$ vector $\bm V_N:=(\vec{\bm B}_N,\eta_1,\ldots,\eta_K)'$ can be written as the sum of $I$ i.i.d. random vectors. This implies, under $H_0$, as $N \rightarrow \infty$, by the multivariate CLT, 
\begin{align}\label{eq:VN}
\frac{\bm V_N- \E_{H_0}(\bm V_N)}{\sqrt I} \Big|\cZ_N \dto N_{{K \choose 2}+K}(0, \bm \Gamma_0),
\end{align}
where 
$$\bm \Gamma_0 := \mathrm{Cov}(\vec{\bm B}, \bar{\eta}_1, \bar{\eta}_2, \ldots, \bar{\eta}_K),$$
where 
\begin{itemize}
\item[--] $\bm B=((\bm 1\{\bar L_1=s, \bar L_2=t\}))_{1\leq s, t \leq K}$, and 
\item[--] $\bar \eta_s=\bm 1 \{\bar L_1=s\}+\bm 1 \{\bar L_2=s\}$, and 
\item[--] $\bar{L}_1, \bar{L}_2$ are i.i.d. random variables taking value $s$ with probability $p_s$, for $1 \leq s \leq K$. (This is the limit of the random variable $L_1$ defined in \eqref{eq:L_prob}.)
\end{itemize}
As the RHS in \eqref{eq:VN} does not depend on the conditioning event, the unconditional limit is also the same:
$$\frac{\bm V_N- \E_{H_0}(\bm V_N)}{\sqrt I}  \dto N_{{K \choose 2}+K}(0, \bm  \Gamma_0).$$
Then by \cite[Theorem 2]{holst}, there exists a ${K \choose 2} \times {K \choose 2}$ matrix $\bm \Gamma$ such that, under $H_0$,  
\begin{align}\label{eq:limit_sqrtn}
\frac{1}{\sqrt {N}}\left(\vec{\bm A}_N- \E_{H_0}(\vec{\bm A}_N)\right) &\stackrel{D}=
\frac{1}{\sqrt N}\left(\bm {\vec B}_N-\E_{H_0}(\bm {\vec B}_N) \right)|\{(\eta_1,\ldots,\eta_{K}) = (N_1,\ldots,N_{K})\} \nonumber \\ 
& \dto N_{{K \choose 2}}(0, \bm  \Gamma),
\end{align} 
which completes the proof of (1). 

To see (2) note that the fourth moments of the elements of $\frac{\vec{\bm A}_N- \E_{H_0}(\vec{\bm A}_N)}{\sqrt N}$ are bounded. Therefore, by uniform integrability, $\frac{1}{N}\Cov_{H_0}(\vec{\bm A}_N) \rightarrow \bm \Gamma$. The invertibility of $\Cov_{H_0}(\vec{\bm A}_N)$ (and hence $\bm \Gamma$) follows from Lemma \ref{invert1} (in Appendix \ref{sec:technical_lemmas}). The result in \eqref{ant} then follows from \eqref{eq:limit_sqrtn} and an application of the Slutsky's theorem.

\subsection{Proof of Theorem \ref{consmain}}\label{app2}

The entry-wise almost sure limit of $\frac{1}{N} \bm A_N$ as in \eqref{eq:HPab} is a direct consequence of \cite[Proposition 1]{castropell} (by choosing $\phi = \sum_{s=1}^K p_s f_s$ and $\phi_N = \frac{1}{N}\sum_{s=1}^K N_s f_s$ in \cite[Proposition 1]{castropell}).

Now, to prove consistency we need show that the test statistics have different limits under the null and the alternative. To this end, we have the following lemma. 

\begin{lem}\label{componentwise} Let $H(f_1,f_2,\ldots,f_K)$ be as defined in \eqref{eq:R_consistency}. Then 
$$H(f_1,f_2,\ldots,f_K) \leq H(f,f,\ldots,f),$$ and equality holds if and only if $f_1=f_2=\cdots=f_K$ outside a set of Lebesgue measure $0$.
\end{lem}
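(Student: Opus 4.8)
The plan is to reduce the claimed inequality to a one-dimensional (in the class index) integral inequality that follows from Cauchy--Schwarz. Write $\phi := \sum_{a=1}^K p_a f_a$ for the mixture density. By \eqref{eq:R_consistency} we have $H(f_1, \ldots, f_K) = \tfrac{1}{2} - \mathrm{tr}(\bm H)$, and from \eqref{eq:HPab},
$$\mathrm{tr}(\bm H) = \sum_{s=1}^K h_{ss} = \sum_{s=1}^K \frac{p_s^2}{2} \int_{\mathbb{R}^d} \frac{f_s^2(z)}{\phi(z)} \, \mathrm{d}z.$$
When $f_1 = \cdots = f_K = f$ for a common density $f$, we have $\phi = f$, so each $h_{ss} = \frac{p_s^2}{2} \int f = \frac{p_s^2}{2}$ and hence
$$H(f, f, \ldots, f) = \frac{1}{2} - \sum_{s=1}^K \frac{p_s^2}{2}.$$
Subtracting, the inequality $H(f_1, \ldots, f_K) \le H(f, \ldots, f)$ is therefore equivalent to
$$H(f, \ldots, f) - H(f_1, \ldots, f_K) = \sum_{s=1}^K \frac{p_s^2}{2} \left( \int_{\mathbb{R}^d} \frac{f_s^2(z)}{\phi(z)} \, \mathrm{d}z - 1 \right) \ge 0,$$
so it suffices to prove $\int_{\mathbb{R}^d} f_s^2 / \phi \, \mathrm{d}z \ge 1$ for each $s \in [K]$.

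Next I would establish this per-component bound via the Cauchy--Schwarz inequality with weight $\phi$. Working on the set where $\phi > 0$ (which carries all the mass of $f_s$, since $p_s > 0$ forces $f_s \le \phi / p_s$), we obtain
$$1 = \left( \int_{\mathbb{R}^d} f_s(z) \, \mathrm{d}z \right)^2 = \left( \int_{\mathbb{R}^d} \frac{f_s(z)}{\sqrt{\phi(z)}} \sqrt{\phi(z)} \, \mathrm{d}z \right)^2 \le \left( \int_{\mathbb{R}^d} \frac{f_s^2(z)}{\phi(z)} \, \mathrm{d}z \right) \left( \int_{\mathbb{R}^d} \phi(z) \, \mathrm{d}z \right),$$
and since $\int \phi = \sum_a p_a \int f_a = 1$, this gives exactly $\int f_s^2 / \phi \ge 1$. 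Summing against the nonnegative weights $p_s^2 / 2$ proves the inequality.

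Finally, for the equality characterization I would invoke the equality condition in Cauchy--Schwarz. Since all $p_s > 0$, equality in the displayed sum holds if and only if $\int f_s^2 / \phi = 1$ for every $s$; and equality in the Cauchy--Schwarz step forces $f_s / \sqrt{\phi}$ to be proportional to $\sqrt{\phi}$, i.e. $f_s = c_s \phi$ a.e., with $c_s = 1$ by normalization. Thus $f_s = \phi$ a.e. for every $s$, which means $f_1 = \cdots = f_K$ outside a Lebesgue-null set; the converse is immediate from the computation above. The only mildly delicate point, and the part I would be most careful about, is the measure-theoretic bookkeeping in the equality analysis: handling the region $\{\phi = 0\}$ (a null set for each $f_s$) and justifying that the pointwise a.e.\ proportionality across all $s$ collapses to the single statement $f_1 = \cdots = f_K$ a.e.
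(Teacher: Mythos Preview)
Your proof is correct and follows essentially the same route as the paper: both arguments reduce the inequality to the per-component bound $\int f_s^2/\phi \geq 1$ via Cauchy--Schwarz (the paper writes it as $\int p_s^2 f_s^2/\phi \geq p_s^2$), and both handle the equality case by noting that equality in Cauchy--Schwarz forces $f_s = c_s \phi$ a.e.\ with $c_s=1$ by normalization. Your treatment of the set $\{\phi=0\}$ is slightly more explicit than the paper's, but the substance is the same.
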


\begin{proof} It follows from the Cauchy-Schwarz inequality, that for every $s \in [K]$,
\begin{equation}\label{intm1}
\int_{\mathbb{R}^d} \frac{p_s^2 \left(f_s(z)\right)^2}{\sum_{u=1}^K p_u f_u(z)}\mathrm dz \geq \frac{\left(\int_{\mathbb{R}^d} p_s f_s(z)\mathrm dz\right)^2}{\int_{\mathbb{R}^d} \sum_{u=1}^K p_u f_u(z)\mathrm dz} = p_s^2. 
\end{equation}
This implies, $H(f_1,f_2,\ldots,f_K)  = \frac{1}{2}-\mathrm{tr}(\bm H) \leq \frac{1}{2}-\frac{1}{2}\sum_{s=1}^K p_s^2 = H(f,f,\ldots,f)$, as required. 
	
Now, note that equality holds in \eqref{intm1} if and only if $p_s f_s = c_s \sum_{u=1}^K p_u f_u$, for some constant $c_s$, almost everywhere. Integrating both sides of the last relation, gives  $c_s = p_s$, that is, $f_s = \sum_{u=1}^K p_u f_u$, almost everywhere. Therefore, equality holds if and only if $f_1=f_2=\cdots=f_K$,  outside a set of Lebesgue measure $0$. 
\end{proof}

Now, suppose there exists $1 \leq s \ne t \leq K$ such that $f_s \ne f_t$ on a set of positive Lebesgue measure. By \eqref{eq:HPab} and Lemma \ref{componentwise}, it follows that 
$$\frac{R_{K, N}-\mathbb{E}_{H_0} (R_{K, N})}{N} \xrightarrow{a.s.} H(f_1,f_2,\ldots,f_k) - H(f,f,\ldots,f) < 0,$$ 
and hence, $\frac{1}{\sqrt N}(R_{K, N}-\mathbb{E}_{H_0}(R_{K, N})) \xrightarrow{a.s.} -\infty$. Now, since the matrix $\mathrm{Cov}_{H_0}(\vec{\bm A}_N)$ scales with $N$ (by Proposition \ref{ppn:mean_var}), $\lim_{N\rightarrow \infty} \frac{1}{N} \mathrm{Var}_{H_0}(R_{K, N}) < \infty$. Hence, 
\begin{equation}\label{cons1last}
\lim_{N\rightarrow\infty} Q_{K, N}=\lim_{N\rightarrow\infty} \frac{R_{K, N} - \mathbb{E}_{H_0}(R_{K, N})}{\sqrt{\mathrm{Var}_{H_0}(R_{K, N})}} \xrightarrow{a.s.} -\infty, 
\end{equation}
which implies the limiting power of the MCM test $\lim_{N \rightarrow \infty}\P_{H_1}(Q_{K, N} <  z_{\alpha})=1$, proving universal consistency.

For the MMCM test note that under the alternative,  $\frac{1}{N}(\vec{\bm A}_N - \mathbb{E}_{H_0}\vec{\bm A}_N) \xrightarrow{a.s.} \Delta _0 \in \R^{{K \choose 2}}$, 
where sum of the entries of $\gamma_0$ is $H(f_1,f_2,\ldots,f_K) - H(f,f,\ldots,f)$, that is,  $\Delta_0$ is non-zero. Now, since $\frac{1}{N} \mathrm{Cov}_{H_0}(\vec{\bm A}_N) \rightarrow \bm \Gamma$, where $\bm \Gamma$ is positive definite (by Proposition \ref{ppn:mean_var} and Lemma \ref{invert1}), 
$$\frac{1}{N}S_{K, N} \xrightarrow{a.s.} \Delta _0^\top \bm \Gamma^{-1} \Delta _0 > 0,$$  that is, $S_{K, N} \xrightarrow{a.s.} \infty$ under the alternative, proving universal consistency of the MMCM test.

\section{Proof of Theorem \ref{altclt}}\label{secalt}

Recall the alternative way to describe the joint distribution of the data described in Section \ref{condasm1}: Choose $\cZ_N:=(Z_1, Z_2, \ldots, Z_N)$ i.i.d. from the density $\phi_N=\sum_{s=1}^K \frac{N_s}{N} f_s$ in $\R^d$. Then given $\cZ_N=(Z_1, Z_2, \ldots, Z_N)$, assign a random label $L_j \in [K]:=\{1, 2, \ldots, K\}$ to $Z_j$, independently for each $1 \leq j \leq N$, as in \eqref{eq:labels}. Then it is easy to verify that the following fact, which is proved in Appendix \ref{psecalt1}.

\begin{lem}\label{condrel}
The joint distribution of $\left(\{Z_j: L_j=1\}, \{Z_j: L_j=2\}, \ldots, \{Z_j: L_j=K\}\right)$ conditional on $(\eta_1,\cdots,\eta_K)=(N_1, N_2, \ldots, N_K)$ is same as the joint distribution of the data $\left(\bm X^{(1)},\bm X^{(2)},\ldots,\bm X^{(K)}\right)$, where $\eta_s := \sum_{i=1}^N \bm 1 \{L_i = s\}$, the number of elements labelled $s$, for $1 \leq s \leq K$. 
\end{lem}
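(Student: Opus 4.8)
The plan is to verify the claim by a direct Bayes/tilting computation on the joint law of the pairs $(Z_j, L_j)$, followed by a routine conditioning argument. The key observation is that, although $Z_j$ is drawn from the mixture $\phi_N$ and \emph{then} labelled according to the posterior weights \eqref{eq:labels}, the joint law of the pair $(Z_j, L_j)$ factorizes cleanly: writing the (mixed continuous--discrete) density at $(z,s)$ as the product of the marginal of $Z_j$ and the conditional of $L_j$,
\begin{align*}
p_{(Z_j, L_j)}(z, s) = \phi_N(z) \cdot \frac{\frac{N_s}{N} f_s(z)}{\phi_N(z)} = \frac{N_s}{N} f_s(z).
\end{align*}
First I would record the two consequences of this identity: integrating out $z$ gives $\P(L_j = s) = \frac{N_s}{N}$ for each $s \in [K]$, and, reading the display the other way, conditional on $L_j = s$ the variable $Z_j$ has density exactly $f_s$.

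Next I would pass from one pair to the full i.i.d.\ sequence $(Z_1, L_1), \ldots, (Z_N, L_N)$. From the marginal just computed, the label vector $(L_1, \ldots, L_N)$ consists of i.i.d.\ draws from the categorical distribution with weights $\left(\frac{N_1}{N}, \ldots, \frac{N_K}{N}\right)$, so $(\eta_1, \ldots, \eta_K)$ is multinomial. Moreover, conditioning on the entire label vector $(L_1, \ldots, L_N) = (\ell_1, \ldots, \ell_N)$ and dividing the joint density $\prod_{j} \frac{N_{\ell_j}}{N} f_{\ell_j}(z_j)$ by $\P(L = \ell) = \prod_{j} \frac{N_{\ell_j}}{N}$ leaves $\prod_{j} f_{\ell_j}(z_j)$; hence, given $\ell$, the $Z_j$ are independent with $Z_j \sim f_{\ell_j}$. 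In particular, for any fixed configuration $\ell$ whose counts equal $(N_1, \ldots, N_K)$, the grouped collection $\left(\{Z_j : \ell_j = 1\}, \ldots, \{Z_j : \ell_j = K\}\right)$ consists of $N_s$ i.i.d.\ $f_s$-samples in the $s$-th group, mutually independent across $s$ — which is precisely the law of $\left(\bm X^{(1)}, \ldots, \bm X^{(K)}\right)$.

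Finally I would assemble these pieces. The conditioning event $\{(\eta_1, \ldots, \eta_K) = (N_1, \ldots, N_K)\}$ is the disjoint union, over all label configurations $\ell$ with these counts, of the events $\{L = \ell\}$. By the previous step the conditional law of the $Z$-groups given $\{L = \ell\}$ depends on $\ell$ \emph{only through its counts}; since all these $\ell$ share the counts $\bm N$, the conditional law given $\{\bm\eta = \bm N\}$ is a mixture of identical laws and therefore equals that common law, namely the law of $\left(\bm X^{(1)}, \ldots, \bm X^{(K)}\right)$, as claimed.

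I do not anticipate a serious obstacle: the entire content is the elementary posterior identity in the first display, and the only point requiring care is the bookkeeping in the last step — verifying that conditioning on the label \emph{counts} (rather than on a specific labelling) preserves both the within-group i.i.d.\ structure and the mutual independence of the groups. This is automatic once one notes that the per-configuration conditional law depends on $\ell$ solely through its counts, and that within-group exchangeability makes the identification of the unordered sets $\{Z_j : L_j = s\}$ with the samples $\bm X^{(s)}$ unambiguous.
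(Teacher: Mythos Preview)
Your proposal is correct and follows essentially the same route as the paper: both arguments rest on the Bayes identity $p_{(Z_j,L_j)}(z,s)=\phi_N(z)\cdot \frac{(N_s/N)f_s(z)}{\phi_N(z)}=\frac{N_s}{N}f_s(z)$, from which the marginal law of $L_j$ and the conditional law $Z_j\mid L_j=s \sim f_s$ are read off and then aggregated over the i.i.d.\ pairs. The paper carries this out by an explicit CDF computation summing over ordered label configurations (written for $K=2$), whereas you phrase the final step as a mixture over label vectors with identical counts; the content is the same.
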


 The proof of Theorem \ref{altclt} has two steps: (1) Computing the conditional covariance matrix $\bm R(\cZ_N):=\mathrm{Cov}_{H_1}((\vec{\bm B}_N^\top, \eta_1,\ldots, \eta_{K-1})|\cZ_N)$~(as usual, $\vec{\bm B}_N$ denotes the vectorized upper triangular part of the matrix $\bm B_N$ defined in \eqref{eq:newbdefn}), under the bootstrap alternative distribution (which is the unconditional distribution of $\left(\{Z_j: L_j=1\}, \{Z_j: L_j=2\}, \ldots, \{Z_j: L_j=K\}\right)$), and show that scales with $N$ (Section \ref{sec:pfjointcovariance}), and (2) deriving the asymptotic normality of $\vec{\bm A}_N$ from the joint distribution of the vector $(\vec{\bm B}_N^\top, \eta_1,\ldots, \eta_{K-1})^\top$ under the bootstrap alternative distribution (Section \ref{sec:limitdistributionjoint}).

\subsection{Computing the Joint Conditional Covariance Matrix}
\label{sec:pfjointcovariance}

Given the $K$ densities $f_1, f_2, \ldots, f_K$, we will begin by defining the matrix $\Gamma_{f_1, f_2, \ldots, f_K}$ in Theorem \ref{altclt}. To this end, let $\phi=\sum_{s=1}^K p_s f_s$ and for each $s,t \in [K]$, define the function $h_{st}: \mathbb{R}^d \times \mathbb{R}^d \mapsto [0,1]$ as:
$$h_{st}(x,y) := \frac{p_sp_tf_s(x)f_t(y)}{ \phi(x) \phi(y) },$$
and set $\bar{h}_{st}(x,y) := h_{st}(x,y) + h_{st}(y,x)$. 

\begin{defn}\label{defn:GammaK}(Defining the matrix $\bm \Gamma_{f_1, f_2, \ldots, f_K}$) Throughout, let $Z \sim \phi=\sum_{s=1}^K p_s f_s$. To begin with, let $\bm Q$ be a square matrix of dimension $\binom{K}{2}+K-1$, partitioned as:
\begin{align}\label{eq:Q}
\bm Q=
\left[
\begin{array}{cc}
\bm Q_{11} & \bm Q_{12} \\
\bm Q_{12}^\top & \bm Q_{22}
\end{array}
\right],
\end{align}
where $\bm Q_{11}$, $\bm Q_{12}$, and $\bm Q_{22}$ have dimensions $\binom{K}{2}\times \binom{K}{2}$, $\binom{K}{2}\times (K-1)$ and $(K-1)\times(K-1)$, respectively, and their elements are defined as follows:  
\begin{itemize}

\item 
The elements of the matrix $\bm Q_{11}$ will be denoted by $q_{11}((s, t), (u, v))$, for $1\leq s < t\leq K$ and $1\leq u < v\leq K$, which is defined as: 
\begin{align}\label{eq:Qmain1}
q_{11}((s,t), (u,v)) :=   \left\{
\begin{array}{ll}
\frac{1}{2}\mathbb{E}\left[  \bar{h}_{st}(Z, Z) (1-\bar{h}_{st}(Z, Z) ) \right] & \textrm{if}~(s,t)=(u,v), \\ \\ 
-\frac{1}{2}\mathbb{E}\left[ \bar{h}_{st}(Z, Z)\bar{h}_{uv}(Z, Z) \right] & \textrm{otherwise}.
\end{array} 
\right.
\end{align}

\item 
The elements of the matrix $\bm Q_{12}$ will be denoted by $q_{12}((s, t), u)$, for $1\leq s < t\leq K$ and $1\leq u \leq K-1$, which is defined as:  
\begin{align}\label{eq:Qcross}
q_{12}((s,t),u) =   \left\{
\begin{array}{ll}
 \frac{1}{2} \E\left[\bar{h}_{st}(Z,Z)\left(1-\frac{2p_uf_u(Z)}{\phi(Z)}\right) \right] & \textrm{if}~u \in \{s,t\} , \\\\
- \E\left[ \bar{h}_{st}(Z, Z)\left(\frac{p_uf_u(Z)}{\phi(Z)}\right) \right] & \textrm{otherwise}. 
\end{array} 
\right.
\end{align}

\item The elements of the matrix $\bm Q_{22}$ will be denoted by $q_{22}(s, t)$, for $1\leq s, t\leq K-1$, which is defined as:  
\begin{align}\label{eq:Qmain2}
q_{22}(s, t) :=   \left\{
 \begin{array}{ll}
p_s(1-p_s)& \textrm{if}~s=t, \\ 
- p_s p_t& \textrm{otherwise}. 
 \end{array} 
 \right.
\end{align} 
\end{itemize}
Finally, define 
\begin{align}\label{eq:matrix_gamma}
\bm \Gamma_{f_1, f_2, \ldots, f_K} := \bm Q_{11} - \bm Q_{12} \bm Q_{22}^{-1} \bm Q_{12}^\top.
\end{align} 
\end{defn}

To compute the limit of the covariance matrix $\bm R(\cZ_N):=\mathrm{Cov}_{H_1}((\vec{\bm B}_N^\top, \eta_1,\ldots, \eta_{K-1})|\cZ_N)$, we need the following lemma from \cite{castropell}. Recall  $\phi:=\sum_{i=1}^s p_s f_s$.

\begin{lem}\label{plimh}\cite[Proposition 1]{castropell} Let $Z_1, Z_2, \ldots, Z_N$ be i.i.d. from the density $\phi_N=\sum_{s=1}^K \frac{N_s}{N} f_s$, and $g: \mathbb{R}^d \times \mathbb{R}^d \mapsto [0,1]$ be a symmetric, measurable function, such that almost any $z \in \mathbb{R}^d$ is a Lebesgue continuity point of $\phi(\cdot)g(z,\cdot)$. Then, as $N \rightarrow \infty$,
$$\frac{1}{N} \sum_{1\leq i<j\leq N} e(Z_i,  Z_j)g(Z_i, Z_j)~ \xrightarrow{P}~ \tfrac{1}{2} ~\mathbb{E} g(Z,Z),$$
where $e(x,y) := \bm 1 \{(x,y) \in  E(\sg(\cZ_N \cup \{x, y\})\}$ and $Z \sim \phi$. 
\end{lem}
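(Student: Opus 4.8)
The plan is to prove the convergence by the second-moment method: writing $S_N := \sum_{1 \le i < j \le N} e(Z_i, Z_j) g(Z_i, Z_j)$ for the sum over the $I = N/2$ matched edges, I would show that $\frac{1}{N}\mathbb{E} S_N \to \frac{1}{2}\mathbb{E}\, g(Z,Z)$ and that $\mathrm{Var}(S_N/N) \to 0$, which together give $S_N/N \xrightarrow{P} \frac{1}{2}\mathbb{E}\, g(Z,Z)$. Throughout, denote by $m(i)$ the (a.s.\ unique) index matched to $i$ in $\sg(\cZ_N)$, so that $S_N = \frac{1}{2}\sum_{i=1}^N g(Z_i, Z_{m(i)})$ by symmetry of $g$. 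One mild bookkeeping point is that the $Z_i$ are i.i.d.\ from the $N$-dependent density $\phi_N$ rather than from $\phi$; since $\phi_N \to \phi$ in $L^1$ (because $N_s/N \to p_s$) and $g$ is bounded, this triangular-array feature only affects lower-order terms and can be absorbed at the end.

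For the mean, exchangeability of $Z_1, \ldots, Z_N$ gives $\frac{1}{N}\mathbb{E} S_N = \frac{1}{2}\mathbb{E}\, g(Z_1, Z_{m(1)})$, so the task reduces to showing $\mathbb{E}\, g(Z_1, Z_{m(1)}) \to \mathbb{E}\, g(Z,Z)$. The heuristic is that in a minimum matching every point is paired with a very close point, so $g(Z_1, Z_{m(1)}) \approx g(Z_1, Z_1)$; I would make this precise by conditioning on $Z_1 = z$ and establishing (a) a \emph{localization} estimate, namely that $\|Z_1 - Z_{m(1)}\| \to 0$ in probability and, more quantitatively, that given $Z_1 = z$ the match lands in a shrinking ball $B(z, r)$ with conditional law asymptotically proportional to $\phi$ restricted to that ball; and (b) the hypothesis that $z$ is a Lebesgue point of $\phi(\cdot) g(z, \cdot)$, which (together with $z$ being a Lebesgue point of $\phi$, true a.e.) yields $\frac{\int_{B(z,r)} g(z,y)\phi(y)\,\mathrm dy}{\int_{B(z,r)}\phi(y)\,\mathrm dy} \to g(z,z)$ as $r \to 0$. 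Combining (a) and (b) gives $\mathbb{E}[g(z, Z_{m(1)}) \mid Z_1 = z] \to g(z,z)$ for a.e.\ $z$, and bounded convergence (recall $0 \le g \le 1$) then yields the claim after integrating in $z$. Note that the Lebesgue-point condition is genuinely needed here and cannot be bypassed by approximating $g$ with continuous functions, since the matched edges concentrate on the diagonal, a $\phi\otimes\phi$-null set.

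The variance step is where the main difficulty lies. The natural route is an Efron--Stein / bounded-difference argument, but this fails in its naive form because the minimum matching is a global object: resampling a single $Z_i$ can flip $\Theta(N)$ edges at once (e.g.\ a coherent shift of an alternating chain), so the number of altered edges is not $O(1)$. The fix is to exploit \emph{stabilization}: with high probability the augmenting chain triggered by a single-point change is short, so the actual change in the \emph{value} $S_N$ (as opposed to the number of changed edges) is small off a rare event. I would therefore establish that whether a given pair is matched, and the location of $Z_{m(i)}$, are determined with overwhelming probability by the points of $\cZ_N$ in a bounded neighborhood, and feed this local dependence into a weighted Efron--Stein bound (or a direct covariance decomposition in which well-separated candidate edges are nearly independent) to conclude $\mathrm{Var}(S_N) = o(N^2)$. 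Rigorously proving this stabilization for the global minimum non-bipartite matching, rather than for a nearest-neighbor surrogate, is the crux; an alternative is to invoke the general limit theory for Euclidean matching functionals, which supplies exactly such concentration. With mean convergence and $\mathrm{Var}(S_N/N) \to 0$ in hand, $S_N/N \to \frac{1}{2}\mathbb{E}\, g(Z,Z)$ in $L^2$ and hence in probability, completing the proof.
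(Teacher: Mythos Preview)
The paper does not prove this lemma: it is quoted, with attribution, as Proposition~1 of Arias-Castro and Pelletier \cite{castropell} and then used as a black box in the proof of Lemma~\ref{lm:covarianceH1}. There is therefore no in-paper argument to compare your sketch against; for the actual proof you would need to consult \cite{castropell} directly.

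That said, your outline is a plausible second-moment program and correctly isolates the two hard steps. Two cautions on the sketch itself. First, for the mean, the claim that ``given $Z_1=z$ the match lands in a shrinking ball $B(z,r)$ with conditional law asymptotically proportional to $\phi$ restricted to that ball'' is heuristic and not obviously correct for the \emph{global} minimum matching: it holds approximately for the nearest-neighbor graph, but the optimal match of $Z_1$ can differ from its nearest neighbor in a configuration-dependent way, so the conditional law of $Z_{m(1)}$ given $Z_1=z$ is not simply a localized version of $\phi$. What is directly provable is that the maximum edge length tends to zero; turning this plus the Lebesgue-point hypothesis into $\mathbb{E}\,g(Z_1,Z_{m(1)})\to\mathbb{E}\,g(Z,Z)$ still needs an additional step, since the Lebesgue condition controls local \emph{averages} rather than pointwise values, and some form of blocking or comparison with a local-matching surrogate is typically required. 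Second, for the variance, you correctly flag that naive bounded-differences fails and that stabilization for optimal non-bipartite matching is the crux; this is genuinely nontrivial and is not established by anything in your sketch. Invoking the subadditive Euclidean-functional machinery is indeed one route to both the law of large numbers and the needed concentration, but that is a citation rather than a proof.
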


The following lemma uses the above result to show that the conditional covariance matrix of $(\vec{\bm B}_N^\top, \eta_1,\ldots, \eta_{K-1})$ divided by $N$, converges to a deterministic limit in probability.


\begin{lem}\label{lm:covarianceH1}  Let $\bm R(\cZ_N):=\mathrm{Cov}_{H_1}((\vec{\bm B}_N^\top, \eta_1,\ldots, \eta_{K-1})|\cZ_N)$ be the conditional covariance matrix under the bootstrap alternative distribution. Then 
\begin{align}\label{eq:matrix_R}
\frac{1}{N} \bm R(\cZ_N) \pto \bm R:=
\left[
\begin{array}{cc}
\bm Q_{11} & \bm Q_{12} \\
\bm Q_{12}^\top & \bm R_{22}
\end{array}
\right],
\end{align}
where the matrices $\bm Q_{11}$ and $\bm Q_{12}$ are as defined in \eqref{eq:Qmain1} and \eqref{eq:Qcross}, respectively, and $\bm R_{22}=((r_{22}(s, t)))_{1\leq s,  t\leq K-1}$, where   
\begin{align}\label{eq:Rmain}
r_{22}(s, t) :=   \left\{
 \begin{array}{ll}
\E \left[ \frac{p_s f_s(Z)}{\phi(Z)}\left(1-\frac{ p_s f_s(Z)}{\phi(Z)}\right) \right] & \textrm{if}~s=t, \\ \\
- \E \left[\frac{ p_s  p_t f_s(Z) f_t(Z)}{\phi(Z)^2} \right] & \textrm{if}~s\neq t,  
\end{array} 
 \right.
\end{align}
with $Z \sim \phi:=\sum_{s=1}^K p_s f_s$, as before.  
\end{lem}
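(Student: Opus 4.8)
The plan is to exploit the fact that, conditional on the pooled sample $\cZ_N$, the labels $L_1, \ldots, L_N$ are independent, so that all the randomness in the vector $(\vec{\bm B}_N^\top, \eta_1, \ldots, \eta_{K-1})$ comes from the labeling. First I would relabel the points of $\cZ_N$ as $1, 2, \ldots, N$ so that the $I = N/2$ matched pairs of $\sg(\cZ_N)$ are $\{(1,2), (3,4), \ldots, (N-1,N)\}$, exactly as in the proof of Theorem \ref{nullasmS}. Since the matched edges are vertex-disjoint and the labels attached to distinct edges are conditionally independent, each off-diagonal count can be written as $b_{st} = \sum_{j=1}^I W_{st}^{(j)}$, where $W_{st}^{(j)} := \bm{1}\{L_{2j-1}=s, L_{2j}=t\} + \bm{1}\{L_{2j-1}=t, L_{2j}=s\}$ depends only on the labels of the $j$-th edge, and similarly $\eta_u = \sum_{j=1}^I (\bm{1}\{L_{2j-1}=u\} + \bm{1}\{L_{2j}=u\})$. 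Consequently $\bm R(\cZ_N)$ splits as a sum of $I$ independent per-edge contributions, and I only need to compute the covariance of the per-edge vector and then sum.

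Next I would evaluate these per-edge covariances explicitly. Writing $h_{st}^{(N)}(x,y) := \frac{N_s}{N}\frac{N_t}{N}\frac{f_s(x)f_t(y)}{\phi_N(x)\phi_N(y)}$ and $\bar h_{st}^{(N)}(x,y) := h_{st}^{(N)}(x,y) + h_{st}^{(N)}(y,x)$, the conditional mean of $W_{st}^{(j)}$ on the edge $\{Z_{2j-1}, Z_{2j}\}$ equals $\bar h_{st}^{(N)}(Z_{2j-1}, Z_{2j})$. Because $W_{st}^{(j)}$ is a $\{0,1\}$-valued indicator, its conditional variance is $\bar h_{st}^{(N)}(1 - \bar h_{st}^{(N)})$, while for $(s,t)\ne(u,v)$ the events $\{L,L'\}=\{s,t\}$ and $\{L,L'\}=\{u,v\}$ are disjoint, so the cross term is $-\bar h_{st}^{(N)}\bar h_{uv}^{(N)}$; the analogous elementary bookkeeping produces the covariances with $\eta_u$ (splitting into the cases $u \in \{s,t\}$ and $u \notin \{s,t\}$) and the $\eta_s$--$\eta_t$ covariances. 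Summing over $j$ and using $\sum_{j=1}^I \psi(Z_{2j-1}, Z_{2j}) = \sum_{1\le i<j\le N} e(Z_i,Z_j)\psi(Z_i,Z_j)$ for symmetric $\psi$, each entry of $\frac{1}{N}\bm R(\cZ_N)$ becomes either a pairwise average $\frac{1}{N}\sum_{i<j} e(Z_i,Z_j)\, g_N(Z_i,Z_j)$ (for the $\bm B_N$ and cross blocks) or a single-index average $\frac{1}{N}\sum_{j} \psi_N(Z_j)$ (for the $\eta$ block), where $g_N, \psi_N$ are explicit symmetric functions built from the $h_{st}^{(N)}$.

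The final step is to pass to the limit. For the single-index averages I would invoke a triangular-array law of large numbers: since the $Z_j$ are i.i.d.\ from $\phi_N$ and $\psi_N$ is uniformly bounded, Chebyshev's inequality gives $\frac{1}{N}\sum_j \psi_N(Z_j) - \int \psi_N \phi_N \pto 0$, and combined with $\int \psi_N \phi_N \to \int \psi \phi$ this yields $\frac{1}{N}\sum_j \psi_N(Z_j) \pto \E[\psi(Z)]$ with $Z\sim\phi$, producing the block $\bm R_{22}$ in \eqref{eq:Rmain}. For the pairwise averages I would apply Lemma \ref{plimh} (whose continuity hypothesis is inherited from the densities, as the limiting integrands are built from $f_s/\phi$), which gives the limit $\tfrac12 \E[g(Z,Z)]$ and reproduces the blocks $\bm Q_{11}$ and $\bm Q_{12}$ in \eqref{eq:Qmain1} and \eqref{eq:Qcross} after evaluating the limiting functions on the diagonal $x=y=Z$.

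The main obstacle is that Lemma \ref{plimh} and the law of large numbers apply to a \emph{fixed} integrand, whereas my functions $g_N, \psi_N$ carry an $N$-dependence through $N_s/N$ and $\phi_N$. I would handle this by replacing each $h_{st}^{(N)}$ with its limit $h_{st}$ and controlling the error uniformly. The key estimate is that $\phi_N/\phi \to 1$ uniformly: with $\eps_N := \max_u |N_u/N - p_u| \to 0$ one has $N_u/N \in [p_u - \eps_N,\, p_u + \eps_N]$, whence $\big(1 - \eps_N/\min_u p_u\big)\,\phi(x) \le \phi_N(x) \le \big(1 + \eps_N/\min_u p_u\big)\,\phi(x)$ for all $x$, giving $\sup_{x,y}|h_{st}^{(N)}(x,y) - h_{st}(x,y)| \to 0$ and hence $\|g_N - g\|_\infty, \|\psi_N - \psi\|_\infty \to 0$. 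Since there are only $I = N/2$ matched edges, the replacement error in the pairwise averages is at most $\frac{1}{N}\cdot\frac{N}{2}\,\|g_N - g\|_\infty \to 0$, and likewise for the single-index sums, so the $N$-dependence is asymptotically negligible and the stated limit $\bm R$ follows.
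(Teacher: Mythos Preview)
Your proposal is correct and follows essentially the same route as the paper: exploit the conditional independence of the labels given $\cZ_N$ to write each entry of $\bm R(\cZ_N)$ either as an edge-sum $\sum_{i<j} e(Z_i,Z_j)\,g_N(Z_i,Z_j)$ or a point-sum $\sum_j \psi_N(Z_j)$, then apply Lemma~\ref{plimh} to the former and the law of large numbers to the latter. Your treatment of the $N$-dependence through the uniform bound $\sup_x |\phi_N(x)/\phi(x)-1|\to 0$ is in fact more explicit than the paper's, which simply asserts ``$\bar h_{st}^{(N)}\to\bar h_{st}$ uniformly'' when invoking Lemma~\ref{plimh}; the underlying idea is the same.
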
\vspace{-0.1in}

\begin{proof} Let $\phi_N=\sum_{s=1}^K \frac{N_s}{N} f_s$, and recall that $\eta_s=\sum_{j=1}^N \bm 1 \{L_j=s\}$, for $s \in [K]$. This implies,  for $1 \leq s, \ne t \in [K-1]$, 
$$\mathrm{Var}_{H_1}(\eta_s|\cZ_N) = \sum_{j=1}^N  \frac{\frac{N_s}{N} f_s(Z_j)}{\phi_N(Z_j)}\left(1-\frac{ \frac{N_s}{N} f_s(Z_j)}{\phi_N(Z_j)}\right),$$ 
and $\mathrm{Cov}_{H_1}(\eta_s, \eta_t|\cZ_N) = - \sum_{j=1}^N \frac{ \frac{N_s}{N} \frac{N_t}{N} f_s(Z_j) f_t(Z_j)}{\phi_N(Z_j)^2}$. Hence, by the law of large numbers and the dominated convergence theorem, as $N \rightarrow \infty$, 
\begin{equation}\label{csc3}
\frac{1}{N}\Cov_{H_1}((\eta_1, \eta_2, \ldots, \eta_K)^\top|\cZ_N)  \xrightarrow{P} \bm R_{22},
\end{equation}
where $\bm R_{22}$ is as defined in \eqref{eq:Rmain}. 

Next, define $h_{st}^{(N)}(x,y) := \frac{\frac{N_s}{N}\frac{N_t}{N}f_s(x)f_y(t)}{\phi_N(x)\phi_N(y)}$, and let $\bar h_{st}^{(N)}(x,y) := h_{st}^{(N)}(x,y)  + h_{st}^{(N)}(x,y) $. Now, for each $1\leq s < t\leq K$, the conditional variance of $b_{st}$ (recall \eqref{eq:newbdefn}) is, 
\begin{align}
\frac{1}{N}\mathrm{Var}_{H_1}(b_{st}\big|\cZ_N) & = \frac{1}{N} \sum_{1\leq i < j \leq N} e(Z_s, Z_t) \bar h_{st}^{(N)}(Z_s, Z_t)(1-\bar h_{st}^{(N)}(Z_s, Z_t) )  \nonumber \\
\label{mz1} & \xrightarrow{P} \frac{1}{2}\mathbb{E}\left[  \bar{h}_{st}(Z, Z) (1-\bar{h}_{st}(Z, Z) ) \right], 
\end{align}
by Proposition \ref{plimh}, as  $\bar h_{st}^{(N)} \rightarrow \bar h_{st}$ uniformly. Similarly, for any two distinct pairs $(s, t)$ and $(u, v)$ with $1\leq s< t\leq K$ and $1\leq u< v\leq K$, \begin{align}\label{mz2}
\frac{1}{N}\mathrm{Cov}_{H_1}(b_{st}, b_{uv} \big|\cZ_N) 
& \xrightarrow{P} -\frac{1}{2}\mathbb{E}\left[ \bar{h}_{st}(Z, Z)\bar{h}_{uv}(Z, Z) \right].
\end{align}
Combining \eqref{mz1} and \eqref{mz2} gives 
\begin{equation}\label{csc1}
\frac{1}{N}\Cov_{H_1}({\bm {\vec B}}_N^\top|\cZ_N)  \xrightarrow{P} \bm Q_{11},
\end{equation}
where $\bm Q_{11}$ is as defined in \eqref{eq:Qmain1}.

Finally, for each $1\leq s, t \leq K$ and $1 \leq u \in K-1$, 
$$\mathrm{Cov}_{H_1}(b_{st}, \eta_u\big|\cZ_N) = \sum_{1\leq s < t\leq N}  e(Z_i,Z_j) \psi_{\{(s, t), u\}}^{(N)}(Z_i,Z_j),$$
where 
\[\psi_{\{(s, t), u\}}^{(N)}(Z_i,Z_j) =   \left\{
\begin{array}{ll}
\bar h^{(N)}_{st}(Z_i,Z_j)\left[1-\frac{\frac{N_u}{N} f_u(Z_i)}{\phi_N(Z_i)} - \frac{\frac{N_u}{N} f_u(Z_j)}{\phi_N(Z_j)}\right]& \textrm{if}~ u \in \{s, t\}), \\\\
-\bar h^{(N)}_{st}(Z_i,Z_j) \left[\frac{\frac{N_u}{N} f_u(Z_i)}{\phi_N(Z_i)} + \frac{\frac{N_u}{N} f_u(Z_j)}{\phi_N(Z_j)}\right] & \textrm{if}~u \notin \{s, t\}. \\
\end{array} 
\right. \]
Then by Lemma \ref{plimh}, as $N\rightarrow \infty$,
\begin{equation}\label{csc2}
\frac{1}{N} \mathrm{Cov}_{H_1}(b_{st}, \eta_u\big|\cZ_N ) \xrightarrow{P} q_{12}((s, t), u), 
\end{equation}
where $q_{12}((s, t), u)$ is as in \eqref{eq:Qcross}. 

The result in \eqref{eq:matrix_R} now follows by combining \eqref{csc3}, \eqref{csc1}, and \eqref{csc2}, completing the proof of the lemma. 
\end{proof}\vspace{-0.1in}

\subsection{The Joint Central Limit Theorem of the Cross-Counts}
\label{sec:limitdistributionjoint}
 We now have all the tools necessary for proving Theorem \ref{altclt}. Towards  this, define
$$V_N := \frac{1}{\sqrt N} \left(\vec{\bm B}_N^\top-\mathbb{E}_{H_1}(\vec{\bm B}_N^\top\big|\cZ_N),~\eta_1 - \mathbb{E}_{H_1}(\eta_1\big| \cZ_N),~\cdots,~\eta_{K-1} - \mathbb{E}_{H_1}(\eta_{K-1}\big| \cZ_N)\right)^\top,$$ 
a vector of length ${K \choose 2} + K-1$. Define $U_N := \overline{\bm R}(\cZ_N)^{-\frac{1}{2}}V_N$, where  $\overline{\bm R}(\cZ_N)= \frac{1}{N}\bm R(\cZ_N)$, so that $\mathbb{E}_{H_1}(U_N|\cZ_N) = 0$ and $\mathrm{Cov}_{H_1}(U_N \big|\cZ_N) = \mathrm I$, under the bootstrap alternative distribution.

For each $(a,b) \in [N]^2$, define the $K \times K$ matrix $\bm C_{ab}=((C_{ab}(s, t)))_{1\leq s \ne t\leq K}$, where, 
$$C_{ab}(s, t) := \bm 1 \left\{\{L_a, L_b\}=\{s,t\}\right\},$$
for $s \ne t$ and zero otherwise. Let $\bm {\vec C}_{ab}$ be the vector of length ${K \choose 2}$ obtained by concatenating the rows of $\bm C_{ab}$ in the upper triangular part. 
Now, for each $(a,b)\in [N]^2$, define:
$$Y_{ab} := \big(\bm {\vec C}_{ab}^\top,~\bm 1 \{L_a=1\} + \bm 1 \{L_b = 1\},~\cdots,~\bm 1 \{L_a=K-1\} + \bm 1 \{L_b = K-1\}\big)^\top,$$
and let $\overline{Y}_{ab} := \frac{1}{\sqrt N} \overline{\bm R}(\cZ_N)^{-\frac{1}{2}} ({Y}_{ab}- \mathbb{E}_{H_1}({Y}_{ab}|\cZ_N ))$.  Further, define $S(\cZ_N) := \{\{a,b\} \subset [N] : e(Z_a,Z_b) = 1\}$. Then, it is easy to see that:
$$U_N = \sum_{\{a,b\} \in S(\cZ_N)} \overline{Y}_{ab}.$$ 
Note that, under the bootstrap alternative distribution, given $\cZ_N$, the collection $\{\overline{Y}_{ab}\}_{\{a,b\} \in S(\cZ_N)}$ is independent, so by an application of the multivariate Berry-Essen theorem \cite[Theorem 1.1]{raic}, we get:
\begin{equation}\label{raic1}
\sup_{A \in \mathcal{C}} \Big|\mathbb{P}(U_N \in A\big| \cZ_N) -  \Phi_{{K \choose 2}+K-1}(A) \Big| \leq L(K) \sum_{\{a, b\} \in S(\cZ_N)} \mathbb{E}\left(||\overline{Y}_{ab}||^3\big| \cZ_N\right),
\end{equation}
where $\mathcal{C}$ denotes the class of all measurable convex subsets of $\mathbb{R}^{{K \choose 2}+K-1}$, $\Phi_{{K \choose 2}+K-1}(\cdot)$ the standard normal distribution function in dimension ${K \choose 2}+K-1$, and $L(K)$ is a constant depending only on $K$.

\begin{lem}\label{spnorm} Let $\overline{Y}_{ab}$ be as defined above. Then $\sum_{\{a, b\} \in S(\cZ_N)} \mathbb{E}\left(||\overline{Y}_{ab}||^3\big| \cZ_N\right) \xrightarrow{P} 0$, as $N \rightarrow \infty$.
\end{lem}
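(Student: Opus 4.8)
The plan is to leverage two structural facts: the summand vectors $Y_{ab}$ have uniformly bounded coordinates, and the index set $S(\cZ_N)$ contains exactly $I = N/2$ pairs, since it is the edge set of the (perfect) minimum non-bipartite matching $\sg(\cZ_N)$. Together with the $N^{-1/2}$ normalization built into $\overline{Y}_{ab}$, the cube in this Lyapunov-type sum will carry a net factor of $N\cdot N^{-3/2} = N^{-1/2}$, which drives the sum to zero.

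First I would record a uniform bound on $\|Y_{ab}\|$. Every coordinate of $Y_{ab}$ is either an indicator $\bm 1\{\{L_a, L_b\} = \{s, t\}\}$ or a sum of two indicators $\bm 1\{L_a = u\} + \bm 1\{L_b = u\}$, so each coordinate lies in $\{0, 1, 2\}$; since $Y_{ab}$ has the fixed length $\binom{K}{2} + K - 1$, there is a constant $C = C(K)$ with $\|Y_{ab}\| \le C$ almost surely, and hence $\|Y_{ab} - \mathbb{E}_{H_1}(Y_{ab}\mid \cZ_N)\| \le 2C$ by the triangle inequality. Passing through the linear map $\overline{\bm R}(\cZ_N)^{-1/2}$ and the scalar $N^{-1/2}$ gives the bound, deterministic given $\cZ_N$,
$$\|\overline{Y}_{ab}\| \le \frac{2C}{\sqrt N}\,\bigl\|\overline{\bm R}(\cZ_N)^{-\frac12}\bigr\|_{\mathrm{op}},$$
where $\|\cdot\|_{\mathrm{op}}$ denotes the matrix operator norm.

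Next, cubing this bound and summing over $S(\cZ_N)$ — which has exactly $N/2$ elements because $\sg(\cZ_N)$ is a perfect matching — the conditional expectation is unaffected (the bound is already $\cZ_N$-measurable), so that
$$\sum_{\{a, b\} \in S(\cZ_N)} \mathbb{E}\bigl(\|\overline{Y}_{ab}\|^3 \mid \cZ_N\bigr) \le \frac{N}{2}\cdot\frac{(2C)^3}{N^{3/2}}\,\bigl\|\overline{\bm R}(\cZ_N)^{-\frac12}\bigr\|_{\mathrm{op}}^3 = \frac{4C^3}{\sqrt N}\,\bigl\|\overline{\bm R}(\cZ_N)^{-\frac12}\bigr\|_{\mathrm{op}}^3.$$
Finally I would invoke Lemma \ref{lm:covarianceH1}, which gives $\overline{\bm R}(\cZ_N) \xrightarrow{P} \bm R$ with $\bm R$ positive definite. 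Since $M \mapsto \|M^{-1/2}\|_{\mathrm{op}}$ is continuous on the open cone of positive definite matrices, the continuous mapping theorem yields $\|\overline{\bm R}(\cZ_N)^{-1/2}\|_{\mathrm{op}} \xrightarrow{P} \|\bm R^{-1/2}\|_{\mathrm{op}} < \infty$, so this quantity is $O_P(1)$. The right-hand side above is therefore $O_P(1)\cdot N^{-1/2}$, which converges to $0$ in probability, giving the claim.

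The only genuinely delicate point is the boundedness in probability of $\|\overline{\bm R}(\cZ_N)^{-1/2}\|_{\mathrm{op}}$: this requires that the limiting matrix $\bm R$ be nonsingular, which is precisely why the $\eta$-block was truncated to its first $K - 1$ coordinates, removing the degeneracy $\sum_{s} \eta_s \equiv N$. Everything else is a routine boundedness-and-counting estimate.
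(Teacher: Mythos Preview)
Your proof is correct and follows essentially the same approach as the paper's: bound each coordinate of $Y_{ab}$ by $2$, pass through the operator norm of $\overline{\bm R}(\cZ_N)^{-1/2}$, sum over the $N/2$ matched pairs, and use Lemma \ref{lm:covarianceH1} to control the operator-norm factor. The only cosmetic differences are that the paper computes the explicit constant $\sqrt{2(K-1)(K+2)}$ rather than your generic $C(K)$, and it bounds $\|\overline{\bm R}(\cZ_N)^{-1/2}\|_{\mathrm{op}}$ by $\mathrm{tr}(\overline{\bm R}(\cZ_N)^{-1/2})$ before invoking convergence, whereas you apply the continuous mapping theorem directly to the operator norm; your closing remark about needing $\bm R$ nonsingular is more explicit than the paper's treatment of that point.
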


\begin{proof} 
Note that every entry of the vector $Y_{ab} - \mathbb{E}_{H_1}(Y_{ab}|\cZ_N)$ is bounded in absolute value by $2$. Hence, $\|Y_{ab} - \mathbb{E}_{H_1}(Y_{ab}| \cZ_N)\| \leq \sqrt{2(K-1)(K+2)}$. Consequently, denote the operator norm of matrix by $||\cdot||_{\mathrm{op}}$
$$\|\overline{Y}_{ab}\| \leq \frac{1}{\sqrt N}\left\|\overline{\bm R}(\cZ_N)^{-\frac{1}{2}}\right\|_{\mathrm{op}}\left\|Y_{ab} - \mathbb{E}\left(Y_{ab}\big|\cZ_N\right)\right\|\leq \left(\frac{2(K-1)(K+2)}{N}\right)^\frac{1}{2}\left\|\overline{\bm R}({\cZ_N})^{-\frac{1}{2}}\right\|_{\mathrm{op}}. $$
Then by the Cauchy-Schwarz inequality, 
\begin{eqnarray*}
\sum_{\{a, b\} \in S(\bm Z)} \mathbb{E}\left(||\overline{Y}_{ab}||^3\big| \cZ_N \right) &\leq& \sqrt{\frac{2(K-1)^3(K+2)^3}{N}}\left\|\overline{\bm R}({\cZ_N})^{-\frac{1}{2}}\right\|_{\mathrm{op}}^3\\ &\leq& \sqrt{\frac{2(K-1)^3(K+2)^3}{N}} \left[\mathrm{tr}\left(\overline{\bm R}({\cZ_N})^{-\frac{1}{2}}\right)\right]^3. 
\end{eqnarray*}
The RHS above converges to zero in probability, because by Lemma \ref{lm:covarianceH1} $\overline{\bm R}({\cZ_N})$ converges in probability.
\end{proof}

The lemma combined with \eqref{raic1} shows that, under the bootstrap alternative distribution, the vector $U_N|\cZ_N$ converges in distribution to $N_{{K \choose 2}+K-1}(0, \mathrm I)$, and by Lemma \ref{lm:covarianceH1} $V_N|\cZ_N$ converges in distribution to $N_{{K \choose 2}+K-1}(0, \bm R)$, where $\bm R$ is as defined in \eqref{eq:matrix_R}. 
Hence, for every vector $t \in \mathbb{R}^{{K \choose 2}+K-1}$, 
\begin{align}\label{eq:mgfVN}
\mathbb{E}\left(e^{it^\top V_N}\Big| \cZ_N\right) \xrightarrow{P} \mathbb{E}\left(e^{it^\top W}\right),
\end{align} 
where $W \sim N_{{K \choose 2}+K-1}(0, \bm R)$. Next, define 
$$B_N := \frac{1}{\sqrt N}\left(\bm 0^\top,~\mathbb{E}_{H_1}(\eta_1\big|\cZ_N)-N_1,~\cdots,~\mathbb{E}_{H_1}(\eta_{K-1}\big|\cZ_N) - N_{K-1}\right)^\top,$$
where the $\bm 0$ here denotes a vector of all zeros of length ${K \choose 2}$.  Now, by the usual central limit theorem, under the bootstrap alternative distribution, as $N \rightarrow \infty$ , $B_N \xrightarrow{D} N_{{K \choose 2}}(0, \bm \Psi)$, where 
$$\bm \Psi:=
\left[
\begin{array}{cc}
\bm 0 & \bm 0\\
\bm 0 & \bm M 
\end{array}
\right],
$$
with the elements of $\bm M$ will be denoted by $m(s, t)$, for $1\leq s, t\leq K-1$, and 
\begin{align}\label{eq:Rmain1}
m(s, t) :=   \left\{
 \begin{array}{ll}
\Var_{Z \sim \phi}\left[ \frac{p_s f_s(Z)}{\phi(Z)} \right] & \textrm{if}~s=t, \\ \\
\Cov_{Z \sim \phi}\left[ \frac{p_s f_s(Z)}{\phi(Z)}, \frac{p_t f_t(Z)}{\phi(Z)} \right]  & \textrm{if}~s\neq t.  
\end{array} 
 \right.
\end{align}

Now, recalling the definitions of the matrix $\bm Q$ (from \eqref{eq:Q}) and the matrix $\bm R$ (from \eqref{eq:matrix_R}), it is easy to see that $\bm Q=\bm R+ \bm \Psi$.
Hence, by \eqref{eq:mgfVN} and Lemma \ref{difflem} (putting $A_N = V_N$, $\bm C_N = \cZ_N$ and $f_N(\bm C_N) = B_N$), gives
\begin{align*}
\frac{1}{\sqrt N}\left(\vec{\bm B}_N^\top-\mathbb{E}_{H_1}(\vec{\bm B}_N^\top|\cZ_N),~ \Delta_N^\top \right)^\top & =V_N + B_N \nonumber \\
&\xrightarrow{D} N_{{K \choose 2}}(\bm 0, \bm R+ \bm \Psi)\stackrel{D}=N(\bm 0, \bm Q). 
\end{align*}
where $\Delta_N:=(\eta_1-N_1,~\cdots,~\eta_{K-1}-N_{K-1})^\top$. Therefore, by Lemma \ref{condrel} and \eqref{eq:matrix_gamma}, the distribution of $\frac{1}{\sqrt N}(\vec{\bm B}_N^\top-\mathbb{E}_{H_1}(\vec{\bm B}_N^\top|\cZ_N))$ conditional on $\Delta_N =\bm 0$, converges to $N_{{K \choose 2}}(\bm 0,  \bm Q_{11} - \bm Q_{12} \bm Q_{22}^{-1} \bm Q_{12}^\top)$, as required.

\section{Proofs of Technical Lemmas}
\label{sec:technical_lemmas}

Here, we collect the proofs of the different technical lemmas, required in the proofs above. In Section \ref{sec:invertcov}, we show the invertibility of the matrix $\mathrm{Cov}_{H_0}(\vec{\bm A}_N)$. The proof of Lemma \ref{condrel} is given in Section \ref{psecalt1}. Other technical lemma used in the proof of Theorem \ref{secalt} are proved in Section \ref{sec:technical_lemma_CLT}. 

\subsection{Invertibility of the Count Matrix Under the Null}
\label{sec:invertcov}
 
In order for the MCMM statistic to be well-defined, we need to make sure the matrix $\mathrm{Cov}_{H_0}(\vec{\bm A}_N)$ (recall Proposition \ref{ppn:mean_var}) is invertible. This is proved in the following lemma: 

\begin{lem}\label{invert1} The matrix $\mathrm{Cov}_{H_0}(\vec{\bm A}_N)$ is invertible.
\end{lem}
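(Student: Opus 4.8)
The plan is to use the fact that $\mathrm{Cov}_{H_0}(\vec{\bm A}_N)$, being a covariance matrix, is automatically positive semi-definite, so it is invertible if and only if it is positive definite, which in turn is equivalent to the statement that no non-trivial linear combination $\bm c^\top \vec{\bm A}_N = \sum_{1\le s<t\le K} c_{st}\, a_{st}$ is almost surely constant under $H_0$. I would therefore assume that $\sum_{1\le s<t\le K} c_{st}\, a_{st}$ is a.s.\ constant and aim to deduce $\bm c = \bm 0$. The key is to exploit the permutation representation from the proof of Proposition \ref{ppn:H0_distribution}: conditioning on the pooled sample fixes the matching graph $\sg(\sx)$, so we may take its $I=N/2$ edges to be fixed and regard a realization of the null as a uniformly random assignment of labels in $[K]$ to the $N$ endpoints, with exactly $N_s$ of them labelled $s$. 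Under this description each edge carries a pair of labels, $a_{st}$ counts the edges whose label-pair is $\{s,t\}$ for $s\ne t$, and two labelings differing by a relabeling of endpoints both have positive probability, so the (constant) value of $\bm c^\top \vec{\bm A}_N$ must agree across them.

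The main step is a \emph{swap move}. Fix three distinct labels $s,u,v\in[K]$ (this needs $K\ge 3$), and suppose the sample sizes are large enough that the configurations below occur with positive probability. Take a labeling in which some edge $e_1$ has both endpoints labelled $s$ (a pure edge) and some other edge $e_2$ has its endpoints labelled $u$ and $v$; now swap the label of one endpoint of $e_1$ with the label of the $v$-endpoint of $e_2$. This keeps all the label-counts fixed, turns $e_1$ into a cross edge of type $\{s,v\}$ and $e_2$ into a cross edge of type $\{s,u\}$, and leaves every other edge untouched. Since the pure count $a_{ss}$ does not appear in $\vec{\bm A}_N$, equating the value of $\bm c^\top \vec{\bm A}_N$ before and after the swap yields
\begin{equation*}
c_{su} + c_{sv} = c_{uv}.
\end{equation*}
Running the same move with the roles of $s$ and $u$ interchanged (a pure $\{u,u\}$ edge together with an $\{s,v\}$ edge) gives $c_{su}+c_{uv}=c_{sv}$. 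Adding the two identities cancels $c_{sv}$ and $c_{uv}$ and forces $2c_{su}=0$, i.e.\ $c_{su}=0$. As $s\ne u$ were arbitrary, every coordinate of $\bm c$ vanishes, proving positive definiteness when $K\ge 3$.

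For completeness I would dispatch the case $K=2$ separately: here $\vec{\bm A}_N$ is the scalar $a_{12}$ and $\mathrm{Cov}_{H_0}(\vec{\bm A}_N)=\mathrm{Var}_{H_0}(a_{12})$, which is strictly positive by the explicit formula in Proposition \ref{ppn:mean_var} whenever $a_{12}$ is genuinely random (i.e.\ $N_1,N_2\ge 1$ and $N\ge 4$). I expect the main obstacle to be the bookkeeping around feasibility: one must check that, for the sample sizes in force --- in particular under the regime \eqref{eq:Nlimit}, where every $N_s\to\infty$, so each $N_s\ge 2$ eventually and a spare third label is always available --- both the ``before'' and ``after'' labelings in each swap genuinely have positive probability, so that the constancy of $\bm c^\top \vec{\bm A}_N$ transfers as an exact identity. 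A secondary point to verify carefully is that no degenerate coincidences among $s,u,v$, nor any interaction with the constraint $2a_{ss}+\sum_{t\ne s}a_{st}=N_s$, are being overlooked; these are handled by always choosing the three labels distinct and only ever moving labels between two fixed edges.
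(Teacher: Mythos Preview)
Your argument is correct and takes a genuinely different route from the paper's. Both proofs start from the same reduction: if the covariance were singular, some non-zero linear functional $\eta^\top\vec{\bm A}_N$ would be almost surely constant under the permutation null. From there the paper proceeds by exhibiting explicit points in the support: the all-pure configuration $\vec{\bm A}_N=\bm 0$ (forcing the constant to be zero), and for each pair $r<s$ a configuration $\vec{\bm a}_{rs}$ whose only non-zero cross-count is $a_{rs}$. These vectors form the standard basis of $\mathbb{R}^{\binom{K}{2}}$, so $\eta^\top\vec{\bm a}_{rs}=0$ for all $r<s$ kills $\eta$ immediately. Your approach instead never writes down a basis; it compares two equally likely labelings that differ by a local swap on two edges, reads off the relation $c_{su}+c_{sv}=c_{uv}$, symmetrizes in $s$ and $u$, and cancels. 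The paper's method is a one-line linear-algebra conclusion once the configurations are checked to lie in $\sB$; yours avoids constructing global configurations and only needs two edges at a time, at the price of splitting off $K=2$ and requiring a third label $v$. Both proofs carry the same kind of mild feasibility caveat on the $N_s$ (the paper explicitly takes all $N_s$ even ``for simplicity''; you need $N_s,N_u\ge 2$ and some $N_v\ge 1$), which is harmless in the regime \eqref{eq:Nlimit}.
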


\begin{proof} For simplicity, we assume that the sample sizes $N_s$ are even, for all $1 \leq s \leq K$. For $1 \leq r < s \leq K$, define a $K\times K$ matrices $\bm a_{rs}= ((a_{rs}(u, v)))_{1\leq u, v \leq K}$ as follows:
$$ 
\begin{array}{lc}
a_{rs}(r, s)=a_{rs}(s, r) = \min \{N_r,N_s\},  &  \\
a_{rs}(u, v)= 0, & \text{for all } \{u, v \}\neq \{r,s\} \text{ and } u \neq v, \\ 
a_{rs}(r, r) = \frac{1}{2}(N_r - \min\{N_r,N_s\}),  &     \\
a_{rs}(s, s) = \frac{1}{2}(N_s - \min\{N_r,N_s\} ),  &   \\
a_{rs}(u, u)  = \frac{1}{2} N_u, & \text{ for all } u \neq \{r,s\}.  
\end{array}
$$ 
Clearly, $\bm a_{rs} \in \sB$ (recall Proposition \ref{ppn:H0_distribution}), which implies, by \eqref{nulld}, $\mathbb{P}_{H_0}(\bm A_N=\bm a_{rs}) > 0$, for all $1\leq r < s \leq K$. Now, as in \eqref{eq:SKN}, denote by $\vec{\bm a}_{rs}$ the vector of length ${K \choose 2}$ obtained by concatenating the rows of $\bm a_{rs}$ in the upper triangular part. The argument above shows that $\mathbb{P}_{H_0}(\vec{\bm A}_N= \vec{\bm a}_{rs}) > 0$, for all $1\leq r<s\leq K$. Moreover, also note that $\mathbb{P}_{H_0}(\vec{\bm A}_N=\bm 0) > 0$, where $\bm 0$ denotes the vector of length ${\binom{K}{2}}$ with all entries $0$. Also, note that the vectors $\{\vec{\bm a}_{rs} \}_{1\leq r<s\leq K}$, each of which has only have one non-zero element corresponding to the element $a_{rs}(r,s)$, form a basis of $\mathbb{R}^{\binom{K}{2}}$.

Now, suppose that $\mathrm{Cov}_{H_0}(\vec{\bm A}_N)$ is singular, whence there exists a non-zero vector $\eta \in \mathbb{R}^{\binom{K}{2}}$ such that $\mathrm{Cov}_{H_0}(\vec{\bm A}_N) \eta = 0$. This implies that $\mathrm{Var}_{H_0}(\eta^\top \vec{\bm A}_N) = 0$, and hence, 
\begin{align}\label{eq:thetaAN}
\mathbb{P}_{H_0}\left(\eta^\top \vec{\bm A}_N = \mathbb{E}_{H_0} (\eta^\top \vec{\bm A}_N)\right) = 1.
\end{align}
The fact $\mathbb{P}_{H_0}(\vec{\bm A}_N=0) > 0$, now implies that $\mathbb{E}_{H_0} (\eta^\top \vec{\bm A}_N) = 0$ (otherwise, assuming  $\mathbb{E}_{H_0} (\eta^\top \vec{\bm A}_N) \ne 0$, leads to, by \eqref{eq:thetaAN}, $\mathbb{P}_{H_0}(\vec{\bm A}_N=0) \leq \mathbb{P}_{H_0}(\eta^\top \vec{\bm A}_N = 0) = 0$, which is a contradiction). Again, since $\mathbb{P}_{H_0}(\vec{\bm A}_N= \vec{\bm a}_{rs}) > 0$, it follows that $\eta^\top \vec{\bm a}_{rs} = 0$, for all $1\leq r<s\leq K$.  This implies, since the vectors $\{\vec{\bm a}_{rs} \}_{1\leq r<s\leq K}$ form a basis of $\mathbb{R}^{\binom{K}{2}}$, $\eta = \bm 0$, which is a contradiction.
\end{proof}

\subsection{Proof of Lemma \ref{condrel}}\label{psecalt1}

For notational convenience, we will prove the result only for the case $K=2$. The proof for general $K$ follows similarly.  We begin with a few notations: Let $\Pi$ denote the set of all permutations $\sigma$ of $[N]$, such that $\sigma(a) < \sigma(b)$, for all $1 \leq a < b \leq N_1$, and $\sigma(a) < \sigma(b)$, for all $N_1 + 1 \leq a < b \leq N_1+N_2=N$. Moreover, for a vector $x \in \R^d$ and a $\R^d$-valued random variable $X$, we denote by $\{X \leq x\}$ the event $\{X \in \{y \in \mathbb{R}^d: y \leq x\}\}$.\footnote{For any two vectors $u=(u_1, u_2, \ldots, u_d) \in \R^d$ and $v=(v_1, v_2, \ldots, v_d) \in \R^d$, we write $u \leq v$, if $u_a \leq  v_a$, for all $1 \leq a \leq d$.}

Now, considering the sets $\{Z_i: L_i=1\}$ and  $\{Z_i: L_i=2\}$ as vectors with the indices arranged in increasing order, it follows that 
$$(\{Z_i: L_i=1\}, \{Z_i: L_i=2\})=(Z_{\pi(1)}, \cdots, Z_{\pi(N)}),$$ 
where $\pi$ is a random permutation of $[N]$, such that $\pi(1)< \pi(2) \cdots<\pi(\eta_1)$ and $\pi(\eta_1+1)< \pi(\eta_1+2) \cdots<\pi(\eta_1+\eta_2)$, where  $L_{\pi(a)} = 1$, for all $1 \leq a \leq \eta_1$, and $L_{\pi(a)} = 2$, for all $\eta_1+1 \leq a \leq \eta_1+ \eta_2$. Then, for every $z_1,\cdots,z_N \in \mathbb{R}^d$, 
\begin{align*}
&\mathbb{P}\left(Z_{\pi(1)}  \leq z_1, \cdots, Z_{\pi(N)}  \leq z_N  \Bigg| \eta_1 = N_1\right)\\
&= \frac{1}{\mathbb{P}(\eta_1 = N_1)}\sum_{\sigma \in \Pi} \mathbb{P}\left(Z_{\sigma(1)} \leq z_1, \cdots, Z_{\sigma(N)} \leq z_N,~\pi = \sigma,~\eta_1 = N_1\right)\\
&= \frac{1}{\mathbb{P}(\eta_1 = N_1)}\sum_{\sigma \in \Pi}\mathbb{P}\left(\bigcap_{a=1}^{N_1} \left\{Z_{\sigma(a)} \leq z_a ,~L_{\sigma(a)} = 1\right\} \bigcap \bigcap_{a=N_1+1}^{N_1+N_2} \left\{Z_{\sigma(a)}  \leq z_a ,~L_{\sigma(a)} = 2 \right\}\right)\\
&=\frac{|\Pi|}{\mathbb{P}(\eta_1 = N_1)}  \prod_{a=1}^{N_1}\mathbb{P}\left(Z_{a} \leq z_a ,~L_{a} = 1\right) \prod_{a=N_1+1}^{N_1+N_2}\mathbb{P}\left(Z_{a} \leq z_a,~L_{a} = 2\right)\\
&= \frac{\binom{N}{N_1}}{\mathbb{P}(\eta_1 = N_1)}  \prod_{a=1}^{N_1} \frac{N_1}{N} F_1(z_a) \prod_{a=N_1+1}^{N_1+N_2} \frac{N_2}{N} F_2(z_a)\\
&=  \prod_{a=1}^{N_1} F_1(z_a) \prod_{a=N_1+1}^{N_1+N_2} F_2(z_a) \tag*{(using $\mathbb{P}(\eta_1 = N_1)=\binom{N}{N_1} \left(\frac{N_1}{N}\right)^{N_1} \left(\frac{N_2}{N}\right)^{N_2}$)} \\
&=\P(X_1^{(1)} \leq z_1, \cdots,  X_{N_1}^{(1)} \leq z_{N_1}, X_1^{(2)} \leq z_{N_1+1}, \cdots,  X_{N_2}^{(2)} \leq z_{N_1+N_2}),
\end{align*}
which completes the proof of the lemma. \qed

\subsection{Missing Details in the Proof of Theorem \ref{altclt}}
\label{sec:technical_lemma_CLT}

Here, we provide the proof of a lemma used in the proof of Theorem \ref{altclt}.


\begin{lem}\label{difflem} 
Let $\{X_N\}_{N\geq 1}$ be a sequence of $\mathbb{R}^p$-valued random vectors, for some $p\geq 1$, and $\bm C_N$ be a sequence random variable, such that $\mathbb{E}(e^{it^\top X_N}|\bm C_N) \pto a$, for some real number $a$ and some vector $t \in \mathbb{R}^p$. Moreover, suppose that $f_N$ is a sequence of deterministic functions with codomain $\mathbb{R}^p$, such that $\mathbb{E}(e^{it^\top f_N(\bm C_N)})\rightarrow b$, for some real number $b$. Then, $$\lim_{N \rightarrow \infty}\mathbb{E}\left(e^{it^\top (A_N + f_N(\bm C_N))}\right) = ab. $$
\end{lem}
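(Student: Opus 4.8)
The plan is to condition on $\bm C_N$ and exploit that $f_N(\bm C_N)$ is measurable with respect to $\sigma(\bm C_N)$. First I would apply the ``take out what is known'' property of conditional expectation to factor
$$\mathbb{E}\left(e^{it^\top (A_N + f_N(\bm C_N))} \,\big|\, \bm C_N\right) = e^{it^\top f_N(\bm C_N)}\, \mathbb{E}\left(e^{it^\top A_N} \,\big|\, \bm C_N\right),$$
and then take total expectations via the tower property to obtain
$$\mathbb{E}\left(e^{it^\top (A_N + f_N(\bm C_N))}\right) = \mathbb{E}\left[ e^{it^\top f_N(\bm C_N)}\, \mathbb{E}\left(e^{it^\top A_N} \,\big|\, \bm C_N\right)\right].$$

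Next I would isolate the fluctuation of the inner conditional characteristic function about its limit $a$. Setting $R_N := \mathbb{E}(e^{it^\top A_N}\mid \bm C_N) - a$, the first hypothesis gives $R_N \xrightarrow{P} 0$, and the identity above rearranges to
$$\mathbb{E}\left(e^{it^\top (A_N + f_N(\bm C_N))}\right) = a\,\mathbb{E}\left(e^{it^\top f_N(\bm C_N)}\right) + \mathbb{E}\left[ e^{it^\top f_N(\bm C_N)} R_N \right].$$
The first summand converges to $ab$ directly from the second hypothesis, so it only remains to show that the remainder term vanishes.

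For the remainder I would use that $|e^{it^\top f_N(\bm C_N)}| = 1$, so $|e^{it^\top f_N(\bm C_N)} R_N| \leq |R_N|$, together with the uniform bound $|R_N| \leq |\mathbb{E}(e^{it^\top A_N}\mid \bm C_N)| + |a| \leq 2$. A uniformly bounded sequence that converges to $0$ in probability also converges to $0$ in $L^1$ (bounded convergence, equivalently uniform integrability), hence $\mathbb{E}|R_N| \to 0$ and therefore $|\mathbb{E}[ e^{it^\top f_N(\bm C_N)} R_N ]| \leq \mathbb{E}|R_N| \to 0$. Combining the two terms yields the claimed limit $ab$. The argument is otherwise elementary; the only point requiring care is this interchange of limit and expectation in the remainder, which is handled by dominating the integrand in modulus by the uniformly bounded $|R_N|$, thereby upgrading convergence in probability to $L^1$ convergence.
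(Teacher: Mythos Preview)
Your proof is correct and follows essentially the same route as the paper's: both condition on $\bm C_N$, factor out the $\sigma(\bm C_N)$-measurable term $e^{it^\top f_N(\bm C_N)}$, add and subtract the limit $a$, and then handle the remainder by bounding $|e^{it^\top f_N(\bm C_N)}R_N|\le |R_N|$ and invoking bounded (dominated) convergence to pass from convergence in probability to $L^1$ convergence. The only cosmetic difference is that the paper phrases the argument as an estimate of $\big|\mathbb{E}(e^{it^\top(A_N+f_N(\bm C_N))})-ab\big|$ directly rather than first writing the decomposition into two summands, but the underlying steps are identical.
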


\begin{proof} Note that,
\begin{align*}
\Big|\mathbb{E}\left(e^{it^\top (A_N + f_N(\bm C_N))}\right) - ab\Big| &= \Big|\mathbb{E}\left[e^{it^\top f_N(\bm C_N)}\mathbb{E}\left(e^{it^\top A_N}\big|\bm C_N\right)\right] - ab\Big|\\
&\leq \Big|\mathbb{E}\left[e^{it^\top f_N(\bm C_N)}\left(\mathbb{E}\left(e^{it^\top A_N}\big|\bm C_N\right)-a\right)\right]\Big| + |a|\Big|\mathbb{E}\left[e^{it^\top f_N(\bm C_N)} - b\right]\Big|\\
&\leq \mathbb{E}\Big|\mathbb{E}\left(e^{it^\top A_N}\big|\bm C_N\right)-a\Big| + |a|\Big|\mathbb{E}\left(e^{it^\top f_N(\bm C_N)} - b\right)\Big|. 
\end{align*}
The first term in the last expression goes to $0$ by hypothesis and the dominated convergence theorem, while the last term goes to $0$ by hypothesis, completing the proof.
\end{proof}

\section{Additional Simulations}\label{sec:lognormalapp}

\noindent In this section, we present simulations comparing the empirical powers of the MCM and the MMCM tests for location, spherical scale, and equi-correlation scale changes in the log normal family. As before, in all the simulations, the power is calculated over 100 iterations,  the tests are implemented using the permutation distribution,  and the nominal level is chosen to be $0.05$. 

\begin{table}[h]
	\centering
	\begin{minipage}[c]{0.59\textwidth}
		\centering
		\small{
			\begin{table}[H]
				\begin{tabular}{c|c||ccccccc}
					\hline
					$\Delta\downarrow$ & Dimension &  5 & 10 & 50 & 100& 200 & 300 & 500 \\  
					\hline
					\hline
					\multirow{2}{*}{.06}&MCM & \bf .13 & .17 & \bf .49 & .65 & .81 & .88 & .90 \\
					&MMCM & .10  & \bf .18 & .42 & \bf .68 & \bf .93 & \bf .98 & \bf  .99\\
					\hline
					\multirow{2}{*}{.08}&MCM & \bf .15 & .19 & .52 & .68 & .87 & .95 & 1.0 \\
					&MMCM & .12  & \bf .20 & \bf .70 & \bf .92 & \bf .98 & \bf 1.0 & \bf 1.0 \\
					\hline
					\multirow{2}{*}{.10}&MCM & \bf .19 & .26 & .79 & .85 & .96 & 1.0 & 1.0 \\
					&MMCM & .18  & \bf .31 & \bf .94 & \bf 1.0 & \bf 1.0 & \bf 1.0 & \bf 1.0\\
					\hline
					\multirow{2}{*}{.12}&MCM & \bf .50 & .69 & 1.0 & 1.0 & 1.0 & 1.0 & 1.0 \\
					&MMCM & .49  & \bf .84 & \bf 1.0 & \bf 1.0 &  \bf 1.0 & \bf 1.0 & \bf 1.0 \\
					\hline
				\end{tabular}
			\end{table}
			\vspace{-0.1in}
			(a)
		}
	\end{minipage}
	\begin{minipage}[c]{0.39\textwidth}
		\centering
		\small{
			\begin{table}[H]
				\begin{tabular}{c|c||cccc}
					\hline
					$\Delta\downarrow$ & Groups & 4  & 6 & 8 & 10  \\  
					\hline
					\hline
					\multirow{2}{*}{.04}&MCM & \bf .13 & \bf .22 & .49 & .71  \\
					&MMCM & .10  & .16 & \bf .63 & \bf .99 \\
					\hline
					\multirow{2}{*}{.06}&MCM & \bf .21 & .35 & .85 & 1.0  \\
					&MMCM & .14  & \bf .41 & \bf .98 & \bf 1.0  \\
					\hline
					\multirow{2}{*}{.08}&MCM & \bf .37 & .92 & 1.0 & 1.0  \\
					&MMCM & .31  & \bf 1.0 & \bf 1.0 & \bf 1.0 \\
					\hline
					\multirow{2}{*}{.10}&MCM & .52 & 1.0 & 1.0 & 1.0 \\
					&MMCM & \bf .56  & \bf 1.0 & \bf 1.0 & \bf 1.0  \\
					\hline
				\end{tabular}
			\end{table}
		}
		\vspace{-0.1in}
		(b)
	\end{minipage}
	\caption{\small{Power of the MCM and the MMCM tests in the lognormal location family with (a) the number of classes $K=6$ fixed, and (b) the dimension $d=150$ fixed.}}
	\label{table:lognormallocation}
	\vspace{-0.1in}
\end{table}

\begin{itemize}

\item {\it Lognormal Location}: Here, we consider samples from the following $K$ log-normal distributions: $\exp(N_d((s-1)\Delta \cdot \bm 1, \mathrm I))$, for $1 \leq s \leq K$. Table \ref{table:lognormallocation}(a) shows the fixed class scenario, where we take $K=6$ groups and vary the dimension $d$ from $5$ to $500$, and $\Delta$ from $0.06$ to $0.12$. Table \ref{table:lognormallocation}(b) shows the fixed dimension scenario, where the dimension $d = 150$ is fixed, the number of groups $K$ varies along $4,6,8,10$, and $\Delta$ varies from  $0.04$ to $0.10$. In both cases, the sample sizes  were taken in equal increments of $50$, starting from $50$.

\begin{table}[h]
	\centering
	\begin{minipage}[c]{0.59\textwidth}
		\centering
		\small{
			\begin{table}[H]
				\begin{tabular}{c|c||ccccccc}
					\hline
					$\Delta\downarrow$ & Dimension &  5 & 10 & 50 & 100& 200 & 300 & 500 \\  
					\hline
					\hline
					\multirow{2}{*}{.15}&MCM & \bf .15 & \bf .21 & .44 & .70 & .94 & .99 & 1.0 \\
					&MMCM & .13  & .17 & \bf .66 & \bf .99 & \bf 1.0 & \bf 1.0 & \bf 1.0\\
					\hline
					\multirow{2}{*}{.20}&MCM & .16 & .22 & .70 & .96 & 1.0 & 1.0 & 1.0 \\
					&MMCM & \bf .18  & \bf .35 & \bf .96 & \bf 1.0 & \bf 1.0 & \bf 1.0 & \bf  1.0\\
					\hline
					\multirow{2}{*}{.25}&MCM & \bf .17 & .23 & .87 & .97 & 1.0 & 1.0 & 1.0 \\
					&MMCM & .10  & \bf .37 & \bf .98 & \bf 1.0 & \bf 1.0 & \bf 1.0 & \bf 1.0 \\
					\hline
					\multirow{2}{*}{.30}&MCM & .18 & .27 & .91 & 1.0 & 1.0 & 1.0 & 1.0 \\
					&MMCM & \bf .21  & \bf .46 & \bf 1.0 & \bf 1.0 & \bf 1.0 & \bf 1.0 & \bf 1.0\\
					\hline
					\multirow{2}{*}{.35}&MCM & .13 & .42 & 1.0 & 1.0 & 1.0 & 1.0 & 1.0 \\
					&MMCM & \bf .20  & \bf .51 & \bf 1.0 & \bf 1.0 & \bf  1.0 & \bf 1.0 & \bf 1.0 \\
					\hline
					\multirow{2}{*}{.40}&MCM & \bf .34 & .73 & 1.0 & 1.0 & 1.0 & 1.0 & 1.0 \\
					&MMCM & .25  & \bf .96 & \bf 1.0 & \bf 1.0 & \bf  1.0 & \bf 1.0 & \bf 1.0 \\
					\hline
				\end{tabular}
			\end{table}
			\vspace{-0.1in}
			(a)
		}
	\end{minipage}
	\begin{minipage}[c]{0.39\textwidth}
		\centering
		\small{
			\begin{table}[H]
				\begin{tabular}{c|c||cccc}
					\hline
					$\Delta\downarrow$ & Groups & 4  & 6 & 8 & 10  \\  
					\hline
					\hline
					\multirow{2}{*}{.15}&MCM & .39 & .91 & .99 & 1.0  \\
					&MMCM & \bf .48  & \bf 1.0 & \bf 1.0 & \bf 1.0 \\
					\hline
					\multirow{2}{*}{.20}&MCM & .75 & .99 & 1.0 & 1.0  \\
					&MMCM & \bf .86  & \bf 1.0 & \bf 1.0 & \bf 1.0 \\
					\hline
					\multirow{2}{*}{.25}&MCM & .93 & 1.0 & 1.0 & 1.0  \\
					&MMCM & \bf .99  & \bf 1.0 & \bf 1.0 & \bf 1.0  \\
					\hline
					\multirow{2}{*}{.30}&MCM & .98 & 1.0 & 1.0 & 1.0  \\
					&MMCM & \bf 1.0  & \bf 1.0 & \bf 1.0 & \bf 1.0 \\
					\hline
					\multirow{2}{*}{.35}&MCM & .99 & 1.0 & 1.0 & 1.0 \\
					&MMCM & \bf 1.0  & \bf 1.0 & \bf 1.0 & \bf 1.0  \\
					\hline
					\multirow{2}{*}{.40}&MCM & 1.0 & 1.0 & 1.0 & 1.0 \\
					&MMCM & \bf 1.0  & \bf 1.0 & \bf 1.0 & \bf 1.0  \\
					\hline
				\end{tabular}
			\end{table}
		}
		\vspace{-0.1in}
		(b)
	\end{minipage}
	\caption{\small{Power of the MCM and the MMCM tests in the lognormal spherical scale family with (a) the number of classes $K=6$ fixed, and (b) the dimension $d=150$ fixed.}}
	\vspace{-0.1in}
	\label{table:lognormalscale}
\end{table}

\begin{table}[h]
	\centering
	\begin{minipage}[c]{0.59\textwidth}
		\centering
		\small{
			\begin{table}[H]
				\begin{tabular}{c|c||ccccccc}
					\hline
					$\Delta\downarrow$ & Dimension &  5 & 10 & 50 & 100& 200 & 300 & 500 \\  
					\hline
					\hline
					\multirow{2}{*}{.15}&MCM & \bf .08 & .09 & \bf .14 & \bf .16 & \bf .23 & \bf .31 & \bf .32 \\
					&MMCM & .05  & \bf .10 & .11 & .14 & .18 & .23 & .27\\
					\hline
					\multirow{2}{*}{.20}&MCM & \bf .09 & .11 & \bf .16 & .13 & .24 & .25 & .34 \\
					&MMCM & .08  & \bf .11 & .13 & \bf .22 & \bf .25 & \bf .27 &  \bf .39\\
					\hline
					\multirow{2}{*}{.25}&MCM & \bf .07 & \bf .11 & \bf .17 & \bf .26 & .30 & .36 & .43 \\
					&MMCM & .02  & .08 & .12 & .25 & \bf .33 & \bf .45 & \bf .53 \\
					\hline
					\multirow{2}{*}{.30}&MCM & .06 & .14 & .15 & .26 & .38 & .45 & .52 \\
					&MMCM & \bf .10  & \bf .15 & \bf .23 & \bf .36 & \bf .54 & \bf .68 & \bf .71\\
					\hline
					\multirow{2}{*}{.35}&MCM & .07 & .15 & .28 & .32 & .45 & .57 & .72 \\
					&MMCM & \bf .08  & \bf .16 & \bf .28 & \bf .43 & \bf  .69 & \bf .77 & \bf .89 \\
					\hline
					\multirow{2}{*}{.40}&MCM & \bf .20 & .22 & .37 & .45 & .72 & .75 & .84 \\
					&MMCM & .16  & \bf .24 & \bf .48 & \bf .63 & \bf  .95 & \bf .99 & \bf 1.0 \\
					\hline
				\end{tabular}
			\end{table}
			\vspace{-0.1in}
			(a)
		}
	\end{minipage}
	\begin{minipage}[c]{0.39\textwidth}
		\centering
		\small{
			\begin{table}[H]
				\begin{tabular}{c|c||cccc}
					\hline
					$\Delta\downarrow$ & Groups & 4  & 6 & 8 & 10  \\  
					\hline
					\hline
					\multirow{2}{*}{.15}&MCM & \bf .20 & .14 & .12 & .11 \\
					&MMCM & .14  & \bf .15 & \bf .12 & \bf .11 \\
					\hline
					\multirow{2}{*}{.20}&MCM & \bf .28 & .18 & .13 & .14  \\
					&MMCM & .22  & \bf .19 & \bf .18 & \bf .16 \\
					\hline
					\multirow{2}{*}{.25}&MCM & .31 & .31 & \bf .27 & \bf .21  \\
					&MMCM & \bf .37  &\bf  .32 & .19 & .16  \\
					\hline
					\multirow{2}{*}{.30}&MCM & .41 & .35 & .29 & .23  \\
					&MMCM & \bf .42  & \bf .40 & \bf .31 & \bf .30 \\
					\hline
					\multirow{2}{*}{.35}&MCM & .47 & .39 & .38 & .33 \\
					&MMCM & \bf .55  & \bf .51 & \bf .48 & \bf .45  \\
					\hline
					\multirow{2}{*}{.40}&MCM & .63 & .52 & .49 & .47 \\
					&MMCM & \bf .79 & \bf .77 & \bf .68 & \bf .63  \\
					\hline
				\end{tabular}
			\end{table}
		}
		\vspace{-0.1in}
		(b)
	\end{minipage}
	\caption{\small{Power of the MCM and the MMCM tests in the lognormal equi-correlated scale family with (a) the number of classes $K=6$ fixed, and (b) the dimension $d=150$ fixed.}}
	\label{table:lognormalequicorrelation}
	\vspace{-0.2in}
\end{table}

\item {\it Spherical Lognormal Scale}: Here, we consider samples from the following $K$ log-normal distributions: $\exp(N_d(\bm 0, (1+(s-1)\Delta) \mathrm I))$, for $1 \leq s \leq K$. Table \ref{table:lognormalscale}(a) shows the  fixed class scenario, with $K=6$ and dimension $d$ varying from $5$ to $500$, and $\Delta$ varying from $0.15$ to $0.4$.  As before, in this case, the sample sizes  were taken in equal increments of $50$, starting from $50$. 
Table \ref{table:lognormalscale}(b) shows the  fixed dimension scenario, where the $d = 150$ is fixed, and $K$ varies along $4,6,8,10$, and $\Delta$ varies from $0.15$ to $0.4$, as well. Here, the sample sizes are taken in equal increments from $50$ to $200$ when $K=4$, from $50$ to $300$ when $K=6$, from $50$ to $260$ when $K=8$, and from $50$ to $230$ when $K=10$.

\item {\it Equi-correlated Lognormal Scale}:  Here, we consider samples from the following $K$ log-normal distributions: $\exp(N_d(0, (1-\rho_s)\mathrm I + \rho_s \bm 1 \bm 1^\top))$, where $\rho_s := (s-1)\frac{\Delta}{K-1}$, for $1 \leq s \leq K$.   Table \ref{table:lognormalequicorrelation}(a) shows the  fixed class scenario, with $K=6$ and dimension $d$ varying from $5$ to $500$, and $\Delta$ varying from $0.15$ to $0.4$. The sample sizes are taken to be $50, 100, 150, 200, 250$ and $300$.
Table \ref{table:lognormalequicorrelation}(b) shows the  fixed dimension scenario, where the $d = 150$ is fixed, and $K$ varies along $4,6,8,10$, and $\Delta$ varies from $0.15$ to $0.4$, as before. The sample sizes are taken in equal increments from $50$ to $200$ when $K=4$, from $50$ to $300$ when $K=6$, from $50$ to $260$ when $K=8$, and from $50$ to $230$ when $K=10$.

\end{itemize}

\end{document}